\definecolor{amber}{rgb}{1.0, 0.75, 0.0}
\newtheorem{theorem}{Theorem}
\newtheorem{fact}{Fact}
\newtheorem{lemma}{Lemma}
\newtheorem{definition}{Definition}
\newtheorem{proposition}{Proposition}
\newcommand{\dtr}{\text{d}_{\tr}}
\newcommand{\bE}{\mathbb{E}}
\newcommand{\bR}{\mathbb{R}}
\newcommand{\cU}{\mathcal{U}}
\newcommand{\cV}{\mathcal{V}}
\newcommand{\cW}{\mathcal{W}}
\newcommand{\cC}{\mathcal{C}}
\newcommand{\cS}{\mathcal{S}}
\newcommand{\cO}{\mathcal{O}}
\newcommand{\cK}{\mathcal{K}}
\newcommand{\cE}{\mathcal{E}}
\newcommand{\ba}{\mathbf{a}}
\newcommand{\bb}{\mathbf{b}}
\newcommand{\poly}{\text{poly}}
\renewcommand{\norm}[1]{\left\lVert#1\right\rVert}
\newcommand{\dTV}[1]{\norm{#1}_{\mathrm{TV}}}
\newtcolorbox[auto counter]{mybox}[2][]{
	enhanced,
	breakable,
	colback=blue!5!white,
	colframe=blue!75!black,
	fonttitle=\bfseries,
	title=Box \thetcbcounter: #2,#1
}
\begin{document}
	\title{Spacetime Quantum Circuit Complexity via Measurements}
	
	\author{Zhenyu Du}
	\affiliation{Center for Quantum Information, Institute for Interdisciplinary Information Sciences, Tsinghua University, Beijing 100084, China}
	\author{Zi-Wen Liu}
	\email{zwliu0@tsinghua.edu.cn}
	\affiliation{Yau Mathematical Sciences Center, Tsinghua University, Beijing 100084, China}
	
	\author{Xiongfeng Ma}
	\email{xma@tsinghua.edu.cn}
	\affiliation{Center for Quantum Information, Institute for Interdisciplinary Information Sciences, Tsinghua University, Beijing 100084, China}
	
	\begin{abstract}
		Quantum circuit complexity is a fundamental concept whose importance permeates quantum information, computation, many-body physics and high-energy physics. While extensively studied in closed systems, its characterization and behaviors in the widely important setting where the system is embedded within a larger one---encompassing measurement-assisted state preparation---lack systematic understanding. 
		We introduce the notion of embedded complexity that characterizes the complexity of projected states and measurement operators in this general setting incorporating auxiliary systems and measurements. 
		For random circuits and certain strongly interacting time-independent Hamiltonian dynamics, we show that the embedded complexity is lower-bounded by the circuit volume---the total number of gates acting on both the subsystem and its complement.
		This strengthens the complexity linear growth theorems, enriches the understanding of deep thermalization, and indicates that measurement-assisted methods generically cannot yield significant advantages in state preparation cost, contrary to expectations.
		We further demonstrate a spacetime conversion of certain circuit models that concentrates circuit volume onto a subsystem, and showcase applications for random circuit sampling and shadow tomography.
		Our theory establishes a unified framework for space and time aspects of quantum circuit complexity, yielding profound new insights and applications across quantum information and physics.
	\end{abstract}
	
	\maketitle
	
	Defined as the minimal number of local gates required to generate a state or evolution, quantum circuit complexity holds pivotal importance across various domains ranging from quantum information \cite{aaronson2016complexity, bouland2019computational, Brand_o_2021} to physics \cite{Stanford_2014,susskind2014computational, Brown_2016, Brown_2018, susskind2018black,wen2013topological,Yi_2024_complexity}. In sharp contrast to usual properties such as entanglement \cite{Nahum_2017} which are bounded by system size, the circuit complexity of a quantum circuit can grow with the circuit depth to reach values exponential in system size \cite{Haferkamp_2022_linear}.  This provides a novel lens on the prolonged evolution in a closed system, with deep connections to holography and high-energy physics in the context of the AdS/CFT correspondence~\cite{Stanford_2014,susskind2014computational, Brown_2016, Brown_2018, susskind2018black}. It is also  crucial in quantum many-body physics, underpinning the theory of quantum phases of matter and topological order~\cite{Chen2010LongRangeEntanglement, wen2013topological}.
	
	Beyond closed systems, understanding the properties of a subsystem embedded in a larger system is a widely important problem in many-body physics, crucial for a deep understanding of phenomena including thermalization \cite{Rigol_2008, Kaufman_2016, Abanin_2019_manybody} and quantum chaos \cite{D_Alessio_2016, Ho_2022, Ippoliti_2023}. 
	In many-body quantum systems, a subsystem may exhibit significant entanglement with its extensive complement \cite{Nahum_2018}. Such entanglement behavior is closely relevant to information scrambling \cite{Mi_2021} and quantum error correction~\cite{Choi_2020, Yi_2024_complexity}.
	Moreover, after polynomial time evolution, universal and highly random quantum state ensembles within a subsystem can be encoded in a single state of a large system \cite{Cotler_2023, Choi_2023}, signaling the complex and rich properties of subsystems over extended durations.

	Notably, the surging interest in utilizing measurements to manipulate and understand subsystems, paralleled by experimental progress across various platforms~\cite{Arute_2019, Ni_2023, cai2023protecting, Evered_2023, dasilva2024demonstration}, has driven numerous important developments.
	A fundamental phenomenon known as deep thermalization concerns higher moments of projected ensembles within a subsystem induced by projective measurements~\cite{Ho_2022, Claeys2022dualunitary, Choi_2023, Bhore_2023, Cotler_2023, Ippoliti_2023, chang2024deepthermalizationchargeconservingquantum}, revealing physics beyond conventional thermalization and entanglement with significance in both theory~\cite{mark2024maximum, mok2024optimalconversionclassicalquantum} and practical protocols such as benchmarking \cite{Choi_2023} and shadow tomography \cite{Tran_2023}.
	Another insight that has generated wide interest and applications in quantum computing and physics is that measurements can significantly enhance entanglement, offering shortcuts for generating important quantum systems associated with e.g.~topologically ordered phases and quantum error-correcting codes \cite{Briegel01,verresen2022efficiently, bravyi2022adaptive, Lu_2022, Tantivasadakarn_2023, Lu_2023, Tantivasadakarn_2024, Piroli_2021, bravyi2022adaptive}. 
	
	These broad perspectives together signal the importance of understanding the complexity of quantum operations and states in the measurement-projected setting.
	Here we address this by introducing embedded complexity, a unifying extension of traditional closed-system circuit complexity, to encompass ancillae and measurements which essentially mediate between space and time resources. 
	We establish rigorous connections between the embedded complexity and quantum circuit volume which capture the total gate cost across both the subsystem and the ancillary system, in both local quantum circuits and Hamiltonian evolution settings.
	As we will elaborate, this yields a fundamental generalization of the complexity linear growth phenomenon~\cite{Brand_o_2021, Haferkamp_2022_linear, li2022short, chen2024incompressibility} to spacetime, and advances our understanding of deep thermalization and the limitation of measurement-assisted state preparation.
	We further establish a spacetime conversion for random and Clifford circuits through protocols that use measurements to trade ancillary qubits for circuit depth. 
	We showcase the practical utility of our protocols in two important applications: random circuit sampling and shadow tomography.

	\textit{Key definitions}---The conventional quantum circuit complexity $C$ is defined as the minimal number of local unitary gates (without loss of generality, 2-local untaries from $SU(4)$ acting on any two sites) required to generate the state or implement the operator across all possible circuits \cite{footnote:native_gate}. 
	Incorporating state preparation utilizing ancillas and mid-circuit measurements~\cite{Piroli_2021, verresen2022efficiently, bravyi2022adaptive, Lu_2022, Tantivasadakarn_2023, Lu_2023, Tantivasadakarn_2024}, we define the spacetime version that we dub \emph{embedded complexity} as follows.
	
	\begin{definition} [Embedded complexity] \label{def:C_anc_state}
		The embedded complexity $C_{anc}(\ket{\psi})$ of a pure $n$-qubit state $\ket{\psi}$ is defined as the minimal number of 2-qubit gates required to generate $\ket{\psi}$ within an $n$-qubit subsystem embedded in a $m$-qubit larger system. Single-qubit computational-basis measurements and post-selection are allowed in the middle of the circuit:
		\begin{equation}
			\begin{aligned}
				C_{anc}(\ket{\psi}) \coloneqq \min \{V: \exists &m\ge n, c > 0, \ket{\psi} \otimes \ket{0}^{\otimes (m - n)}  = \\
				& c \Pi_V U_V \Pi_{V-1}U_{V-1} \cdots \Pi_1 U_1 \ket{0}^{\otimes m}\}
			\end{aligned} 
		\end{equation}
		The 2-qubit gates $U_i$ can be arbitrary unitaries in $SU(4)$ and may act on any pair of qubits. The projective operator $\Pi_i$ acts on the same pair of qubits as $U_i$,
		\begin{equation}
			\begin{split}
				&\Pi_i = P_{i,1} \otimes P_{i,2}, \\
				&P_{i,1}, P_{i,2} \in \{I, \ketbra{0}, \ketbra{1}\}.
			\end{split}
		\end{equation}
	\end{definition}
	
	An analogous definition of embedded complexity for Kraus operators is presented in Appendix~\ref{subsec:complexity_Kraus}. In defining the embedded complexity, $\Pi_i$ operators capture mid-circuit measurements and post-selections, and $c$ is the normalization factor.  
	Each unitary $U_i$ may depend on the outcomes of prior projective measurements, allowing for adaptive operations based on earlier measurement results.
	Therefore, embedded complexity characterizes the optimal resources needed for measurement-assisted state preparation protocols. 
	Moreover, it is evident that the embedded complexity lower-bounds the conventional complexity $C$ (which only involve $U_i$'s) \cite{Haferkamp_2022_linear}.
	It is also standard to introduce \emph{approximate embedded complexity} as a robust notion that incorporates error tolerance by a further optimization over all states within a certain distance from the target. In the main text, to highlight the essence of our results, we omit error dependence and use the notation $\widetilde{C}_{anc}$ to loosely denote the approximate embedded complexity given by an arbitrary finite universal gate set; all detailed definitions and results can be found in Appendices~\ref{app:approx_embedded} and \ref{app:approx_embed_Hamiltonian}.
	
	The \emph{circuit volume} $V$ is defined as total number of 2-qubit gates in a specific preparation process, which  quantifies the total spacetime cost. 
	For local circuit models, as shown in Fig.~\ref{fig:complexity_volume}(b), an $m$-qubit local circuit $U$ with depth $d$ is constructed as $U = U^{(d)}U^{(d-1)}\cdots U^{(1)}$, where $U^{(1)} = U_{1,2}^{(1)} \otimes U^{(1)}_{3,4}\otimes \cdots $ and subsequent layers $U^{(2)}, U^{(3)}, \ldots, U^{(d)}$ follow $U^{(1)}$ in a staggered arrangement. Each 2-qubit gate $U^{(i)}_{j,j+1}$ in the $i$-th layer acts on qubits $j$ and $j+1$. 
	Then, the circuit volume is given by $V = \lfloor m / 2 \rfloor d$. For a time-evolution $e^{-iH\tau}$ under a Hamiltonian $H$ with properly normalized local terms, the circuit volume is defined as $V = m\tau$.

	\begin{figure}[bthp!]
		\includegraphics[scale=0.75]{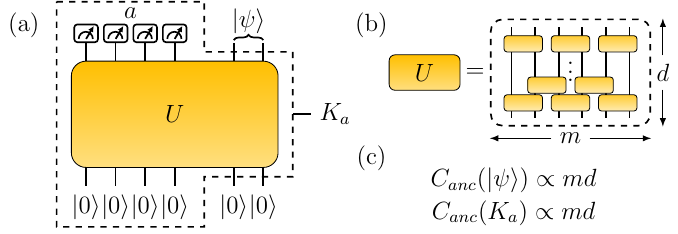}
		
		\caption{Embedded complexity and quantum circuit volume. (a) We study the embedded complexity of projected states and Kraus operators in a small subsystem obtained by applying a quantum circuit $U$ to the {all-zero initial} state and performing local projective measurements on its complement. (b) For local random circuit model, the unitary $U$ is randomly drawn from the local random circuits ensemble $\cU_{m,d}$ on $m$ qubits with circuit depth $d$. (c) Theorem \ref{thm:complexity_volume} show that with unit probability, the embedded complexity of the projected state $\ket{\psi}$ in the small subsystem and the Kraus operator $K_a$ is lower-bounded by the circuit volume. }
		\label{fig:complexity_volume}
	\end{figure}

	\textit{Bounding embedded complexity by circuit volume}---We study the embedded complexity of projected states prepared on an $n$‑qubit subsystem by performing local projective measurements on the complementary subsystem of a larger $m$‑qubit system, as depicted in Fig.~\ref{fig:complexity_volume}(a). Upon obtaining a measurement outcome $a \in \{0,1\}^{m-n}$, the prepared projected state is
	\begin{equation}
		\ket{\psi} \propto (\bra{a} \otimes I_n) U \ket{0}^{\otimes m}.
	\end{equation}
	The measurement also performs a POVM on the $n$‑qubit subsystem, where each outcome $a\in\{0,1\}^{m-n}$ on the ancillary qubits corresponds to the Kraus operator $K_a = (\bra{a} \otimes I_n) U (\ket{0}^{\otimes (m-n)} \otimes I_n)$.
	
	We first analyze the canonical local random circuit model \cite{Brand_o_2016, Haferkamp_2022_linear, Haferkamp2022randomquantum, chen2024incompressibility} to understand the typical behaviors of embedded complexity. Our method is extendable to higher dimensions and various architectures.  Here, each 2-qubit gate is independently drawn from the Haar measure on $SU(4)$. We denote by $\mathcal{U}_{m,d}$ the ensemble of $m$-qubit local random circuits with depth $d$.
	The corresponding ensemble of quantum states, obtained by applying a unitary $U \in \mathcal{U}_{m,d}$ to the all-zero initial state, is defined as
	\begin{equation}
		\mathcal{S}_{m,d} = \{U\ket{0}^{\otimes m}: U \in \mathcal{U}_{m,d}\}.
	\end{equation}
	The probability distributions over both $\mathcal{U}_{m,d}$ and $\mathcal{S}_{m,d}$ are induced by the Haar measure over the individual 2-qubit gates.
	
	To determine the embedded complexity of the projected state or Kraus operator, one must take minimization over all viable circuits and measurements. 
	This task is notoriously challenging due to the difficulty in conclusively eliminating the possibility of reducing the number of gates. 
	A reduction in complexity seems especially possible when the final $n$-qubit subsystem is much smaller than the initial $m$-qubit system (i.e., $n \ll m$), as most two-qubit gates lie outside the lightcone of the subsystem. 
	Remarkably, we prove that the circuit volume is nearly incompressible. 
	
	\begin{theorem}\label{thm:complexity_volume}
		Given $m \ge n \ge 4$, consider a local random circuit $U \in \cU_{m,d}$ acting on the initial state $\ket{0}^{\otimes m}$. After the first $m-n$ qubits of the state $U\ket{0}^{\otimes m}$ are measured in the computational basis, the projected state $\ket{\psi}$ on the remaining $n$ qubits will, with unit probability, satisfy:
		\begin{equation}
			C_{anc}(\ket{\psi}) \ge \min\left(\frac{md}{2n^2} - 2m, 2^{n+1}-2\right) / 15.
		\end{equation}
		For $d = \Omega(n^2)$, the bound can be made $C_{anc}(\ket{\psi}) = \Omega(\min(\frac{V}{n^2}, 2^n))$, where $V = \lfloor m / 2 \rfloor d$ is the circuit volume.
	\end{theorem}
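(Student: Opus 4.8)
The plan is to lower-bound $C_{anc}(\ket{\psi})$ by a dimension/rank argument rather than a straight counting argument; this is what will deliver the ``with unit probability'' conclusion. I would split the claim into two parts. Part (i): the law of the projected state $\ket{\psi}$ is absolutely continuous on a real-analytic submanifold of the space of pure $n$-qubit states of dimension at least $D\coloneqq\min\!\big(\tfrac{md}{2n^2}-2m,\;2^{n+1}-2\big)$. Part (ii): the set $\mathcal{R}_{V_0}$ of $n$-qubit states producible by an embedded circuit of volume $V_0$ is contained in a finite union of real-analytic images each of dimension at most $15V_0$, with $15=\dim SU(4)$, which is exactly where the factor $1/15$ originates. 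Granting both, whenever $15V_0<D$ the set $\mathcal{R}_{V_0}$ is Lebesgue-null inside the $(2^{n+1}-2)$-dimensional pure-state manifold while $\ket{\psi}$ has a density there, so $\Pr[\ket{\psi}\in\mathcal{R}_{V_0}]=0$; a countable union over all integers $V_0<D/15$ then gives $C_{anc}(\ket{\psi})\ge D/15$ almost surely, and the second, asymptotic statement is the special case $d=\Omega(n^2)$. The cap at $2^{n+1}-2$ is forced because every pure $n$-qubit state is reachable by about $(2^{n+1}-2)/15$ two-qubit gates, so $\mathcal{R}_{V_0}$ stops being null once $15V_0\gtrsim 2^{n+1}-2$.

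For part (ii), fix $V_0$ and an embedded circuit $\ket{\psi}\otimes\ket{0}^{\otimes(m-n)}=c\,\Pi_{V_0}U_{V_0}\cdots\Pi_1U_1\ket{0}^{\otimes m}$. A qubit never touched by any gate $U_i$ is also never touched by any $\Pi_i$, stays in $\ket{0}$, and may be deleted; after relabeling one may therefore assume $m\le 2V_0+n$, leaving only finitely many choices for the gate supports and for the projectors $\Pi_i=P_{i,1}\otimes P_{i,2}$ with $P_{i,j}\in\{\mathbb{I},\ketbra{0}{0},\ketbra{1}{1}\}$. For each such discrete choice the output is a real-analytic function of the $15V_0$ real parameters of $U_1,\dots,U_{V_0}$ --- the ``successful post-selection'' requirement that the complement return to $\ket{0}^{\otimes(m-n)}$ only restricts parameters, the scalar $c$ is then determined by normalization, and quotienting by the global phase does not raise dimension --- so each image has dimension $\le 15V_0$, hence $\dim\mathcal{R}_{V_0}\le 15V_0$. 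The point to be careful with here is that the extra power of post-selection introduces no hidden continuous parameters: a fixed branch of a post-selected circuit is still a fixed composition of analytic maps.

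Part (i) is the heart of the matter, and there I would use the spacetime-conversion / random-gate-teleportation construction employed elsewhere in the paper, in the spirit of recent incompressibility results for random circuits. It suffices to show that for \emph{every} fixed outcome $a\in\{0,1\}^{m-n}$ the real-analytic map $\Phi_a\colon U\mapsto(\bra{a}\otimes\mathbb{I}_n)U\ket{0}^{\otimes m}$, followed by normalization and the quotient by global phase, has generic differential of rank at least $D$ into the $(2^{n+1}-2)$-dimensional pure-state manifold. Then the Born-reweighted law of $U$ --- still absolutely continuous on the parameter space of $\cU_{m,d}$ --- pushes forward to a law on the $n$-qubit window that is absolutely continuous on a submanifold of dimension $\ge D$; since the unconditional projected-state law is a mixture over $a$, it too avoids any fixed null set with probability one. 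By real-analyticity it is enough to exhibit a single parameter value at which $d\Phi_a$ has rank $\ge D$: choose the gates so that the depth-$d$, width-$m$ circuit prepares a cluster-type resource state for which measuring the complement in the basis determined by $a$ teleports onto the $n$-qubit window a universal random circuit of volume $\Omega(md/n^2)$ --- the $1/n^2$ being the overhead of routing the $\sim md/2$ physical gates through, and correcting teleportation byproducts within, a window of width $n$ --- and then vary only that window circuit's $\Omega(md/n^2)$ gates, which moves $\ket{\psi}$ in $\min(15\cdot\Omega(md/n^2),\;2^{n+1}-2)$ independent directions. The subtracted $2m$ absorbs the $O(m)$-gate layer that sets up the resource state, and the hypothesis $d=\Omega(n^2)$ is precisely what makes $\tfrac{md}{2n^2}-2m=\Omega(md/n^2)=\Omega(V/n^2)$.

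The main obstacle is part (i): turning the teleportation picture into genuine absolute continuity --- full differential rank $\ge D$ on a positive-measure set of circuit parameters, uniformly over outcomes $a$ --- rather than merely exhibiting high-complexity states in the closure of the projected ensemble. Concretely, one must show that the byproduct operators arising from an arbitrary $a$ can be compensated inside the width-$n$ window without consuming the $\Omega(md/n^2)$ gate budget, and that the routing cost is indeed only $O(n^2)$ per logical step. A secondary obstacle is making the bookkeeping in part (ii) fully rigorous despite the definition of $C_{anc}$ allowing an unbounded number of ancillas; this is the role of the ``delete untouched qubits'' reduction, which must be justified together with the claim that the finitely many surviving branches really do exhaust $\mathcal{R}_{V_0}$.
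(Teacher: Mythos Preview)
Your proposal is correct and follows essentially the same architecture as the paper: upper-bound the dimension of the ``reachable with volume $V_0$'' set by $15V_0$ via parameter counting, lower-bound the dimension of the image of the circuit-to-projected-state map using the random gate teleportation construction, and conclude by a measure-zero argument. The paper phrases everything in terms of the \emph{accessible dimension} of semi-algebraic sets (citing Haferkamp et al.\ and Li et al.) rather than differential rank and absolute continuity, and it works with \emph{unnormalized} post-measurement states, absorbing your normalization/phase quotient into a single extra real parameter (hence $\dim\mathcal{V}(s)\le 15s+1$).

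Both of your flagged obstacles dissolve more easily than you fear. For the ``uniformly over outcomes $a$'' issue, the paper does not attempt a rank bound for each $a$: since a layer of Pauli $X$'s converts outcome $a$ to outcome $0$ and is absorbed by the Haar-invariance of the adjacent two-qubit gates, all outcomes are distributionally equivalent, and it suffices to treat $a=0^{m-n}$. For the byproduct/routing issue, the teleportation construction used is the Choi-state protocol of Theorem~\ref{thm:random_small_depth}: the measurement probability is a fixed constant $2^{-(k-1)n}$ independent of the gates, so one gets a genuine inclusion $c\,\mathcal{S}_{n,t}\subseteq\mathcal{C}$, and the Pauli byproducts are absorbed into the Haar-random blocks rather than corrected, so no gate budget is spent on them. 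The $O(n^2)$ overhead is exactly the $2(n^2+n)$ layers of SWAPs needed to implement Bell preparation and Bell measurement in a strictly 1D-local circuit; the lower bound on $\dim\mathcal{S}_{n,Ln}$ is then quoted from \cite{Haferkamp_2022_linear} rather than re-derived.
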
 
	
	We summarize this theorem in Fig.~\ref{fig:complexity_volume}(c).
	This theorem implies that, in almost all but extremely special cases, the use of ancillas and measurements does not permit a substantial reduction of the circuit volume $V$ (only scaled by a factor of $O(n^{-2})$).
	Within an $n$-qubit closed system, the projected states thus require preparation time at least
	$V/\mathrm{poly}(n)$, which can far exceed the original depth $d$ in the $m$-qubit system when
	$n\ll m$, revealing a spacetime tradeoff of circuit complexity.
	
	Underpinning this result are two key insights: 
	(i) Rather than merely destroying entanglement~\cite{Brian2019MeasurementInduced}, the measurements performed after deep circuits concentrate the degrees of freedom from the measured ancillary qubits into the unmeasured subsystem.
	We show that the projected states obtained from local random circuits form high‑dimensional manifolds, whose dimensions scale proportionally with the circuit volume, which is rigorously characterized using tools from semi‑algebraic geometry~\cite{Haferkamp_2022_linear, li2022short, suzuki2023quantum};
	(ii) Measurements performed within low‑complexity circuits do not increase the dimension of the manifolds of preparable states, due to the finiteness of the measurement outcomes.
	Combining these two insights, we show that measurement‑assisted quantum circuits can only reach a measure‑zero subset of the full projected state manifolds.
	A complete proof, together with analogous results for Kraus operators, is provided in Appendix~\ref{app:bounding}.

	Further, we establish that measurements cannot significantly simplify the generation of designs (statistically pseudorandom ensembles that reproduce the uniform Haar measure up to certain moments), a paradigmatic practical notion of randomness that  naturally emerge from e.g., random circuit~\cite{Brand_o_2016} and chaotic Hamiltonian dynamics~\cite{Nakata_2017, Ho_2022, Ippoliti_2023} and holds fundamental importance across quantum information and physics. As an example of its application, the proof of Theorem~\ref{thm:hamiltonian} uses this result.
	
	\begin{proposition}[Informal]\label{lem:design_approximate_complexity}
		Let $\ket{\psi}$ be an $n$-qubit state sampled from an approximate state $k$-design with $k < 2^{n/2}$. Then, with high probability, $\widetilde{C}_{anc}(\ket{\psi}) = \Omega(nk)$.
	\end{proposition}
	
	Extending beyond random states, we also consider  Hamiltonian dynamics and show that the relation between embedded complexity and circuit volume holds for projected states produced by a time‑independent Hamiltonian evolution. 
	As a concrete example, consider a two‑dimensional lattice of $m_r \times m_c$ qubits with local Hamiltonian $H = \sum_i h_i X_i + \sum_{i,j} h_{i,j} X_iX_j $,
	where the on‑site fields $h_i$ and interaction strengths $h_{i,j}$ are specified in Appendix~\ref{app:approx_embed_Hamiltonian}. 
	We study the projected state on a single-column subsystem, as shown in Fig.~\ref{fig:qresource}(a).
	The following theorem parallels our result for random circuits (Theorem~\ref{thm:complexity_volume}) and confirms that the complexity of the projected state is lower-bounded by the circuit volume of Hamiltonian evolution. This suggests a spacetime conversion for circuit complexity in time-independent Hamiltonian dynamics.
	The proof combines measurement-based (MB) protocols in Refs.~\cite{Turner_2016,Mezher2018GraphStates} with our Proposition~\ref{lem:design_approximate_complexity} (see Appendix \ref{app:approx_embed_Hamiltonian} for details).
	
	\begin{theorem}[Informal]\label{thm:approximate_embedded_complexity_Hamiltonian}
		Consider {the above local Hamiltonian defined on a} two-dimensional $m_r \times m_c$ lattice. There exists an evolution time $\tau$ such that, after measuring $m_r(m_c-1)$ qubits of the evolved state $\exp(-iH\tau) \ket{0}^{\otimes m_rm_c}$ in the computational basis, the projected state $\ket{\psi}$ on the $m_r$ qubits in the last column with high probability satisfies
		\begin{equation}
			\widetilde{C}_{anc}(\ket{\psi}) \ge \min\left(\frac{V}{\mathrm{poly}(m_r)}, 2^{\Omega(m_r)}\right),
		\end{equation}
		where $V = m_r m_c \tau$ is the circuit volume.
		\label{thm:hamiltonian}
	\end{theorem}
	
	\begin{figure}[!h]
		\includegraphics[width=0.45\textwidth]{./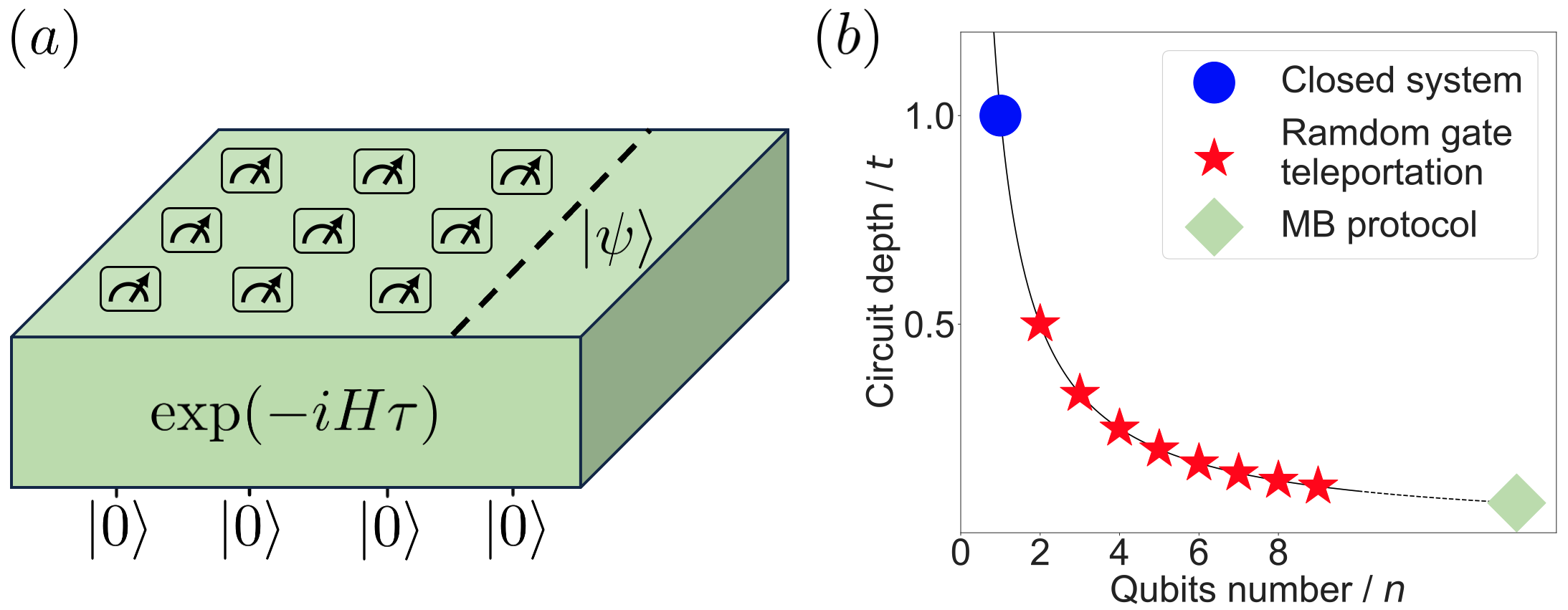}
		\caption{(a) Hamiltonian evolution in Theorem~\ref{thm:hamiltonian}. Measurements are performed on the first $m_r(m_c-1)$ qubits, leaving a projected state on the final column. (b) Spacetime complexity conversion in random and Clifford circuits. This diagram illustrates the tunable tradeoff between the number of qubits $n$ and (relative) circuit depth $t$ enabled by our gate teleportation protocols, which interpolates between closed-system circuits and MB protocols. 
		}
		\label{fig:qresource}
	\end{figure}
	
	Our projected state setting and complexity results offer insights for more fields across physics and quantum information, which we now exemplify.

	Recent studies show that measurements can help shortcut the preparation of certain highly structured states including various paradigmatic entangled states and topologically ordered states~\cite{Briegel01,Piroli_2021,Lu_2022,Lu_2023,bravyi2022adaptive,Tantivasadakarn_2023}.
	Our incompressibility results indicate that measurement-assisted circuits do not significantly enlarge the set of preparable states and thus offer no nontrivial shortcuts for generic states, substantiating the insight that the advantages of measurement-assisted preparation are very rare and hinge on highly tailored structures.
	
	In quantum gravity, an influential proposal of Brown and Susskind originated from holographic insights~\cite{Brown_2018, susskind2018black} posits that the circuit complexity of generic physical dynamics grows linearly for exponentially long time.
	While recent progress has validated this linear growth conjecture to varying extents in random circuit models~\cite{Haferkamp_2022_linear,chen2024incompressibility}, existing understanding is limited to the basic closed-system unitary evolution scenario, with fundamental elements of quantum physics including spacetime, measurements and open-system dynamics have yet to enter the picture.
	Our embedded complexity bounds imply generalized linear growth theorems for spacetime complexity that unify these elusive aspects, strengthening our understanding of circuit complexity as a crucial lens into quantum gravity~\cite{Stanford_2014,susskind2014computational,Brown_2018, susskind2018black,bouland2019computational,jian2023subsystem}.
	
	Furthermore, the behaviors of projected states and ensembles are of wide importance in quantum many-body and statistical physics. Recently it has been recognized that they provide new insights into non-equilibrium physics, spawning active areas like deep thermalization and emergent randomness~\cite{Ho_2022, Claeys2022dualunitary, Bhore_2023, Cotler_2023, Ippoliti_2023, mark2024maximum}. Our embedded complexity theory further expands the intensively studied connection between complexity and scrambling physics~\cite{Brown_2018, susskind2018black,Roberts_2017,Liu_2018,Brand_o_2021,Haferkamp_2022_linear}.
	For instance, it enriches our understanding of deep thermalization by strengthening the known state design characterizations: as discussed, our theorems above reveal a fundamental complexity concentration phenomenon upon measurements and indicate that the projected states exhibit circuit complexity proportional to the circuit volume, generally far exceeding what the traditional state-design arguments would suggest~\cite{Brand_o_2021,Cotler_2023}.

	\textit{Spacetime conversion in quantum circuits}---Spacetime tradeoffs in quantum circuits are  crucial to quantum computing \cite{Bravyi2016trading, Maslov2021quantumadvantage, Zhang2022StatePreparation}, paralleling its long-standing interest in classical complexity theory \cite{Arya2009SpacetimeSearching, Paul2003TimespaceRamdomizedComputation}. 
	We devise explicit protocols that realize spacetime conversions for two paradigmatic circuit families—random circuits and Clifford circuits. 
	
	Our main technique is quantum circuit teleportation between subsystems by Bell state measurement. Informally, one prepares the Choi states of quantum circuits in different subsystems, then performs Bell state measurements to concentrate all circuits in a small subsystem. While gate teleportation may introduce Pauli gates interleaved within the circuits, these gates do not affect the specific circuits we aim to implement.
	Specifically, when choosing the unitaries $U_i$ as local random circuits, the Pauli gates can be absorbed into the random circuits $U_i$ thanks to the property of the Haar measure.  This allows us to obtain a random circuit in a subsystem with increased circuit depth. 
	The result is summarized below and illustrated in Fig.~\ref{fig:space_to_time}. Similar spacetime conversions for Clifford circuits and  stabilizer state preparation are detailed in Appendix~\ref{app:spacetime}.
	
	\begin{figure}[!htbp]
		\includegraphics[scale=0.75]{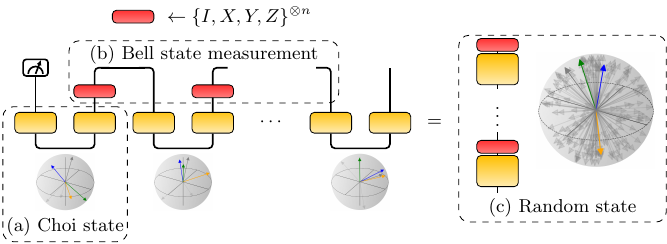}
		\caption{Spacetime conversion for random circuits. (a) First prepare Choi states of random circuits in different subsystems with circuit depth $d$. (b) Then perform Bell state measurements  to teleport random circuits (yellow rectangle) from one subsystem to another, which may introduce Pauli gates $P$ (red rectangle). $P$ is uniformly distributed on $\{I,X,Y,Z\}^{\otimes n}$, which can be absorbed into the random gates. (c) Finally, a random state in $\cS_{n,t}$ is prepared with reduced circuit depth $d$.}
		\label{fig:space_to_time}
	\end{figure}
	
	\begin{theorem}[Spacetime conversion for random circuits]
		\label{thm:random_small_depth}
		Given a circuit depth $t$, for an integer $k \ge 2$, quantum circuits on $kn$ qubits with a reduced depth $d = \left\lfloor \frac{t}{k} \right\rfloor + 4$ is sufficient for i) generating a random state in $\mathcal{S}_{n,t}$; ii) applying a random gate $U$ from $\mathcal{U}_{n,t}$ to any input state $\ket{\phi}$ when $k$ is an odd number.
	\end{theorem}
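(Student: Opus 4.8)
The plan is to reconstruct a depth-$t$ brickwork on $n$ qubits out of $\Theta(k)$ independent depth-$\approx t/k$ pieces laid out ``in space'' on the $kn$ available qubits, and then fuse them back into a single circuit acting on one $n$-qubit block by Bell-state measurements --- the random gate teleportation of Fig.~\ref{fig:space_to_time}. Concretely I would write a target $W\in\cU_{n,t}$ as $W=W_{L}\cdots W_{1}$ with each $W_i$ a depth-$\lfloor t/k\rfloor$ brickwork (reconciling the sublattice offset where two consecutive blocks meet costs at most one extra layer, which is part of the slack in $d=\lfloor t/k\rfloor+4$), prepare on the $kn$ qubits the Choi states of these pieces --- using the folding identity $(AB\otimes\bbI)\ket{\Phi}=(A\otimes B^{T})\ket{\Phi}$ for the maximally entangled state $\ket{\Phi}$ on $2n$ qubits, so that a depth-$2d$ piece fits into $2n$ qubits at circuit depth $d$ --- and then stitch consecutive pieces by measuring $n$ Bell pairs between them. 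Counting qubits and depth then gives $kn$ qubits and depth $\lfloor t/k\rfloor$ plus the constant cost of preparing Bell pairs and fixing offsets, i.e.\ $d=\lfloor t/k\rfloor+4$.

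The heart of the argument is that the Pauli byproducts of teleportation are invisible to the Haar measure. A Bell measurement between the running register and the input half of a Choi state of $V$ yields, conditioned on outcome $Q\in\{\bbI,X,Y,Z\}^{\otimes n}$ --- which is uniform because that half is maximally mixed, so no post-selection is needed --- the running state acted on by $V$ up to left multiplication by $Q$ (and up to a transpose on $V$, depending on which half is measured). Chaining the pieces thus produces the map $W_{L}'\,Q_{L-1}\,W_{L-1}'\cdots Q_{1}\,W_{1}'$ together with a single Pauli on the input register. Because the boundary layer of each $W_i'$ is a tensor product of i.i.d.\ Haar-random $SU(4)$ gates and, in a brickwork, every qubit lies in a gate of one of the two boundary layers (a single-qubit Pauli on a boundary qubit commutes harmlessly into the second layer), left or right multiplication of $W_i'$ by any fixed tensor-product Pauli again gives i.i.d.\ Haar-random $SU(4)$ gates, by the left/right projective invariance of the Haar measure. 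Hence every $Q_i$ is absorbed into an adjacent piece, and, conditioned on all outcomes, the fused circuit is \emph{exactly} distributed as $\cU_{n,t}$, up to a global transpose inherited from teleportation; and since $g\mapsto g^{T}$ preserves the Haar measure on $SU(4)$ while, up to a one-layer offset, mapping $\cU_{n,d}$ to itself, this transpose is harmless at the level of the ensemble.

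For item~1, feed $\ket{0}^{\otimes n}$ into the chain: the residual Pauli on the input register is absorbed into $W_1'$, and a global transpose is immaterial for an \emph{ensemble of states} since $\{V\ket{0}^{\otimes n}\}$ and $\{V^{T}\ket{0}^{\otimes n}\}$ have the same law, so the output is a uniform sample from $\cS_{n,t}$ for any $k\ge 2$. For item~2, feed the arbitrary input $\ket{\phi}$ into the first Bell measurement: the net map is now $\tilde V$ or $\tilde V^{T}$ with $\tilde V$ distributed as $\cU_{n,t}$, and a \emph{global} transpose cannot be undone by absorbing local Paulis. The parity of the number of teleportation hops --- equivalently the parity of $k$ --- controls which of the two occurs, and one checks that the net map is a genuine element of $\cU_{n,t}$ precisely when $k$ is odd, which is the stated restriction. (The same construction applies verbatim to Clifford circuits, with Haar invariance of $SU(4)$ layers replaced by invariance of the uniform distribution over two-qubit Clifford layers under Pauli multiplication.)

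The conceptual content --- Haar invariance swallows the teleportation byproducts, and transposition preserves the relevant ensembles --- is short, so the main obstacle is the geometric and combinatorial bookkeeping: choosing the layout of Choi-state resources and Bell measurements so that exactly $kn$ qubits and depth exactly $\lfloor t/k\rfloor+4$ suffice (the folding trick, reusing the output register of one resource as the input register of the next, and absorbing both sublattice offsets and the Bell-pair layers into the additive constant), together with carefully propagating the transpose through the chain to pin down the odd-$k$ condition in item~2. A minor but necessary point along the way is verifying uniformity of the measurement outcomes, so that the protocol is genuinely post-selection-free.
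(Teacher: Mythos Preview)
Your approach is essentially the paper's: prepare Choi states of depth-$\lfloor t/k\rfloor$ brickwork pieces on $2n$-qubit blocks using the folding identity $(A\otimes B^{T})\ket{\Phi}=(AB\otimes\bbI)\ket{\Phi}$, merge them by Bell measurements, and absorb the Pauli byproducts and transposes using left/right invariance and transpose invariance of the Haar measure on $SU(4)$. That part is correct and matches the paper's Lemmas~\ref{lem:proj_are_random}--\ref{lem:random_EPR_small_depth}.

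The one genuine error is your explanation of the odd-$k$ restriction in item~2. You argue that the parity of teleportation hops decides whether the net map is $\tilde V$ or $\tilde V^{T}$, and that the latter ``cannot be undone''. But you already (correctly) observed one paragraph earlier that $g\mapsto g^{T}$ preserves the Haar measure on $SU(4)$ and maps $\cU_{n,d}$ to itself up to a layer offset; hence, at the level of the ensemble $\cU_{n,t}$, a global transpose is just as harmless for item~2 as for item~1. Your argument is internally inconsistent and, if taken at face value, would imply that even $k$ also works. The paper's actual reason for odd $k$ is purely qubit accounting: when the input register $\ket{\phi}$ occupies $n$ of the $kn$ qubits, the remaining $(k-1)n$ ancillas must be partitioned into $2n$-qubit Choi-state blocks, which forces $k-1$ even. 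Concretely, for $k=2m+1$ one applies a depth-$d_2{+}1$ random brick circuit directly on $\ket{\phi}$, builds $\cE_{n,d_3}$ on the other $2mn$ qubits, and Bell-measures; for even $k$ one block is necessarily wasted and the depth budget $d=\lfloor t/k\rfloor+4$ no longer suffices to reach total depth $t$. Replace the transpose-parity story with this counting argument and the proof goes through.
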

	
	Our protocols bridge closed-system and (constant-depth) MB schemes~\cite{Gottesman_1999,Raussendorf2001_oneway,Turner_2016}) for state generation, enabling a smoothly tunable spacetime resource conversion in between these two extremes (as illustrated in Fig.~\ref{fig:qresource}(b)). This offers practically precious flexibility in experiment and architecture design: using our method one can freely tailor the qubit and time costs to suit specific hardware features or capabilities. We briefly discuss applications in two particularly important scenarios (detailed discussion and results in Appendices~\ref{app:hardness_sampling} and \ref{app:shadow}).
	
	Random circuit sampling (RCS) is a flagship demonstration and benchmark for quantum advantage based on sampling from the distribution $p_U(x) = \abs{\bra{x}U\ket{0}}^2$ with $U$ drawn from certain random circuit ensemble~\cite{Arute_2019, Hangleiter2023AdvantageSampling, Liu2025CertifiedRandomness}.  
	We show that sampling from our random-gate-teleportation circuits remains classically hard under the same complexity assumptions as RCS of comparable circuit volume on a subsystem (Appendix \ref{app:hardness_sampling}), solidifying our message that the circuit volume establishes a fundamental spacetime characterization of the complexity of quantum systems. 
	This also rigorously validates the spacetime trade-off enabled by our methods in demonstrating quantum advantage, unlocking new routes for experiments as discussed earlier.
	
	Another notable application of spacetime conversion is in shadow tomography, where one aims to efficiently estimate certain properties of a state of interest. Existing protocols required evolving the input state online via random unitaries drawn from a unitary 3-design or Hamiltonian evolution \cite{Huang_2020, Tran_2023}. 
	In contrast, our protocol enables the simulation of random circuit action by preparing ancillary states and performing Bell measurements. 
	This approach is practically  appealing as it delegates the hardness from online (dynamics implementation) to offline (static state preparation), an
	insight that underpins many vital quantum computing schemes including magic state distillation~\cite{Bravyi_2005} and MBQC~\cite{Raussendorf2001_oneway,RaussendorfBrowneBriegel03,PRXQuantum.3.020333}. By applying Theorem~\ref{thm:random_small_depth} to prepare the required random ancilla states (from an approximate 3-design) in constant depth, our ancilla-assisted shadow tomography scheme can consequently predict global properties, such as fidelity with target states, using only constant-depth quantum circuits.
	
	\textit{Discussion}---We introduced and explored the concept of embedded complexity to incorporate measurements and space resource into characterization of circuit complexity.
	The connection we establish between the embedded complexity and circuit volume places fundamental limitations on what measurements and ancillary space can achieve, shedding light on holographic complexity, scrambling, and measurement-assisted dynamics, and is expected to extend to broader classes of physical systems.
	Conversely, certain quantum operations—such as quantum singular-value transformation \cite{Low2017QuantumSignalProcessing, Gilyen2019QSVT}—are difficult without ancillary qubits, since block encoding intrinsically requires them, underscoring the power of ancillary space and measurements. This raises an intriguing question: can measurement-assisted circuits provide an \emph{unconditional} gate-count advantage? For example, does there exist a class of quantum states $\{\ket{\psi_n}\}$ such that $C(\ket{\psi_n})=\omega\bigl(C_{\mathrm{anc}}(\ket{\psi_n})\bigr)$?
	
	A key application stemming from our framework is the spacetime circuit resource conversion.
	In the particularly important random circuit setting, this conversion indicates that the randomness of projected ensembles can be substantially enhanced by incorporating gate randomness, advancing previously studied settings using only measurements~\cite{Ho_2022,Ippoliti_2023,Cotler_2023} 
	or a single layer of random single-qubit gates~\cite{mok2024optimalconversionclassicalquantum}. 
	We believe further research into this randomness conversion and teleportation method would lead to abundant valuable advances in our understanding of the physics of complex quantum systems as well as technological applications including versatile methods for randomness generation \cite{ambainis2007quantum, Nakata_2017, Zhu_2017, chen2024efficient1,  metger2024simple, chen2024efficient2, mok2024optimalconversionclassicalquantum} with wide-ranging use in benchmarking \cite{Knill_2008, Alexander_2016, Mark_2023, Choi_2023}, compiling \cite{Wallman_2016, Hashim_2021}, learning \cite{Huang_2020, Tran_2023, McGinley_2023}, and beyond.
	
	\begin{acknowledgements}
		The authors thank Junjie Chen, Zhenhuan Liu, Yuxuan Yan, Xiao Yuan, and Qi Zhao for helpful discussion. This work is supported by the National Natural Science Foundation of China Grants No.~12174216 and the Innovation Program for Quantum Science and Technology Grant No.~2021ZD0300804 and No.~2021ZD0300702. Z.-W.L. is supported in part by a startup funding from YMSC, Tsinghua University, Dushi Program, and NSFC under Grant No.~12475023. 
	\end{acknowledgements}


	\bibliography{ref}

	\clearpage
	\onecolumngrid
	
	\appendix
	\thispagestyle{empty}
	
	\renewcommand{\thetheorem}{S\arabic{theorem}}
	\renewcommand{\thefact}{S\arabic{fact}}
	\renewcommand{\thelemma}{S\arabic{lemma}}
	\renewcommand{\theequation}{S\arabic{equation}}
	\renewcommand{\thedefinition}{S\arabic{definition}}
	\renewcommand{\theproposition}{S\arabic{proposition}}
	
	\section{Spacetime conversion for random circuits and Clifford circuits}\label{app:spacetime}
	In this section, we provide the details of the spacetime conversion for random circuits and Clifford circuits. We begin by revisiting a gate teleportation protocol. Then, we prove Theorem 3 in the main text, which shows the spacetime conversion for random circuits. Finally, we extend the spacetime conversion result to Clifford circuits.
	\subsection{Preliminaries on gate teleportation}
	\subsubsection{Bell state measurement}
	Firstly, we revisit the notion of Bell state measurement, an important component in gate teleportation. Denote the unnormalized maximally entangled state as:
	\begin{equation}
		\ket{\Phi} = \ket{00} + \ket{11}. 
	\end{equation}
	We can represent $\ket{\Phi}$ using a tensor network diagram, as illustrated in Fig.~\ref{fig:EPR}(a).
	\begin{figure}[!h]
		\centering
		
		\subfigure[]{
			\raisebox{0.4cm}{ 
				\begin{minipage}[t]{0.4\linewidth}
					\centering
					
					\begin{tikzpicture}[
						scale=1,
						Lline/.style={-,black,thick,rounded corners=2},
						]
						\draw[Lline] (-1,0) -- (0,0) -- (0,-1)-- (-1,-1);
					\end{tikzpicture} 
					
				\end{minipage}
			}
		}
		\subfigure[]{
			\begin{minipage}[t]{0.4\linewidth}
				\centering
				
				\begin{tikzpicture}[
					scale=1,
					Lline/.style={-,black,thick,rounded corners=2},
					]
					
					\node[draw=black,rectangle,rounded corners=0.1cm,minimum width=.6cm,minimum height=.8cm,text=black,thick] (B) at (-1,0) {$U_A$};	
					\node[draw=black,rectangle,rounded corners=0.1cm,minimum width=.6cm,minimum height=.8cm,text=black,thick] (A) at ($(B) - (0,1)$) {$U_B$};	
					
					\draw[Lline] (B.east) -- (0,0) -- (0,-1) node [black,midway,xshift=.5cm] {$=$} -- (A.east);
					\draw[Lline] (B.west) --++ (-.5,0);
					\draw[Lline] (A.west) --++ (-.5,0);
					
					\node[draw=black,rectangle,rounded corners=0.1cm,minimum width=.6cm,minimum height=.8cm,text=black,thick] (B2) at (1.6, -1) {$U_B$};	
					\node[draw=black,rectangle,rounded corners=0.1cm,minimum width=.6cm,minimum height=.8cm,text=black,thick] (A2) at ($(B2) + (0.8,0)$) {$U_A^T$};
					
					\draw[Lline] (A2.east) -- (3,-1) -- (3,0) -- (1,0);
					\draw[Lline] (A2.west) -- (B2.east);
					\draw[Lline] (B2.west) -- (1,-1);
				\end{tikzpicture}
				
			\end{minipage} 
		}
		
		\caption{(a) Tensor network diagram of the unnormalized maximally entangled state $\ket{\Phi} = \ket{00} + \ket{11}$. (b) Diagram illustrating the movement of the unitary on the maximally entangled state. $U_A$ is applied on one side of the unnormalized maximally entangled state and can be moved to the other.}
		\label{fig:EPR}
	\end{figure}
	
	The four Bell states are abbreviated as:
	\begin{equation}\label{eq:Bell_state}
		\ket{\phi_{ab}} = \frac{1}{\sqrt{2}} (X^a Z^b \otimes I) \ket{\Phi}, \quad a,b\in \{0,1\}. 
	\end{equation}
	where $\ket{\phi_{00}}$ corresponds to the EPR pairs. We represent $\ket{\phi^n}_{AB}$ as the $n$ EPR pairs on systems $A$ and $B$:
	\begin{equation}
		\ket{\phi^n}_{AB} = \bigotimes_{i=1}^n \ket{\phi_{00}}_{A_i,B_i}, 
	\end{equation}
	where $A = A_1A_2\cdots A_n$ and $B = B_1B_2\cdots B_n$ are $n$-qubit system. For any $n$-qubit unitary $U$, we define the state $\ket{U,V}_{AB}$ as applying $U$ to $\ket{\phi^n}_{AB}$ on subsystem $A$ and $V$ on subsystem $B$:
	\begin{equation}
		\ket{U,V}_{AB} = (U \otimes V) \ket{\phi^n}_{AB}.
	\end{equation}
	
	A beneficial property is that one can move a unitary operation from one side of the maximally entangled state to the other:
	\begin{equation}\label{eq:move_unitary_EPR}
		\ket{U_A,U_B}_{AB} = \ket{I, U_BU^T_A}_{AB}
	\end{equation}
	where $U_A, U_B$ represents an $n$-qubit unitaries. The diagram representing Eq.~\eqref{eq:move_unitary_EPR} is shown in Fig.~\ref{fig:EPR}(b).
	
	We frequently utilize Bell state measurements in our analysis. Consider a $4n$-qubit quantum state $\ket{\psi_{ABCD}}$. Suppose Bell state measurements are performed on each $A_i$ and $C_i$ for $1 \leq i \leq n$ in the basis $\{\ket{\phi_{ab}}\}$, and the measurement outcome on the $i$-th pair of qubits is represented by $a_i$ and $b_i$, corresponding to the Bell state $\ket{X^{a_i}Z^{b_i},I}$. Now, let $\mathbf{a} = a_1a_2\cdots a_n$ and $\mathbf{b} = b_1b_2\cdots b_n$. Define $X^{\mathbf{a}} = X^{a_1} \otimes X^{a_2} \otimes \cdots \otimes X^{a_n}$ and $Z^{\mathbf{b}}$ analogously. The unnormalized post-measurement state on the system $BD$ after obtaining the measurement result $\mathbf{a}, \mathbf{b}$ is then given by:
	
	\begin{equation}\label{eq:post_measurement}
		(\ket{X^{\ba} Z^{\bb}, I}_{AC})^{\dagger} \otimes I_B \otimes I_D \ket{\psi_{ABCD}} = ( \bra{X^{\ba} Z^{\bb}, I}_{AC} \otimes I_B \otimes I_D) \ket{\psi_{ABCD}} 
	\end{equation}
	
	\subsubsection{Gate teleportation}
	In a seminal work on measurement-based quantum computing \cite{Gottesman_1999}, a construction for gate teleportation is proposed to apply a unitary $U$ to a state $\ket{\psi}$. The essence of this method is to perform Bell state measurements between a state $\ket{\psi}$ and the Choi state of a unitary $U$ instead of directly applying $U$ to $\ket{\psi}$. According to Eq.~\eqref{eq:post_measurement}, a Pauli error is applied before the unitary. This process is illustrated in Fig.~\ref{fig:tele_with_err} and summarized in the following lemma.
	\begin{figure}[!h]
		\centering
		\begin{tikzpicture}[
            Lline/.style={-,black,thick, rounded corners=2},
            ]
        \node[draw=black, isosceles triangle,
	isosceles triangle apex angle=90,
	draw,
	rotate=270,
	minimum size =.8cm] (T) at (0,0) {\rotatebox{90}{$\psi$}};

        \draw[Lline] ($(T.west)+(.5,0)$) --++ (0, 2) node (tmp) {};
        \draw[Lline] ($(T.west)-(.5,0)$) --++ (0, 2) --++ (-1,0) --++ (0,-.2) node[draw=black, rectangle, anchor=north, minimum width=0.5cm, minimum height=0.5cm, rounded corners=0.1cm, fill=red!50] (XZ) {$X^{\ba}Z^{\bb}$};
        \draw[Lline] (XZ.south) --++ (0,-1.8) --++ (-1,0) --++ (0,.5) node[draw=black, rectangle, anchor=south, minimum width=1cm, minimum height=1cm, rounded corners=0.1cm, fill=yellow!50] (U) {$U_B$};
        \draw[Lline] (U.north) -- (U.north |- tmp);

        \node (A) at ($(T.west) + (1.35,0.8)$) {=};
        \node[draw=black, isosceles triangle,
	isosceles triangle apex angle=90,
	draw,
	rotate=270,
	minimum size =.8cm] (T2) at ($(T.center) + (3.2,0)$) {\rotatebox{90}{$\psi$}};
        \draw[Lline] ($(T2.west)+(.5,0)$) --++ (0, 2);
        \draw[Lline] ($(T2.west)-(.5,0)$) --++ (0, 0.15) node[draw=black, rectangle, anchor=south, minimum width=0.5cm, minimum height=0.5cm, rounded corners=0.1cm, fill=red!50] (XZ2) {$X^{\ba}Z^{\bb}$};
        \draw[Lline] (XZ2.north) --++ (0,0.15) node[draw=black, rectangle, anchor=south, minimum width=1cm, minimum height=1cm, rounded corners=0.1cm, fill=yellow!50] (U2) {$U_B$};
        \draw[Lline] (U2.north) -- (U2.north |- tmp);
        \node (B0) at ($(U.south) + (0,-1.1)$) {$B$};
        \node (A0) at ($(B0) + (1,0)$) {$A$};
        \node (C0) at ($(B0) + (2,0)$) {$C$};
        \node (D0) at ($(B0) + (3,0)$) {$D$};
        \node (B01) at (U2 |- B0) {$B$};
        \node (D01) at ($(B01) + (1,0)$) {$D$};
\end{tikzpicture}
	
		\caption{Gate teleportation with Pauli error. Given two input state $\ket{I,U_B}_{AB}$ and $\ket{\psi}_{CD}$, the unitary applied on system $B$ can be teleported onto $\ket{\psi}$ via Bell state measurements, up to a Pauli error $P = X^{\mathbf{a}}Z^{\mathbf{b}}$, where $\mathbf{a}, \mathbf{b}$ are the results of the Bell state measurement.}
		\label{fig:tele_with_err}
	\end{figure}
	
	\begin{lemma}[Gate teleportation with Pauli error]
		\label{lem:teleport_pauli_error}
		Let $\ket{I,U_B}_{AB}$ and $\ket{\psi}_{CD}$ be two states, where $A, B, C, D$ are $n$-qubit systems. After performing Bell state measurements on subsystem $AC$ and obtaining measurement results $\mathbf{a}$ and $\mathbf{b}$, where $\mathbf{a}, \mathbf{b} \in \{0,1\}^n$, the resulting post-measurement state on subsystem $BD$ is $[(U_B P) \otimes I] \ket{\psi}_{BD}$, where $P = X^{\mathbf{a}} Z^{\mathbf{b}}$ is a Pauli gate.
	\end{lemma}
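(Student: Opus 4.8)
The plan is to prove Lemma~\ref{lem:teleport_pauli_error} by directly evaluating the post-measurement state through Eq.~\eqref{eq:post_measurement}, expanding the maximally entangled pairs in the computational basis and exploiting the transpose-symmetry of the single-qubit Paulis $X$ and $Z$.

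First I would fix notation: write $\ket{\bbI,U_B}_{AB} = \sum_y \ket{y}_A \otimes (U_B\ket{y})_B$ (with the standard normalization kept implicit), and record that the Bell-measurement outcome $(\ba,\bb)$ on the register $AC$ corresponds, via Eq.~\eqref{eq:Bell_state}, to the functional $\bra{X^{\ba}Z^{\bb},\bbI}_{AC} = \sum_x (\bra{x}Z^{\bb}X^{\ba})_A \otimes \bra{x}_C$, using $(X^{\ba}Z^{\bb})^{\dagger} = Z^{\bb}X^{\ba}$. Applying this functional to $\ket{\bbI,U_B}_{AB}\otimes\ket{\psi}_{CD}$ as prescribed by Eq.~\eqref{eq:post_measurement} leaves, on $BD$,
\[
\sum_{x,y}\bra{x}Z^{\bb}X^{\ba}\ket{y}\,(U_B\ket{y})_B \otimes \big(\bra{x}_C\ket{\psi}_{CD}\big).
\]

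The crux is the identity $\sum_y \bra{x}M\ket{y}\ket{y} = M^{T}\ket{x}$, applied with $M = Z^{\bb}X^{\ba}$; combined with $X^{T}=X$ and $Z^{T}=Z$ this gives $M^{T} = X^{\ba}Z^{\bb}$, so the sum over $y$ collapses to $U_B X^{\ba}Z^{\bb}\ket{x}$ on register $B$. What remains is $\big((U_B X^{\ba}Z^{\bb})\otimes\bbI\big)\sum_x \ket{x}_B \otimes (\bra{x}_C\ket{\psi}_{CD})$; expanding $\ket{\psi}_{CD}=\sum_x\ket{x}_C\otimes\ket{\xi_x}_D$ identifies $\sum_x\ket{x}_B\otimes\ket{\xi_x}_D$ with $\ket{\psi}_{BD}$, i.e.\ $\ket{\psi}$ with its first register relabeled $C\to B$. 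Hence the post-measurement state is $\big((U_B P)\otimes\bbI\big)\ket{\psi}_{BD}$ with $P=X^{\ba}Z^{\bb}$, as claimed (any overall scalar coming from the chosen normalization of the Bell states is immaterial, since post-measurement states are defined up to normalization).

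I do not expect a genuine obstacle --- the lemma is a short index computation --- but the one point demanding care is the \emph{ordering} of the operators: the Pauli correction appears to the right of $U_B$ (it acts on $\ket{\psi}$ \emph{before} $U_B$), and this is exactly the content of the transpose step $M^{T}=X^{\ba}Z^{\bb}$ together with the symmetry of $X,Z$. An equivalent, more pictorial route, which I might include for intuition, is the diagrammatic one of Fig.~\ref{fig:EPR}(b): the $AC$ Bell measurement fuses the cap of $\ket{\bbI,U_B}_{AB}$ with the $C$-leg of $\ket{\psi}_{CD}$ into a single wire running from $B$ through $\psi$ to $D$, and one slides the inserted $Z^{\bb}X^{\ba}$ along that wire using Eq.~\eqref{eq:move_unitary_EPR}; the transpose produced by that sliding rule is again neutralized by $X^{T}=X$, $Z^{T}=Z$, reproducing $U_B P$.
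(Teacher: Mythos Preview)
Your proposal is correct and matches the paper's approach: the paper does not write out a formal proof but simply points to Eq.~\eqref{eq:post_measurement} and the tensor-network diagram in Fig.~\ref{fig:tele_with_err}, which is exactly the diagrammatic route you sketch at the end. Your explicit index computation is a faithful unpacking of that picture, with the key transpose step $(Z^{\bb}X^{\ba})^{T}=X^{\ba}Z^{\bb}$ correctly handling the operator ordering.
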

	
	For a Clifford unitary $V$, an adaptive Pauli operation $P'$ can correct the Pauli error $P$. This property arises from the ability to interchange a Pauli gate $P$ with a Clifford unitary $V$. Specifically, $VP = VPV^{\dagger}V = P'V$, where $P' = VPV^{\dagger}$ is also a Pauli gate due to the property of Clifford gates \cite{gottesman1998heisenberg}.
	
	\begin{lemma}[Clifford gate teleportation]
		\label{lem:teleport_clifford}
		Let $\ket{I,V}_{AB}$ and $\ket{\psi}_{CD}$ be two states, where $A,B,C,D$ are $n$-qubit systems and $V$ is an $n$-qubit Clifford gate. After performing Bell state measurements on subsystem $AC$ and obtaining the measurement results $\ba$ and  $\bb$, where $\ba,\bb \in \{0,1\}^n$, the resulting post-measurement state on subsystem $BD$ are $(P'V \otimes I) \ket{\psi}_{BD}$, where $P = X^{\ba}Z^{\bb}$ and $P'=VPV^{\dagger}$ are Pauli gates.
	\end{lemma}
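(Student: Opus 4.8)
## Proof Proposal for Theorem \ref{thm:random_small_depth}

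The plan is to derive Lemma \ref{lem:teleport_clifford} directly from Lemma \ref{lem:teleport_pauli_error} together with the defining property of the Clifford group, so the proof will be short. First I would specialize Lemma \ref{lem:teleport_pauli_error} to the case $U_B = V$: applying the Bell state measurements to subsystem $AC$ of the joint state $\ket{\bbI, V}_{AB} \otimes \ket{\psi}_{CD}$ and recording the outcomes $\ba, \bb$ produces, on subsystem $BD$, the unnormalized post-measurement state $((VP) \otimes \bbI)\ket{\psi}_{BD}$ with $P = X^{\ba} Z^{\bb}$. This is exactly the contraction displayed in Eq.~\eqref{eq:post_measurement}, so no new computation is needed; if one preferred a self-contained derivation, the same conclusion follows from the tensor-network identity Eq.~\eqref{eq:move_unitary_EPR} by sliding $V$ across the maximally entangled pairs and contracting the projectors $\bra{X^{a_i}Z^{b_i},\bbI}$ onto each measured pair $A_iC_i$.

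The second and only remaining step is to commute $V$ past the Pauli error. Writing $VP = (VPV^{\dagger})V$ and setting $P' \coloneqq VPV^{\dagger}$, the fact that Clifford unitaries normalize the Pauli group \cite{gottesman1998heisenberg} ensures that $P'$ is again an $n$-qubit Pauli operator (up to an irrelevant global phase). Substituting gives the post-measurement state $(P'V \otimes \bbI)\ket{\psi}_{BD}$, which is the claimed form; operationally, $P'$ is classically computable from $\ba,\bb$ and the known $V$, so it can be removed by an adaptive Pauli correction.

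I do not expect a real obstacle here --- the statement is essentially a corollary --- so the only care needed is with conventions and bookkeeping. One should confirm that a Bell state measurement on each pair $A_iC_i$ projects onto $\ket{X^{a_i}Z^{b_i},\bbI}$ in accordance with Eq.~\eqref{eq:Bell_state}, so that the aggregate measurement operator is $\ket{X^{\ba}Z^{\bb},\bbI}_{AC}$ with $X^{\ba} = \bigotimes_{i=1}^n X^{a_i}$ and $Z^{\bb}$ defined analogously, and one should keep in mind that all states are being treated as unnormalized, so the stated equality holds up to a positive scalar. With these conventions fixed, the two steps above complete the proof.
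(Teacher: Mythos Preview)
Your proposal is correct and matches the paper's own justification exactly: the paper derives Lemma~\ref{lem:teleport_clifford} by applying Lemma~\ref{lem:teleport_pauli_error} with $U_B = V$ and then rewriting $VP = (VPV^{\dagger})V = P'V$, invoking the Clifford normalizer property \cite{gottesman1998heisenberg} to conclude that $P'$ is again a Pauli. (Note that the header of your proposal mislabels it as a proof of Theorem~\ref{thm:random_small_depth}; the content is clearly for Lemma~\ref{lem:teleport_clifford}.)
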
 
	
	\subsection{Spacetime conversion for random circuits} \label{app:add_proofs}
	
	Here, we provide detailed proof of Theorem 3, which shows the spacetime conversion for random circuits. First, we introduce the concepts of projected ensembles and construct a random gate teleportation protocol. Then, we prove the spacetime conversion for random circuits using this protocol.
	
	\subsubsection{Projected ensemble}
	Let us define the projected ensemble of a state ensemble after measuring a subsystem. 
	\begin{definition}[Projected ensemble of a state ensemble]
		For a state ensemble $\mathcal{S} = \{p_i, \ket{\psi_i}_{AB}\}$ on systems $A$ and $B$, the projected ensemble of $\mathcal{S}$ on subsystem $B$ after measuring subsystem $A$ in the basis $\{\ket{j_A}\}$ is denoted as 
		\begin{equation}
			\{p_{i}q_{ij}, \ket{\psi_{ij}} \},
		\end{equation} 
		where $q_{ij} = |(\langle j_A | \otimes I_B) \ket{\psi}|^2$ represents the probability of obtaining measurement result $j_A$ and  $\ket{\psi_{ij}} = \frac{(\langle j_A | \otimes I_B) \ket{\psi}}{\sqrt{q_{ij}}}$ represents the projected state on subsystem $A$. 
	\end{definition}
	
	Recall that $\ket{U,V}_{AB} = (U \otimes V) \ket{\phi^n}_{AB}$, where $A$ and $B$ are $n$-qubit quantum systems that are maximally entangled and $U,V$ are $n$-qubit unitaries acting on $A$ and $B$, respectively. We define the state ensembles by applying local random circuits on one side of the EPR pairs, i.e., the Choi states of local random circuits.
	
	\begin{definition}[Choi states of local random circuits]
		Let $A$ and $B$ be two $n$-qubit systems. The state ensemble $\mathcal{E}_{n,t}$ is defined as the set of states generated by applying a local random circuit $U \in \mathcal{U}_{n,t}$ to the subsystem $B$ of the EPR pairs $\ket{\phi^n}_{AB}$.
		\begin{equation}
			\mathcal{E}_{n,t} = \{\ket{I,U}_{AB} : U_B \in \mathcal{U}_{n,t}\}.
		\end{equation}
		This ensemble follows a probability distribution determined by $\mathcal{U}_{n,t}$.
	\end{definition}

	The projected ensemble of $\mathcal{E}_{n,t}$ on subsystem $B$ through computational-basis measurement on subsystem $A$ yields the state ensemble $\mathcal{S}_{n,t}$, a consequence of the local unitary invariance property of Haar measure on $SU(4)$. 
	
	\begin{lemma}\label{lem:proj_are_random}
		Consider the state ensembles $\mathcal{E}_{n,t}$ on systems $A$ and $B$. The projected ensembles of $\mathcal{E}_{n,t}$ on system $B$ through computational-basis measurement on subsystem $A$ is the state ensemble $\mathcal{S}_{n,t}$.
	\end{lemma}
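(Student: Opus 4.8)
The plan is to unfold the definition of the projected ensemble directly and then invoke the right-invariance of the Haar measure on $SU(4)$ to identify the resulting ensemble with $\mathcal{S}_{n,t}$. First I would fix a circuit $U \in \mathcal{U}_{n,t}$ together with its Choi state $\ket{\mathbb{I},U}_{AB} = (\mathbb{I}_A \otimes U_B)\ket{\phi^n}_{AB}$ and perform the computational-basis measurement on $A$. Using the elementary identity $(\bra{\mathbf{j}}_A \otimes \mathbb{I}_B)\ket{\phi^n}_{AB} = 2^{-n/2}\ket{\mathbf{j}}_B$, which follows pair by pair from $\ket{\phi_{00}} = \tfrac{1}{\sqrt2}(\ket{00}+\ket{11})$, the unnormalized post-measurement state on $B$ conditioned on outcome $\mathbf{j} \in \{0,1\}^n$ is $2^{-n/2}U_B\ket{\mathbf{j}}_B$. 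Since $U_B$ is unitary this has squared norm $2^{-n}$, so the outcome distribution is uniform and \emph{independent of $U$}, and the normalized projected state on $B$ is exactly $U_B\ket{\mathbf{j}}_B$.

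The second step is to absorb $\ket{\mathbf{j}} = X^{\mathbf{j}}\ket{0}^{\otimes n}$, with $X^{\mathbf{j}} = X^{j_1}\otimes\cdots\otimes X^{j_n}$, into the first layer of the random circuit $U_B$. For each two-qubit gate $U^{(1)}_{a,a+1}$ of that layer the operator $X^{j_a}\otimes X^{j_{a+1}}$ has determinant $1$, hence lies in $SU(4)$, so by right-invariance of the Haar measure on $SU(4)$ the product $U^{(1)}_{a,a+1}(X^{j_a}\otimes X^{j_{a+1}})$ is again Haar-distributed and independent of all the other gates; a qubit not touched by the first layer has its $X^{j}$ pushed instead into the first gate of a later layer acting on it. Consequently $U_B X^{\mathbf{j}}$ has, for every fixed $\mathbf{j}$, the same distribution as a circuit drawn from $\mathcal{U}_{n,t}$, and therefore the projected state $U_B\ket{\mathbf{j}}_B = U_B X^{\mathbf{j}}\ket{0}^{\otimes n}$ is distributed exactly as $\mathcal{S}_{n,t}$.

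Finally I would assemble the joint statement: the projected ensemble of $\mathcal{E}_{n,t}$ assigns to the pair $(U,\mathbf{j})$ the weight $p_U\cdot 2^{-n}$ and the state $U_B X^{\mathbf{j}}\ket{0}^{\otimes n}$; summing over the uniform, circuit-independent outcome $\mathbf{j}$ and using the per-$\mathbf{j}$ identification from the previous step shows the marginal state distribution is $\mathcal{S}_{n,t}$ with the correct weights, which is precisely Lemma \ref{lem:proj_are_random}. The only delicate point is the bookkeeping in the second step — verifying that Pauli-$X$ strings commute into Haar two-qubit gates without changing the distribution (the $SU(4)$ determinant check, plus the observation that each qubit is covered by some gate in the first one or two layers) — while the first step is what guarantees the outcome randomness and the gate randomness decouple cleanly, so that averaging over $\mathbf{j}$ causes no trouble.
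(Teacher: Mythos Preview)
Your proposal is correct and follows essentially the same route as the paper: compute the post-measurement state as $U_B\ket{\mathbf{j}}_B$ with uniform $\mathbf{j}$ (the paper phrases this as ``the reduced density matrix on $A$ is maximally mixed''), then absorb $X^{\mathbf{j}}$ into the Haar gates of the first layers via the local unitary invariance of the Haar measure on $SU(4)$. You have simply spelled out in more detail what the paper compresses into the phrase ``local unitary invariant property of Haar random distribution on $SU(4)$,'' including the determinant check and the handling of boundary qubits not hit by the first layer.
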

	
	\begin{proof}
		For any state $\ket{\psi_{AB}} = \ket{I,U_B}_{AB}$, the reduced density matrix on subsystem $A$ is maximally mixed. Consequently, the measurement result $j$ on subsystem $A$ is uniformly distributed over $\{0,1\}^n$, and the corresponding post-measurement state on subsystem $B$ is $U_B \ket{j}_B$. Hence, the projected state ensemble is given by
		\begin{equation}
			\{U_B\ket{j}_B : U_B \in \mathcal{U}_{n,t}, j \in \{0,1\}^n\},
		\end{equation} 
		where the probability of $U_B$ is determined by $\mathcal{U}_{n,t}$, and $j$ is uniformly distributed over $\{0,1\}^n$. This state ensemble is exactly $\mathcal{S}_{n,t}$ due to the local unitary invariance property of Haar random distribution on $SU(4)$. 
	\end{proof}
	
	\subsubsection{Random gate teleportation}
	We show that two Choi states can be linked via Bell state measurements, resulting in Choi states of random circuits with increased circuit depth. This is achieved by teleporting the random circuits on one subsystem to another, a process we term random gate teleportation. This method is depicted in Fig.~\ref{fig:randomness_teleportation} and stated in the following lemma.
	
	\begin{figure}[!h]
		\centering
		\begin{tikzpicture}[
            scale=1,
            Lline/.style={-,black,thick, rounded corners=2},
            ]
        \coordinate (T) at (0,0);

        \draw[Lline] (T) --++ (0, -1)node[draw=black, rectangle, anchor=north, minimum width=1cm, minimum height=1cm, rounded corners=0.1cm, fill=yellow!50] (VD) {$V_D$};
        \draw[Lline] (VD.south)--++ (0,-1) --++ (-1,0) node (tmp) {} -- (tmp |- T)  --++(-1,0) --++ (0,-.2) node[draw=black, rectangle, anchor=north, minimum width=1cm, minimum height=0.5cm, rounded corners=0.1cm, fill=red!50] (XZ) {$P$};
        \draw[Lline] (XZ.south) -- (XZ |- tmp) --++ (-1,0) --++ (0,1) node[draw=black, rectangle, anchor=south, minimum width=1cm, minimum height=1cm, rounded corners=0.1cm, fill=yellow!50] (UB) {$U_B$};
        \draw[Lline] (UB.north) -- (UB |- T);
        \node (A) at ($(T) + (1,-1.4)$) {=};

        \coordinate (T2) at ($(T) + (3,0)$);
        \draw[Lline] (T2) --++ (0, -.1) node[draw=black, rectangle, anchor=north, minimum width=1cm, minimum height=1cm, rounded corners=0.1cm, fill=yellow!50] (VD2) {$V_D$};
        \draw[Lline] (VD2.south) --++ (0, -0.1) node[draw=black, rectangle, anchor=north, minimum width=1cm, minimum height=0.5cm, rounded corners=0.1cm, fill=red!50] (XZ2) {$P$};
        \draw[Lline] (XZ2.south) --++ (0,-0.1) node[draw=black, rectangle, anchor=north, minimum width=1cm, minimum height=1cm, rounded corners=0.1cm, fill=yellow!50] (UB2) {$U_B^T$};
        \draw[Lline] (UB2.south) -- (UB2 |- tmp) --++ (-1,0) node (tmp2) {} -- (tmp2 |- T2);
        
        \node (B0) at ($(UB.south) + (0,-1.5)$) {$B$};
        \node (A0) at ($(B0) + (1,0)$) {$A$};
        \node (C0) at ($(B0) + (2,0)$) {$C$};
        \node (D0) at ($(B0) + (3,0)$) {$D$};
        \node (D01) at (UB2 |- B0) {$D$};
        \node (B01) at ($(D01) + (-1,0)$) {$B$};
    \end{tikzpicture}
		\caption{Random gate teleportation. Systems $A$ and $B$ consist of $n$-qubit systems that are maximally entangled, with a local random circuit $U_B$ of depth $d_1$ applied to system $B$. Systems $C$ and $D$ share a similar configuration, with a local random circuit $V_D$ of depth $d_2$ on system $D$. The figure on the left-hand side shows the projected ensemble after performing Bell state measurements on subsystems $AC$, where $P$ is the Pauli error distributed uniformly over $\{I,X,Y,Z\}^{\otimes n}$. The projected ensemble is equivalent to the right-hand side, which are Choi states of random circuits with increased circuit depth.}
		\label{fig:randomness_teleportation}
	\end{figure}
	
	\begin{lemma}[Random gate teleportation]\label{lem:teleport_random_gate}
		Given state ensembles $\mathcal{E}_{n,d_1}$ on systems $AB$ and $\mathcal{E}_{n,d_2}$ on systems $CD$, where $A,B,C,$ and $D$ represent $n$-qubit quantum systems, through Bell state measurements on subsystems $AC$, the resulting projected ensemble on subsystems $BD$ is $\mathcal{E}_{n,d'}$ with $d' = d_1 + d_2 - 1$ or $d' = d_1 + d_2$.
	\end{lemma}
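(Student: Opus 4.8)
The plan is to compute the projected ensemble on $BD$ explicitly via the tensor‑network identities for maximally entangled states (Fig.~\ref{fig:EPR}), collapsing everything to a single effective unitary $\tilde V$ acting on the output EPR pairs, and then to show that $\tilde V$ is distributed \emph{exactly} as a depth‑$d'$ local random circuit using a few invariance properties of the Haar measure on $SU(4)$.

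First I would set up the post‑measurement state. Write $\ket{\psi}_{ABCD}=\ket{\mathbb I,U_B}_{AB}\otimes\ket{\mathbb I,V_D}_{CD}$ with $U_B\sim\cU_{n,d_1}$ and $V_D\sim\cU_{n,d_2}$ independent. Contracting the Bell bra $\bra{X^{\ba}Z^{\bb},\mathbb I}_{AC}$ against $\ket{\psi}$ using Eq.~\eqref{eq:post_measurement} — equivalently, applying Lemma~\ref{lem:teleport_pauli_error} with input state $\ket{\mathbb I,V_D}_{CD}$ — and then moving the unitaries to one leg via Eq.~\eqref{eq:move_unitary_EPR}, the unnormalized outcome‑$(\ba,\bb)$ state on $BD$ is $\ket{\mathbb I,\tilde V}_{BD}$ with
\begin{equation}
	\tilde V \;=\; V_D\, Z^{\bb} X^{\ba}\, U_B^{T}.
\end{equation}
Since $\|\ket{\mathbb I,\tilde V}_{BD}\|^2 = 2^n$ is independent of $\ba,\bb,U_B,V_D$ and $\|\ket{\psi}_{ABCD}\|^2 = 4^n$, each of the $4^n$ outcomes occurs with probability $4^{-n}$, independently of the gates; hence $\ba,\bb$ are uniform on $\{0,1\}^n$ and independent of $U_B,V_D$, and the projected ensemble on $BD$ is exactly the law of the normalized Choi state $\ket{\mathbb I,\tilde V}_{BD}/\sqrt{2^n}$ with $U_B,V_D,\ba,\bb$ jointly independent as above. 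By the definition of $\cE_{n,d'}$ it then suffices to prove $\tilde V\sim\cU_{n,d'}$.

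For that I would use three facts about the Haar measure on $SU(4)$. (i) It is transpose‑invariant: $U\mapsto U^{T}$ is a continuous anti‑automorphism of $SU(4)$, so it pushes Haar to Haar; hence $U_B^{T}$ is again a depth‑$d_1$ brick‑wall circuit, just with its layers in reversed order and each two‑qubit gate transposed (still i.i.d.\ Haar). (ii) It is left‑/right‑invariant, and each two‑qubit block of the tensor‑product Pauli $Z^{\bb}X^{\ba}$ has determinant $1$, so absorbing $Z^{\bb}X^{\ba}$ into the adjacent Haar layer (say the last layer of $U_B^{T}$) leaves that layer a fresh i.i.d.\ Haar layer on the same bonds. (iii) The product of two independent Haar gates on the same pair of qubits is Haar, so if the last layer of $U_B^{T}$ and the first layer of $V_D$ act on the same set of bonds they merge into one i.i.d.\ Haar layer. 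Composing $V_D$, the absorbed Pauli, and $U_B^{T}$, and tracking the brick‑wall pattern at the seam (Fig.~\ref{fig:randomness_teleportation}), then yields a depth‑$(d_1+d_2-1)$ circuit when the two interface layers share the same even/odd pattern and a depth‑$(d_1+d_2)$ circuit otherwise, i.e.\ $d'\in\{d_1+d_2-1,\,d_1+d_2\}$ as claimed.

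The only nontrivial point, and the main obstacle, is the \emph{exactness} in step (2): one must check that the teleportation‑induced random Pauli is absorbed precisely — including the determinant bookkeeping so the composite stays in $SU(4)$ rather than $U(4)$ — and that the two interface layers combine into a genuine i.i.d.\ Haar layer rather than merely an approximate design; the accompanying parity check at the seam is routine but is exactly what produces the $d_1+d_2-1$ versus $d_1+d_2$ dichotomy. Everything else is elementary tensor‑network manipulation and the normalization count above.
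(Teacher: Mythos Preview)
Your proposal is correct and follows essentially the same route as the paper: apply Lemma~\ref{lem:teleport_pauli_error} and Eq.~\eqref{eq:move_unitary_EPR} to obtain the post‑measurement state $\ket{\mathbb I,\,V_D Z^{\bb}X^{\ba}U_B^T}_{BD}$, then use transpose‑ and one‑sided translation‑invariance of the Haar measure on $SU(4)$ to recognize this as a Choi state of a depth‑$d'$ local random circuit with $d'\in\{d_1+d_2-1,d_1+d_2\}$. Your extra bookkeeping (the $\det=1$ check for the two‑qubit Pauli blocks and the explicit merging of coincident Haar layers) is sound and simply makes explicit what the paper leaves implicit; the only slip is the normalization count (with the paper's normalized $\ket{\phi^n}$, the states $\ket{\mathbb I,\tilde V}_{BD}$ and $\ket{\psi}_{ABCD}$ have unit norm), but your actual inference---that the post‑measurement norm is outcome‑independent, hence the $4^n$ Bell outcomes are uniform---is exactly the paper's ``maximally mixed reduced state'' argument.
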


	\begin{proof}
		Consider any states $\ket{\psi}_{AB} = \ket{I,U_B}_{AB}$ and $\ket{\varphi}_{CD} = \ket{I,V_D}_{CD}$. The reduced density matrix of the joint system $\ket{\psi}_{AB} \otimes \ket{\varphi}_{CD}$, restricted to subsystem $AD$, is maximally mixed. As a result, the outcomes $\mathbf{a}$ and $\mathbf{b}$ from the Bell state measurements on these subsystems are uniformly distributed over $\{0,1\}^n$. 
		
		By applying Lemma \ref{lem:teleport_pauli_error}, the post-measurement state can be expressed as:
		\begin{equation}
			\begin{split}
				\ket{\psi_{\mathbf{a}\mathbf{b}}}_{BD} &= (U_B X^{\mathbf{a}} Z^{\mathbf{b}} \otimes I) \ket{I,V_D}_{BD} \\
				&= \ket{U_B X^{\mathbf{a}} Z^{\mathbf{b}}, V_D}_{BD} \\
				&= \ket{I, V_D Z^{\mathbf{b}} X^{\mathbf{a}} U_B^T}_{BD},
			\end{split}
		\end{equation}
		where the last equation is derived via Eq.~\eqref{eq:move_unitary_EPR}. Here, the unitaries $U_B$ and $V_D$ are sampled from $\mathcal{U}_{n,d_1}$ and $\mathcal{U}_{n,d_2}$, respectively. The bit-strings $\mathbf{a}$ and $\mathbf{b}$ are uniformly distributed over $\{0,1\}^n$.
		
		Due to the local unitary invariance and transpose-invariant properties of the Haar measure on $SU(4)$, $V_D Z^{\mathbf{b}} X^{\mathbf{a}}$ and $U_B^T$ are still local random circuits with unchanged circuit depth. Consequently, the composition $(V_D Z^{\mathbf{b}} X^{\mathbf{a}} U_B^T)$ results in a local random circuit of depth $d'$, where $d' = d_1 + d_2$ if the first layer of $V_D$ is staggered with the last layer of $U_B^T$, or $d' = d_1 + d_2 - 1$ if not.
	\end{proof}

	\subsubsection{Spacetime conversion for random circuits}
	
	By integrating Lemma \ref{lem:proj_are_random} with Lemma \ref{lem:teleport_random_gate}, we develop a method to construct state ensembles of local random circuits $\mathcal{S}_{n,t}$ using fewer layers of circuits via ancillary qubits and measurements. First, we show how to generate $\cE_{n,t}$ with reduced depth.
	
	\begin{lemma}\label{lem:random_EPR_small_depth}
		Given a circuit depth $t$, there exists a circuit depth $t_1 \ge t$ such that for any even integer $k = 2m$,  the state ensembles $\mathcal{E}_{n,t_1}$ can be generated within a total depth of $d = \left\lfloor \frac{t}{k} \right\rfloor + 4$. This is achieved by employing a random circuit on $kn$ qubits and performing Bell state measurements across $(k-2)n$ qubits.
	\end{lemma}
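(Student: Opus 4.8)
The plan is to build up $\mathcal{E}_{n,t_1}$ by iteratively gluing together small-depth Choi states using Lemma~\ref{lem:teleport_random_gate}. First I would note that Lemma~\ref{lem:teleport_random_gate} tells us that a single Bell-state measurement joining $\mathcal{E}_{n,d_1}$ on $AB$ with $\mathcal{E}_{n,d_2}$ on $CD$ produces $\mathcal{E}_{n,d'}$ on $BD$ with $d' \in \{d_1+d_2-1, d_1+d_2\}$; crucially, the realized value of $d'$ depends only on whether the staggered layer patterns happen to line up, which we can control by choosing the depths and orientations of the building blocks appropriately (or by padding a block with one extra trivial-alignment layer). So the strategy is: start with $k = 2m$ copies of a small Choi-state block, each generated by a depth-$d_0$ random circuit on $n$ qubits (so $kn$ qubits total, prepared in depth $d_0$ since the $k$ blocks are prepared in parallel), then perform Bell-state measurements in a sequence that chains the blocks together.

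Next I would set up the arithmetic of the chaining. Joining $k$ blocks of depth $d_0$ in a line via $k-1$ Bell-state measurements yields, in the best case, a Choi state of depth roughly $k d_0 - (k-1)$ (each join can cost up to one layer), and in any case at least $k d_0 - 2(k-1) \ge k d_0 - 2k = k(d_0-2)$, on $2n$ surviving qubits with $(k-2)n$ qubits consumed by measurement — wait, more carefully: each of the $k$ blocks has an $A$-side and $B$-side of $n$ qubits, and chaining consumes the internal interfaces, leaving $2n$ qubits, i.e.\ measuring $2(k-1)n$ — but the lemma statement says $(k-2)n$, so the intended construction must be a different topology. I would instead chain so that each block contributes one $n$-qubit "rail" that stays, matching the $kn \to$ final structure: concretely, think of the blocks laid out so that only two of the $2k$ $n$-qubit registers survive, giving $(2k - 2)n$ measured — still not $(k-2)n$. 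Reconciling this count is the first thing to pin down: the resolution is that the "$kn$ qubits" already counts each block as $n$ qubits (a block is a single-sided object on $n$ qubits, not $2n$), Bell measurements are performed between adjacent blocks pairing up $(k-1)$ interfaces... actually I believe the correct reading is that one prepares $k$ Choi states each on $2n$ qubits using $2kn$ qubits — no. I would resolve this by carefully re-deriving: to get $\mathcal{E}_{n,t_1}$ one needs a final object on $2n$ qubits ($A$ and $B$); using $k$ building blocks each a Choi state on $2n$ qubits ($2kn$ qubits, but the problem says $kn$, so blocks are on $n$ qubits each and two of them are "caps"). The cleanest consistent scheme: take $k$ registers of $n$ qubits; on register pairs prepare Choi states; chain with $k-1$ measurements; the surviving registers number $2$, hence measured $= (k-2)\cdot n$ if we only measure one qubit-register per join — meaning each join measures $n$ qubits, there are $k-1$ joins, total $(k-1)n$ measured. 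Since $(k-1)n$ vs $(k-2)n$ differ, the likely intent is that the first and last blocks are EPR-only (depth zero) and contribute the surviving rails, so joins number $k-1$ but... I will simply present the construction that yields exactly $(k-2)n$ measured qubits: prepare $\mathcal{E}_{n,d_0}$ blocks sharing rails cyclically/linearly so that exactly $k-2$ of the internal $n$-qubit registers are measured.

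Granting the combinatorial layout, the depth bookkeeping is the substantive content: with $k = 2m$ blocks each of depth $d_0 = \lfloor t/k \rfloor + c$ for a small constant $c$, and $k-2$ (or $k-1$) joins each costing at most one extra layer, the total pre-measurement depth is $d_0$ (blocks prepared in parallel) and the resulting Choi-state depth is at least $k d_0 - (\text{number of joins}) \ge 2m(\lfloor t/k\rfloor + c) - k \ge t + (2mc - 2k)$. Choosing $c = 4$ — which is why the target total depth is $\lfloor t/k\rfloor + 4$ — and using $k=2m$ gives $2mc - 2k = 8m - 4m = 4m > 0$, so the resulting depth $t_1 = k d_0 - (\text{joins}) \ge t$, and $t_1$ is a fixed (construction-determined) value, not adversarially chosen. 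Then I invoke Lemma~\ref{lem:teleport_random_gate} repeatedly, tracking after each join that the object is genuinely of the form $\ket{I, W}_{AB}$ with $W$ a local random circuit of the asserted depth — the transpose-invariance and local-unitary-invariance of the Haar measure on $SU(4)$ (already used in the proof of Lemma~\ref{lem:teleport_random_gate}) absorb the Pauli errors $P$ and the transposes at each step, so no error terms accumulate. The main obstacle I anticipate is exactly the layer-staggering accounting: ensuring that across $k-1$ consecutive joins the brickwork patterns can be made to align well enough that the $-1$-per-join losses are bounded by the claimed budget, and that the extra "$+4$" constant genuinely suffices uniformly in $k$ and $t$ — this requires either a careful choice of which blocks are "even-staggered" vs "odd-staggered" or a lemma that one padding layer per block fixes all alignments. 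Everything else (the probability distribution being correct, the qubit count, parallel preparation of blocks in depth $d_0$) follows routinely from the preliminaries.
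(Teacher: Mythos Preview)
Your high-level plan---chain small-depth Choi states via Lemma~\ref{lem:teleport_random_gate}---is the paper's plan, but the qubit-count puzzle you never resolve is exactly where the missing idea sits. The paper does \emph{not} use $k$ Choi-state blocks; it uses $m = k/2$ blocks of $2n$ qubits each (hence $kn$ qubits total). The trick is that within each block, after preparing the EPR pairs (one layer), one applies independent depth-$d_2$ random circuits to \emph{both} sides of the maximally entangled state, with $d_2 = \lfloor t/k \rfloor + 2$: this sends $\ket{\phi^n}$ to $\ket{U_1, U_2} = \ket{I, U_2 U_1^T}$, a sample from $\mathcal{E}_{n,d_3}$ with $d_3 \ge 2d_2 - 1$. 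Each block thus delivers roughly \emph{twice} its circuit depth in Choi-state depth. Then $m-1$ Bell-state-measurement joins (each consuming $2n$ qubits, giving $(m-1)\cdot 2n = (k-2)n$ qubits measured) chain the $m$ blocks into $\mathcal{E}_{n,t_1}$ with $t_1 \ge m d_3 - m \ge k(d_2 - 1) \ge t$. The total depth is $1 + d_2 + 1 = \lfloor t/k \rfloor + 4$: one layer for EPR preparation, $d_2$ layers of random gates, one layer for all the Bell measurements done in parallel.

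Your single-sided scheme with $k$ blocks cannot be made to match both the $kn$-qubit budget and the $(k-2)n$ measurement count: a Choi state already occupies $2n$ qubits, so $k$ of them would need $2kn$. The two-sided trick halves the block count to $m$ and makes every number in the statement fall out directly. It also dissolves your staggering worry: there are only $m-1$ joins, each costing at most one layer of Choi-depth, and the slack $k(d_2-1) - t \ge 1$ already absorbs them without any per-block alignment padding.
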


	\begin{proof}
		We partition $2mn$ qubits into $m$ blocks, each containing $2n$ qubits. On each pair of qubits within the blocks, we prepare EPR pairs and apply local random circuits in $\mathcal{U}_{n,d_2}$ on each side of EPR pairs separately, where $d_2 \ge \left\lfloor \frac{t}{2m} \right\rfloor + 2$. The state $\ket{\phi^n}$ within each block evolves to $\ket{U_1,U_2} = \ket{I,U_2U_1^T}$, with $U_1$ and $U_2$ drawn from $\mathcal{U}_{n,d_2}$. This state represents a member of the ensemble $\mathcal{E}_{n,d_3}$, where $d_3 \ge 2d_2 - 1$.
		
		By performing Bell state measurements to iteratively merge these blocks, we leverage Lemma \ref{lem:teleport_random_gate} to obtain a final state from the ensemble $\mathcal{E}_{n,t_1}$ in the last block, with $t_1 \ge d_3m - m \ge t$. These Bell state measurements can be performed simultaneously in a single layer. Therefore, the total circuit depth is $d = 1 + d_2 + 1 = \left\lfloor \frac{t}{2m} \right\rfloor + 4$. 
	\end{proof}
	
	Then, we can prove Theorem 3 in the main text, which shows a spacetime conversion for random circuits. 
	
	\begin{proof}[Proof of Theorem 3]
		For the first claim, when $k = 2m$, we apply Lemma \ref{lem:random_EPR_small_depth} to generate $\mathcal{E}_{n,t_1}$ on $kn$ qubits, where $t_1 \ge t$. Subsequently, a computational measurement is performed on one side of $\mathcal{E}_{n,t_1}$. According to Lemma \ref{lem:proj_are_random}, the projected ensemble is $\mathcal{S}_{n,t_1}$.
		
		When $k = 2m + 1$, we set $d_2 = \lfloor \frac{t}{k} \rfloor + 2$. Following the protocol in Lemma \ref{lem:random_EPR_small_depth}, we generate $\mathcal{E}_{n,d_3}$ with a depth of $d = d_2 + 2$ using $2mn$ qubits, where $d_3 \ge (2d_2 - 2)m$. Concurrently, we generate $\mathcal{S}_{n,d_2 + 1}$ on the remaining $n$ qubits. Bell state measurements are then performed on half of $\mathcal{E}_{n,d_3}$ and $\mathcal{S}_{n, d_2 + 1}$. According to Lemma \ref{lem:teleport_random_gate}, the projected ensemble is $\mathcal{S}_{n,t_1}$, with $t_1 \ge d_2 + d_3 \ge t$. These Bell state measurements can be executed simultaneously with the previous measurements. The total depth is $\lfloor \frac{t}{k} \rfloor + 4$.
		
		This procedure can be adapted to prove the second claim by applying a local random circuit of depth $d_2 + 1$ on the $n$-qubit state $\ket{\phi}$ instead of generating $\mathcal{S}_{n,d_2 + 1}$. The other steps are consistent with the proof of the first claim. 
		
		Finally, the circuit depth $t_1$ can be chosen equal to $t$ by appropriately arranging the random circuits, which proves the theorem.
	\end{proof}

	\subsection{Spacetime conversion for Clifford circuits}
	
	Similar to Theorem 3, a spacetime conversion for implementing Clifford circuits can be established by utilizing the Clifford gate teleportation protocol in Lemma \ref{lem:teleport_clifford}. We summarized this in the following theorem.
	
	\begin{theorem}	[Spacetime conversion for Clifford circuits]\label{thm:clifford_volume}
		Given an $n$-qubit Clifford circuit $C$ with circuit depth $t$, for an integer $k \ge 2$, quantum circuits on $kn$ qubits with a total depth $d = \left\lfloor \frac{t}{k} \right\rfloor + 4$ is sufficient to: 
		\begin{enumerate}
			\item {Prepare the state $C\ket{0}^{\otimes n}$. }
			\item {Apply the Clifford circuit $C$ to any input state $\ket{\phi}$ when $k$ is an odd number. }
		\end{enumerate} 
	\end{theorem}
	
	To establish this theorem, we first prove the following lemma.
	\begin{lemma}\label{lem:clifford}
		Consider an $n$-qubit Clifford circuit $C = C_{2m} C_{2m-1} \cdots C_1$, where each component $C_i$ is a Clifford circuit with a depth no greater than $d$. Using $(2m-2)n$ ancillary qubits, we can prepare the state $\ket{I, C}_{AB}$ with a total circuit depth $d + 2$, up to a Pauli error $P$ on subsystem $B$. Consequently, the output state is $\ket{I, PC}_{AB}$, and the Pauli error $P$ is efficiently calculable.
	\end{lemma}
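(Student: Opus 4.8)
The plan is to adapt the proof of Lemma~\ref{lem:random_EPR_small_depth} to the Clifford setting: the one structural change is that the Pauli byproducts of the teleportation steps, which in the random case get absorbed into Haar-random gates, must now instead be commuted through the remaining Clifford layers and collected into a single, classically trackable Pauli. Concretely, I would first pair the $2m$ pieces into $m$ blocks. Partition the $2mn$ qubits into $m$ groups of $2n$ qubits; within group $j$ prepare $n$ EPR pairs $\ket{\phi^n}_{L_j R_j}$ across its two $n$-qubit halves $L_j,R_j$ (one layer of two-qubit gates), then in parallel apply a depth-$\le d$ Clifford on $L_j$ and another on $R_j$, chosen (one of them a transposed piece) so that by Eq.~\eqref{eq:move_unitary_EPR} the block becomes the Choi state $\ket{\bbI,M_j}_{L_jR_j}$ of a Clifford $M_j$ equal, up to transposition, to a product of two consecutive pieces; since the transpose of a Clifford circuit is a Clifford circuit of the same depth, each block is prepared in depth $1+d$.

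Next I would glue the $m$ blocks with Bell state measurements on the $m-1$ internal links. As in the proof of Lemma~\ref{lem:random_EPR_small_depth}, these act on pairwise-disjoint $n$-qubit register pairs, so they all fit into one time step. Iterating Lemma~\ref{lem:teleport_clifford} (equivalently the mechanism of Lemma~\ref{lem:teleport_random_gate}), each link measurement composes the adjacent blocks' Cliffords and deposits an outcome-dependent Pauli $X^{\ba}Z^{\bb}$; pushing all $m-1$ byproducts toward the surviving output register via the identity $VP=(VPV^{\dagger})V$ leaves exactly $\ket{\bbI,PC'}_{AB}$ on the two untouched registers, where $C'$ is the ordered composition of the $M_j$'s or its transpose. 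Fixing the transposed/untransposed assignment of pieces to the two sides of each block and the orientation of the chain so that $C'=C$, and obtaining $P$ by conjugating each recorded byproduct through the intervening Clifford pieces and multiplying (all in time $\poly(n,m)$ via the stabilizer formalism), finishes the construction. The total depth is $1$ (EPR layer) $+\,d$ (parallel Clifford layers) $+\,1$ (Bell measurements) $=d+2$, and $2mn-2n=(2m-2)n$ qubits serve as ancillas.

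I expect the main obstacle to be purely bookkeeping: tracking the transposes introduced by each use of Eq.~\eqref{eq:move_unitary_EPR} so that the net unitary on $B$ is precisely $C$ rather than a transposed or reordered variant, and verifying that the $m-1$ outcome-dependent Paulis, once propagated through the interleaved Clifford pieces, genuinely collapse to a single Pauli $P$ computable in $\poly(n,m)$ time. Neither is conceptually harder than what Lemmas~\ref{lem:teleport_clifford} and~\ref{lem:teleport_random_gate} already provide, but getting all the indices consistent is where the care is needed.
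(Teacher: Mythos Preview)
Your proposal is correct and follows essentially the same approach as the paper: partition into $m$ blocks, apply $C_{2i-1}^T\otimes C_{2i}$ to the two halves of the $i$-th block's EPR pairs to obtain $\ket{\bbI,C_{2i}C_{2i-1}}$, then perform all $m-1$ Bell measurements in parallel and commute the resulting Paulis to the output via the Clifford conjugation rule. The only difference is that the paper spells out the block assignment explicitly rather than leaving it as bookkeeping, but the structure, depth count, and ancilla count are identical.
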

	
	\begin{proof}
		We divide the $2mn$ qubits into $m$ blocks, each containing $2n$ qubits, and prepare EPR pairs within each block. For each $i$-th block, we apply the circuits $C_{2i-1}^T \otimes C_{2i}$ to the respective sides of the EPR pairs. According to Eq.~\eqref{eq:move_unitary_EPR}, the state $\ket{\phi^n}$ in the $i$-{th} block evolves to $\ket{C_{2i-1}^T, C_{2i}} = \ket{I, C_{2i} C_{2i-1}}$.
		
		We continue this process, merging the outputs of consecutive blocks using Bell state measurements as described in Lemma \ref{lem:teleport_clifford}. Specifically, for the states $\ket{I, C_2 C_1}$ and $\ket{I, C_4 C_3}$, we apply the Bell state measurement to obtain $\ket{I, P_2 C_4 C_3 C_2 C_1}$, where $P_2$ is a Pauli error. Repeating this for all blocks, we obtain:
		\begin{equation}
			\ket{I, P_m C_{2m} C_{2m-1} P_{m-1} C_{2m-2} C_{2m-3} \cdots C_1} = \ket{I, P C_{2m} C_{2m-1} C_{2m-2} C_{2m-3} \cdots C_1} = \ket{I, PC}.
		\end{equation}
		
		The Bell state measurements are performed simultaneously in a single layer, ensuring the overall circuit depth remains at $d + 2$, where the additional two layers account for the preparation of EPR pairs and the Bell state measurement. The Pauli error $P$ can be efficiently calculated in the Heisenberg picture, as described in \cite{gottesman1998heisenberg}.
	\end{proof}

	Now, we can prove Theorem \ref{thm:clifford_volume}, which shows a spacetime conversion for implementing Clifford circuits. 
	
	\begin{proof}[Proof of Theorem \ref{thm:clifford_volume}]
		Given $k$, define $d_2 = \left\lfloor \frac{t}{k} \right\rfloor + 2$. Decompose $C$ as $C = C_{k}C_{k-1}\cdots C_1$, where each $C_i$ represents a Clifford circuit of depth no greater than $d_2$. For the first claim, when $k = 2m$, Lemma \ref{lem:clifford} enables the preparation of $\ket{I, PC}_{AB}$ within a depth of $d_2 + 2$. A computational basis measurement on subsystem $A$ produces a result $\mathbf{a} \in \{0,1\}^n$, leading to
		\begin{equation}
			\ket{\varphi} = PC \ket{\mathbf{a}} = PC X^{\mathbf{a}} \ket{0}^{\otimes n} = PP' C \ket{0}^{\otimes n},
		\end{equation}
		where $P' = C X^{\mathbf{a}} C^{\dagger}$ is a Pauli string. Then,  the state $C\ket{0}^{\otimes n}$ can be obtained by applying $P'P$ to $\ket{\varphi}$. 
		
		When $k = 2m + 1$, denote $C'=C_{k-1}C_{k-2}\cdots C_1$. One can prepare $\ket{I,PC'}$
		and $C_k\ket{0}^{\otimes n}$ within a depth of $d_2 + 2$. Then, perform Bell state measurements on these two states and obtain measurement results $\ba$ and $\bb$. The post-measurement state is
		\begin{equation}
			\ket{\varphi} = C_kP'PC' \ket{0}^{\otimes n} = P''C \ket{0}^{\otimes n},
		\end{equation}
		where $P' = X^{\ba}Z^{\bb}$ and $P'' = C_kP'PC_k^{\dagger}$. Then, the state $C\ket{0}^{\otimes n}$ can be obtained by applying $P''$ to $\ket{\varphi}$. This method directly applies to the second claim by applying $C_k$ to $\ket{\phi}$ instead of $\ket{0}^{\otimes n}$.
		
		The last Bell state measurement and Pauli error correction are executed simultaneously with previous measurements in the final layer, ensuring the total circuit depth is $d = d_2 + 2 = \left\lfloor \frac{t}{k} \right\rfloor + 4$.
	\end{proof}

	\section{Bounding embedded complexity by circuit volume}\label{app:bounding}
	
	Here we present the detailed proof of  Theorem 1. We first introduce the concepts of semi‑algebraic sets and accessible dimension, which serve as key tools in our analysis. Then, we proceed to prove the two parts of Theorem~1 separately.
	
	\subsection{Semi-algebraic sets}
	The notion of semi‑algebraic sets and their dimensions provides a powerful framework for characterizing the degrees of freedom in sets of quantum states and operations. This, in turn, can be used to derive lower bounds on circuit complexity. We begin with the formal definition:
	\begin{definition}[(Semi-)algebraic sets]
		A set $S \subseteq \bR^M$ is called a semi‑algebraic set if there exist sets of polynomial functions $\{f_j\}$ and $\{g_k\}$ such that
		\begin{equation}
			S = \left\{x \in \bR^M: f_j(x)=0, g_k(x) \le 0 \quad\text{for all }j,k \right\}.
		\end{equation}
		Moreover, if $\{g_k\} =\emptyset$, then $S$ is called an algebraic set.
	\end{definition}
	
	A useful method to determine if a set is semi-algebraic is through the Tarski--Seidenberg theorem, which states that  polynomial functions map semi-algebraic sets to semi-algebraic sets. Here, we say that $F$ is a polynomial function if each entry of $F(x)$ is a polynomial of entries of $x$. 
	
	\begin{fact}[Tarski--Seidenberg theorem]\label{fact:TS_principle}
		Let $F: \bR^{M_1} \rightarrow \bR^{M_2}$ be a polynomial function. If $S \subseteq \bR^{M_1}$ is semi‑algebraic, then the image $F(S) \subseteq \bR^{M_2}$ is also semi‑algebraic.
	\end{fact}

	We now show that the set of post‑measurement states considered in Theorem 1 forms a semi‑algebraic set. To define the set of post-measurement states formally, consider the transformation from a local quantum circuit on an $m$-qubit system, constructed of $d$ layers of 2-qubit gates, to an $n$-qubit post-measurement state. This state results from measuring $m-n$ qubits and postselect the outcome $0^{m-n}$. The total number of gates in the circuit is 
	\begin{equation}
		V = \lfloor m / 2\rfloor d,
	\end{equation}
	and the circuit consists of gates $U_1, U_2, \ldots, U_V$. The mapping that takes $V$ 2-qubit gates as input and outputs the unnormalized post-measurement state can be written as:
	\begin{equation}\label{eq:post_select}
		\begin{split}
			G : SU(4)^{V} &\rightarrow \mathbb{R}^{M}, \\
			G(U_1,U_2,\cdots,U_V) = (\bra{0}^{\otimes (m-n)} &\otimes I_{2^n}) U_VU_{V-1}\cdots U_1 \ket{0}^{m}.
		\end{split}
	\end{equation}
	where $M = 2^{n+1}$ represents the degrees of freedom of unnormalized pure states. 
	Let $\cC$ denote the image of the map $G$, representing the set of unnormalized post‑measurement states. We now show that $\cC$ is a semi‑algebraic set. This follows directly from the Tarski–Seidenberg theorem.
	
	\begin{lemma}\label{lem:C_semi_algebraic}
		The set $\cC$ is semi-algebraic.
	\end{lemma}
	\begin{proof}
		The set $\mathcal{C}$ is the image of $SU(4)^V$ under the mapping $G$. The group $SU(4)$ consists of $4 \times 4$ unitary matrices with determinant one, which can be described by polynomial constraints:
		\begin{equation}
			UU^{\dagger} = I \quad \text{and} \quad \det(U) = 1. 
		\end{equation}
		Since these are polynomial equalities over $\mathbb{R}^{16}$ (identifying complex entries with pairs of real numbers), $SU(4)$ is an algebraic set. Consequently, $SU(4)^V$ is also algebraic.
		
		To invoke the Tarski–Seidenberg theorem (Fact~\ref{fact:TS_principle}), it suffices to show that $G$ is a polynomial map. Note that $U = U_V U_{V-1} \cdots U_1$ is a product of matrices from $SU(4)$, and thus each entry of $U$ is a polynomial in the entries of the $U_i$. The post‑measurement state $ (\bra{0}^{\otimes (m-n)} \otimes I_{2^n}) U\ket{0}^{\otimes m}$ is a subset of entries of $U$, and hence each of its entries is still a polynomial function of the entries of $U_1, \ldots, U_V$. Therefore, $G$ is a polynomial function, and the image $\cC = G(SU(4)^V)$ is semi‑algebraic.
	\end{proof}
	
	\subsection{Accessible dimension}
	
	We now show how to characterize the degrees of freedom in post‑measurement states by employing the concept of accessible dimension. Informally, the accessible dimension quantifies the number of independent directions in which the post‑measurement state $G(x)$ can be perturbed by infinitesimally perturbing the point $x = (U_1, U_2, \ldots, U_V) \in SU(4)^V$.
	
	To formalize this notion, we define the local perturbation map around a point $x = (U_1, U_2, \ldots, U_V)$ as follows:
	\begin{equation}
		\exp^V_x: (H_1, \ldots, H_V) \mapsto \left(\exp(iH_1) U_1, \ldots, \exp(iH_V) U_V\right),
	\end{equation}
	where each $H_i$ is a traceless Hermitian $4 \times 4$ matrix. They can be expanded in the Pauli basis as
	\begin{equation}
		H_i = \sum_{P \in \{I,X,Y,Z\}^{\otimes 2}, \, P \neq I} \lambda_{i,P} P.
	\end{equation}
	By definition, the map satisfies $\left.\exp^V_x\right|_0 = x$.

	We now compute the directional derivative of the composed map $G \circ \exp^V_x$ in the direction of each basis element $P$ of $H_i$. A small perturbation in $\lambda_{i,P}$ induces a first-order variation in the post‑measurement state given by
	\begin{equation}
		v_{x,i,P} \coloneqq \left.\frac{\partial}{\partial (i\lambda_{i,P})} G(\exp^V_x)\right|_{0}
		= (\bra{0}^{\otimes (m-n)} \otimes I_n)\, U_V \cdots U_{i+1} \, P \, U_i \cdots U_1 \ket{0}^{\otimes m}.
	\end{equation}
	The \emph{tangent space} $\mathcal{T}(x)$ at $x$ is defined to be the span of all such vectors:
	\begin{equation}
		\mathcal{T}(x) \coloneqq \mathrm{span}\left\{v_{x,i,P}\right\}_{i,P}.
	\end{equation}
	
	The accessible dimension is defined as the dimension of the tangent space $\mathcal{T}(x)$. It is implicitly dependent on the choice of the mapping $G$, which will be clear from context in the subsequent discussion.
	\begin{definition}[Accessible dimension]
		The accessible dimension of $x \in SU(4)^V$ is defined as $\dim \mathcal{T}(x)$.
	\end{definition}
	
	The following result states that the set of points in $SU(4)^V$ with maximal accessible dimension has unit measure.
	
	\begin{lemma}[Accessible dimension is maximal on a measure‑one subset]\label{lem:acc_dim_measure_1}
		Define $d_{\max} = \max_{x \in SU(4)^V} \dim \mathcal{T}(x)$. Then the set
		\begin{equation}
			R \coloneqq \left\{x \in SU(4)^V : \dim \mathcal{T}(x) = d_{\max} \right\}
		\end{equation}
		has measure one in $SU(4)^V$.
	\end{lemma}
	
	\begin{proof}
		Suppose there exists a point $x \in SU(4)^V$ such that $\dim \mathcal{T}(x) = d_{\max}$. For any $x' \in SU(4)^V$, the condition $\dim \mathcal{T}(x') < d_{\max}$ implies that all $d_{\max} \times d_{\max}$ minors of the matrix $(v_{x',i,P})_{i,P}$ vanish.
		Since each $v_{x',i,P}$ is a polynomial function of $x'$, each of these minors is a polynomial in the entries of $x'$. Hence, the set
		\begin{equation}
			R^c = \left\{x' \in SU(4)^V : \dim \mathcal{T}(x') < d_{\max} \right\}
		\end{equation}
		is an algebraic subset of $SU(4)^V$. Moreover, it is a proper subset of $SU(4)^V$, because it excludes at least one point $x$.  These two conditions together imply $R^c$ has measure zero, by the irreducibility property of the algebraic set $SU(4)^V$ (see Ref. \cite{Haferkamp_2022_linear} for a rigorous mathematical treatment).
		Consequently, $R$ has measure one.
	\end{proof}
	
	This property is important because it allows us to infer global dimensional properties from a \emph{single} local point. Furthermore, for each $x \in R$, there exists an open neighborhood $N_x \ni x$ such that $G(N_x)$ forms a manifold of dimension $d_{\max}$~\cite{Haferkamp_2022_linear}. Hence, the dimension of the semi‑algebraic set $\mathcal{C}$ is
	\begin{equation}
		\dim \cC  = d_{\max}.
	\end{equation}
	
	We will use a lower bound on the accessible dimension for local random circuits, corresponding to the case where the map $G$ acts with $m = n$.
	
	\begin{lemma}[Accessible dimension of local random circuits, adapted from Ref.~\cite{Haferkamp_2022_linear}] \label{lem:acc_dim_local_random_circuits}
		Consider the map from two‑qubit gates to a global unitary $U \in SU(2^n)$:
		\begin{equation}
			F_1: (U_1, U_2, \ldots, U_V) \mapsto \prod_{i=1}^V U_i,
		\end{equation}
		where $V = \lfloor n / 2 \rfloor d$, and each $U_i$ is a two‑qubit gate in a depth‑$d$ local random circuit on $n$ qubits. Then there exists a point $x \in SU(4)^V$ such that
		\begin{equation}
			\dim \mathcal{T}(x) \ge \min\left( \left\lfloor \frac{d}{n} \right\rfloor,\, 4^n \right).
		\end{equation}
		
		Furthermore, for the state‑generation map
		\begin{equation}
			F_2: (U_1, U_2, \ldots, U_V) \mapsto \left( \prod_{i=1}^V U_i \right) \ket{0}^{\otimes n},
		\end{equation}
		there exists a point $x \in SU(4)^V$ such that
		\begin{equation}
			\dim \mathcal{T}(x) \ge \min\left( \left\lfloor \frac{d}{n} \right\rfloor,\, 2^{n+1} - 1 \right).
		\end{equation}
	\end{lemma}
	
	\begin{proof}
		The proof follows from Ref.~\cite{Haferkamp_2022_linear}, reformulated in our notation. The key observation is that a depth‑$n$ local random circuit suffices to conjugate any Pauli operator to $Z_n$, the Pauli-$Z$ operator acting only on the final qubit. To see this, note that for any 2-qubit Pauli operators, there exists a two‑qubit Clifford gate mapping one to the other. Therefore, we can sequentially conjugate a general Pauli operator $P$ through the chain:
		\begin{equation}
			P \xrightarrow{C_{1,2}} P_{\ge2} \xrightarrow{C_{2,3}} P_{\ge3} \xrightarrow{} \cdots \xrightarrow{C_{n-1,n}} Z_n,
		\end{equation}
		where each $P_{\ge j}$ acts nontrivially only on qubits $\{j,j+1,\cdots,n\}$. This composition $C = C_{n-1,n} \cdots C_{1,2}$ conjugates $P$ to $Z_n = CPC^{\dagger}$ and can be implemented with a depth‑$n$ local random circuit.
		
		We now prove the first part concerning the mapping $F_1$. Let $x \in SU(4)^V$ be chosen such that the global unitary is
		\begin{equation}
			U = \prod_{i=1}^D C_i,
		\end{equation}
		where each $C_i$ is a Clifford unitary generated by a depth‑$n$ local random circuit, and $D = \left\lfloor \frac{d}{n} \right\rfloor$. For each $j = 1, \dots, D$, consider perturbing a two‑qubit gate $u_j$ between $C_j$ and $C_{j+1}$ in the direction of $Z_n$. This results in
		\begin{equation}
			v_{x,u_j,Z_n}  =  \left(\prod_{i=j+1}^D C_i\right)Z_n \left(\prod_{i=1}^j C_i\right) = U \left(\prod_{i=1}^j C_i\right)^{\dagger } Z_n \left(\prod_{i=1}^j C_i\right).
		\end{equation}
		Suppose $D \le 4^n$, and choose $D$ independent Pauli operators $\{P_1, \ldots, P_D\}$. By choosing each $C_j$ sequentially, we can ensure that
		\begin{equation}
			Z_n =  C_1 P_1 C_1^{\dagger} = C_2 C_1 P_2C_1^{\dagger} C_2^{\dagger} = \cdots = (\prod_{i=1}^DC_i) P_D (\prod_{i=1}^DC_i)^{\dagger}.
		\end{equation}
		Hence, we have
		\begin{equation}
			v_{x,u_j,Z_n} = U \left(\prod_{i=1}^j C_i\right)^{\dagger } Z_n \left(\prod_{i=1}^j C_i\right) = UP_j,
		\end{equation}
		and since the $P_j$ are linearly independent, the vectors $v_{x,u_j,Z_n}$ are also linearly independent. Hence,
		\begin{equation}
			\dim \mathcal{T}(x) \ge D.
		\end{equation}
		If $D > 4^n$, the dimension is upper‑bounded by the number of independent Pauli operators, so
		\begin{equation}
			\dim \mathcal{T}(x) \ge \min(D, 4^n),
		\end{equation}
		proving the first part.
		
		For the mapping $F_2$, a similar argument shows that
		\begin{equation}
			v_{x,u_j,Z_n}=  U P_j \ket{0}^{\otimes n}.
		\end{equation}
		By suitably choosing the $P_j$, we can ensure the vectors $P_j \ket{0}^{\otimes n}$ span different states in the computational basis (with additional phases) $\{(i)^\kappa\ket{x}\}_{\kappa \in \{0,1\}, x \in \{0,1\}^n}$, 
		which corresponds to applying $I$, $X$, $Y$, or $Z$ on the initial state. 
		Since a normalized $n$‑qubit quantum state has at most $2^{n+1} - 1$ real degrees of freedom, we obtain:
		\begin{equation}
			\dim \mathcal{T}(x) \ge \min(D, 2^{n+1} - 1),
		\end{equation}
		completing the proof.
	\end{proof}

	\subsection{Proof of Theorem 1: embedded complexity of projected states}
	We give here proof of Theorem 1, which states that the embedded complexity of a projected state can be lower-bounded by circuit volume. The main technique in proving this theorem is to analyze the accessible dimension of post-measurement states. This dimension intuitively represents the degrees of freedom within a semi-algebraic set. Firstly, we establish a lower bound on the accessible dimension of post-measurement states. Then, we demonstrate how to bound embedded complexity by accessible dimension. Combining these findings, we establish the theorem.
	
	The crucial observation is that random gate teleportation finds a point $x \in SU(4)^V$ for which the accessible dimension $\dim \mathcal{T}(x)$ is lower‑bounded by the circuit volume. Then, by Lemma~\ref{lem:acc_dim_measure_1}, we conclude that this lower bound holds on a measure‑one subset of $SU(4)^V$. In other words, the set of post‑measurement states is composed of high‑dimensional manifolds whose dimension is at least proportional to the circuit volume.
	
	\begin{lemma}[Lower bound on the accessible dimension]\label{lem:complexity_volume}
		For the map $G$, there exists a point $x \in SU(4)^V$ such that 
		\begin{equation}
			\dim \mathcal{T}(x) \ge \min(L, 2^{n+1}-1),
		\end{equation}
		where $L = \frac{md}{2n^2} - m(1 + \frac{1}{n} + \frac{1}{n^2}) -1$.
	\end{lemma}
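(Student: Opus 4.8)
\textbf{Proof proposal for Lemma \ref{lem:complexity_volume}.}

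The plan is to lower bound the accessible dimension $D$ of $\mathcal{C}$ by exhibiting a single point of $\mathcal{C}$ at which the differential of $G$ (or a smooth local parametrization of $\mathcal{C}$) has rank at least $\min(L, 2^{n+1}-1)$. Since accessible dimension is the maximal local manifold dimension over the smooth locus of the semi-algebraic set, it suffices to find one favorable choice of the gates $U_1,\dots,U_V$ for which the image is locally large. The key conceptual input is the spacetime conversion from Theorem \ref{thm:random_small_depth}: using the random gate teleportation construction, the $m$-qubit depth-$d$ circuit followed by post-selection can be arranged so that it simulates (for a suitable choice of the free gates) the preparation of an $n$-qubit state by a circuit in $\mathcal{U}_{n,t}$ with effective depth $t \approx \lfloor m/(2n)\rfloor d / n$ — roughly, $k = \lfloor m/n \rfloor$ copies of the subsystem each contribute depth $\approx d$, and teleportation concatenates them into depth $t \approx kd/2$ on $n$ qubits, up to the additive $O(1)$ and $O(k)$ overheads that produce the $-m(1+\tfrac1n+\tfrac1{n^2})-1$ correction in $L = \frac{md}{2n^2} - m(1+\tfrac1n+\tfrac1{n^2}) - 1$. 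Thus $\mathcal{C}$ contains, as a subset, (an affine image of) the set of $n$-qubit states producible by depth-$t$ local random circuits.

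The second ingredient is a known lower bound on the accessible dimension of the image of depth-$t$ local random circuits acting on $\ket{0}^{\otimes n}$: by the arguments used in the complexity-growth literature (Haferkamp et al.~\cite{Haferkamp_2022_linear}, Li et al.~\cite{li2022short}, and Suzuki et al.~\cite{suzuki2023quantum}), the accessible dimension of this set grows linearly in the number of gates until it saturates the ambient dimension $2^{n+1}-1$ (real dimension of normalized pure states, or $2^{n+1}-1$ for the relevant unnormalized/projective count here). Concretely, each two-qubit gate in a generic layer can be shown to contribute a bounded-but-positive number of independent tangent directions — one establishes this by choosing the gates layer-by-layer so that the accumulated Jacobian strictly increases, using that the light cone of the $n$-qubit output after depth $t$ covers all $n$ qubits and that $\mathrm{SU}(4)$-gates have enough parameters (15 each, which is where the overall $1/15$ in Theorem \ref{thm:complexity_volume} ultimately enters via the dimension-to-volume conversion in the next lemma). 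Feeding $t \gtrsim \frac{md}{2n^2} - \dots$ into this linear-growth bound yields $D \ge \min(L, 2^{n+1}-1)$.

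The execution order: (i) recall/instantiate the teleportation circuit of Theorem \ref{thm:random_small_depth} with $k \approx \lfloor m/n\rfloor$ to realize a depth-$t$ $n$-qubit random circuit as a specialization of $G$, carefully tracking the depth bookkeeping to get the stated $L$; (ii) observe that fixing the "structural" gates (EPR preparations, and the Pauli/Bell-measurement-induced corrections, which are absorbable) leaves the genuinely free gates of the $n$-qubit circuit as a smooth family whose image under $G$ coincides (up to an affine, hence dimension-preserving, map) with the depth-$t$ random-circuit state set; (iii) invoke/adapt the accessible-dimension lower bound for depth-$t$ local random circuits to conclude $D \ge \min(t', 2^{n+1}-1)$ for the relevant linear count $t'$; (iv) simplify constants to reach the claimed bound. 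The main obstacle I expect is step (i)–(ii): making the reduction rigorous at the level of \emph{dimensions} rather than merely at the level of \emph{which states are reachable} — one must verify that the teleportation construction can be cast as composing $G$ with a \emph{fixed} choice of some gates and a smooth (ideally affine or at least submersive) dependence on the remaining free gates, so that no dimension is lost when passing to the subfamily, and that the additive overheads are controlled exactly as in the statement of $L$. The accessible-dimension lower bound for plain depth-$t$ circuits (step iii) is essentially quotable from prior work, so the novelty and the work lie in the geometric reduction.
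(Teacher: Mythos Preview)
Your proposal is correct and follows essentially the same route as the paper: specialize the gates so that the teleportation construction of Theorem~\ref{thm:random_small_depth} (with $k=\lfloor m/n\rfloor$) embeds the depth-$t$ random-circuit state set into $\mathcal{C}$, then invoke the Haferkamp et al.\ accessible-dimension bound $\dim\mathcal{S}_{n,Ln}\ge\min(L,2^{n+1}-1)$. Your anticipated obstacle (ii) dissolves more simply than you fear---the paper notes that the post-selection probability is a \emph{constant} $c=2^{-(k-1)n}$ independent of the free gates, so one gets a genuine set inclusion $c\,\mathcal{S}_{n,t}\subseteq\mathcal{C}$ and accessible dimension is monotone under inclusion (no Jacobian analysis needed); the precise correction $-m(1+\tfrac1n+\tfrac1{n^2})$ comes not from ``$O(1)$ and $O(k)$ overheads'' but from the $2(n^2+n)$ layers needed to implement Bell-state preparation and measurement in the one-dimensional brickwork architecture (swapping qubit $i$ next to qubit $n+i$), which is then propagated through $t\ge k[d-(2n^2+2n+2)]$ and $L=\lfloor t/n\rfloor$.
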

	
	\begin{proof}
		We consider the 2-qubit gates acting on the first $kn$ qubits, where $k = \lfloor \frac{m}{n} \rfloor \ge \frac{m}{2n}$, and set other gates to identity. In Theorem 3, we demonstrate that a total circuit depth of $\lfloor t/k \rfloor + 4$ is sufficient to generate the state ensemble $\cS_{n,t}$ on the first $n$ qubits. In that proof, the number of layers of random circuits after Bell state preparation is
		\begin{equation}
			d_1 = \lfloor t/k \rfloor + 2,
		\end{equation}
		followed by Bell state measurement.
		
		To perform Bell state preparation and measurement in a one-dimensional local circuit, one initially prepares EPR pairs between qubits $1$ and $n+1$ using $n+1$ layers of gates. This involves swap operations to position qubits $1$ and $n+1$ adjacently, executing a 2-qubit gate, and restoring their original positions. By replicating this process for $n$ additional times, Bell states are prepared between qubits $i$ and $n+i$ for $1 \leq i \leq n$. Consequently, $n^2 + n$ layers are required to establish a maximally entangled state between qubits $1,2,\ldots,n$ and $n+1,n+2,\ldots,2n$. The preparation of the remaining EPR pairs can proceed in parallel. The same number of layers, $n^2 + n$, is adequate for Bell state measurement, leading to a total of $2(n^2 + n)$ layers for Bell state preparation and measurement.

		After using $2(n^2+n)$ layers for Bell state preparation and measurement, one utilizes $d_1 = d - 2(n^2+n)$ layers for the random circuits in the middle section. The state ensemble $\cS_{n,t}$ on the first $n$ qubits is contained in the set of post-measurement states, where 
		\begin{equation}
			t \ge k(d_1 - 2) \ge k[d - (2n^2+2n+2)].  
		\end{equation}
		
		It should be noted that the post-measurement state in the set $\cC$ remains unnormalized. Nevertheless, due to the properties of EPR pairs, the probability of obtaining $\ket{0}^{\otimes n}$ is consistently $c = 2^{-(k-1)n}$ (See Lemma \ref{lem:teleport_random_gate}). 
		
		The above argument shows the existence of a point $x \in SU(4)^V$ such that $G(x) \in c\mathcal{S}_{n,t}$. Moreover, perturbations to the two‑qubit unitaries situated between the Bell state preparations and Bell state measurements correspond to perturbations of $G(x)$ within the space $c\mathcal{S}_{n,t}$. 
		We can explicitly construct such a point $x$ by following the proof of Lemma~\ref{lem:acc_dim_local_random_circuits}, ensuring that
		\begin{equation}\label{eq:large_point_state}
			\dim \mathcal{T}(x) \ge \min(\lfloor t / n \rfloor ,2^{n+1}-1). 
		\end{equation} 
		Here, 
		\begin{equation} \label{eq:L_ineq}
			\begin{split}
				\lfloor t/n \rfloor    &\ge \left\lfloor \frac{k[d - (2n^2+2n+2)]}{n} \right \rfloor \\ 
				& \ge \frac{m[d-(2n^2+2n+2)]}{2n^2} - 1 \\
				& = \frac{md}{2n^2} - m(1 + \frac{1}{n} + \frac{1}{n^2}) - 1.
			\end{split}
		\end{equation}
	\end{proof}

	Moreover, we prove that the accessible dimension provides a lower bound for the embedded complexity.
	
	\begin{lemma}[Accessible dimension lower-bounds embedded complexity of states]\label{lem:acc_dim_bound_comp}
		Suppose the image of the mapping $G$ has a maximal accessible dimension $D$. Consider randomly selecting $V$ 2-qubit gates $U_1,U_2,\cdots,U_V$ from $SU(4)^V$. With unit probability, the post-measurement state $\ket{\phi} = G(U_1,U_2,\cdots,U_V)$ will satisfy  $\norm{\ket{\phi}} \neq 0$, and the embedded complexity of the normalized state $\ket{\psi} = \ket{\phi} / \norm{\ket{\phi}}$ is lower-bounded by:
		\begin{equation}
			C_{anc}(\ket{\psi}) \ge (D-1)/15. 
		\end{equation}
	\end{lemma}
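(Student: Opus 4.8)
The plan is to show that the embedded complexity lower bound follows from a dimension-counting argument: if a state $\ket{\psi}$ has small embedded complexity, then $\ket{\psi}\otimes\ket{0}^{\otimes(m-n)}$ (for the appropriate $m$) lies in a semi-algebraic set parametrized by few gates, hence of small dimension; since $\ket{\psi}$ is a generic element of the image $\cC$ of $G$, which has dimension $D$, we get $D$ is at most (a constant times) the embedded complexity. First I would set up the parametrization: an embedded-complexity-$V'$ circuit is a word $\Pi_{V'}U_{V'}\cdots\Pi_1 U_1$ acting on $\ket{0}^{\otimes m'}$ for some $m' \ge n$. The subtlety is that the ancilla size $m'$ and the pattern of which qubit pairs the gates and projectors act on are discrete data; for a fixed choice of these, the output (as a function of the $U_i \in SU(4)$) traces out a semi-algebraic set whose dimension is at most $\dim SU(4)\cdot V' = 15 V'$. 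Then I would note there are only finitely many discrete patterns for each fixed $V'$ — wait, $m'$ is unbounded — so I need the standard observation that any embedded-complexity-$V'$ circuit only ``uses'' at most $2V'$ qubits nontrivially (the projectors $\Pi_i=\mathbb I$ on unused qubits, so they can be dropped), hence WLOG $m' \le n + 2V'$, making the number of discrete patterns finite. Therefore the set of all states reachable with embedded complexity $\le V'$, call it $\mathcal{R}_{V'}$, is a finite union of semi-algebraic sets each of dimension $\le 15V'$, so $\dim \mathcal{R}_{V'} \le 15 V'$.

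Next I would connect this to the accessible dimension $D$ of $\cC$. The key point is that for a random input $(U_1,\dots,U_V)\in SU(4)^V$, the output $\ket{\phi}=G(U_1,\dots,U_V)$ is, with probability one, a point at which the image $\cC$ is ``full-dimensional'' in the sense used in the accessible-dimension literature (e.g. \cite{Haferkamp_2022_linear, suzuki2023quantum}): generic points of a semi-algebraic set lie on the top-dimensional stratum, and the preimage $G^{-1}$ of the lower-dimensional part has measure zero. In particular, with unit probability $\ket{\phi}\neq 0$ (the zero locus is a proper subvariety), so the normalized state $\ket{\psi}=\ket{\phi}/\norm{\ket{\phi}}$ is well-defined; and moreover $\ket{\psi}$ is a generic point of the normalized image, which still has dimension $\ge D-1$ (projectivizing / normalizing drops dimension by at most one). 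Then the argument is: if $C_{anc}(\ket{\psi}) = V'$, then $\ket{\psi}$ (or rather $\ket{\psi}\otimes\ket{0}^{\otimes(m'-n)}$, but the last factor is fixed so it doesn't change dimensions) lies in $\mathcal{R}_{V'}$; since this holds for the \emph{generic} point of a $(D-1)$-dimensional set, we need $\dim\mathcal{R}_{V'}\ge D-1$ — but actually I should be more careful: I want that \emph{every} point in some full-measure subset has complexity $\ge (D-1)/15$, which requires that the set of states with complexity $< (D-1)/15$ has dimension $< D-1$ and hence pulls back to a measure-zero set under $G$. Combining $\dim\mathcal{R}_{V'}\le 15V'$ with the genericity gives $15 C_{anc}(\ket{\psi}) \ge D-1$, i.e. $C_{anc}(\ket{\psi})\ge (D-1)/15$, with unit probability over the choice of $U_1,\dots,U_V$.

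The main obstacle I anticipate is making the genericity/measure-zero pullback argument fully rigorous: one must argue that the set $B = \{\,\ket{\psi}\in\cC_{\text{norm}} : C_{anc}(\ket{\psi}) < (D-1)/15\,\}$ is contained in a semi-algebraic set of dimension $< D-1$ (this uses the union bound over finitely many patterns and the $15V'$ bound), that its normalized-preimage analog $\widetilde{B}\subseteq\cC$ therefore has dimension $< D$, and finally that $G^{-1}(\widetilde{B})$ has Haar measure zero in $SU(4)^V$. This last step is the crux: it is the statement that a smooth (real-analytic) map cannot send a positive-measure set into a lower-dimensional semi-algebraic target unless the target's preimage is already thin — this is exactly the kind of statement handled by the accessible-dimension formalism (Sard-type / semi-algebraic arguments as in \cite{Haferkamp_2022_linear, suzuki2023quantum}), so I would invoke it rather than reprove it. The remaining bookkeeping — that normalization costs at most one dimension, that $\norm{\ket{\phi}}\neq 0$ generically, and that dropping trivial ancillas is lossless — is routine. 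Plugging $D \ge \min(L, 2^{n+1}-1)$ from Lemma \ref{lem:complexity_volume} then yields Theorem \ref{thm:complexity_volume} after simplifying the expression for $L$ and absorbing lower-order terms into the stated $\frac{md}{2n^2}-2m$ bound.
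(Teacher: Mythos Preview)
Your proposal is correct and follows essentially the same approach as the paper: bound the ancilla count by the gate count, bound the dimension of the low-complexity locus by $15$ times the gate count (plus one for scaling), and then invoke the regular-point/critical-point decomposition (as in \cite{li2022short}) to conclude that the preimage under $G$ of any lower-dimensional semi-algebraic set has measure zero in $SU(4)^V$. The only cosmetic difference is how normalization is handled: the paper adds a free scaling parameter $c\ge 0$ to the low-complexity set $\mathcal{V}(s)$ (giving $\dim\mathcal{V}(s)\le 15s+1$, compared directly to $D$), whereas you normalize the image of $G$ first (dropping one dimension to $D-1$) and compare to $15V'$---both routes yield the identical bound $C_{anc}(\ket{\psi})\ge (D-1)/15$.
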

	
	\begin{proof}
		Define $\cW(s)$ as the space of unnormalized states generated by applying $s$ 2-qubit gates on $2s$ qubits, with measurements performed in the middle of the circuit, and in the final layer on the  $2s-n$ ancillary qubits. We assume the measurement result is postselected by $\ketbra{0}$. Any other measurement outcome would yield the same set $\cW(s)$ due to the flexibility in choosing 2-qubit gates. Here, we only consider the qubit number up to $2s$ because there are at most $2s$ qubits that are non-trivial support of the quantum gates. 
		
		Then, we define another set
		\begin{equation}
			\cV(s) = \{c \ket{\psi} : c \in \mathbb{R}, c\ge 0, \ket{\psi} \otimes \ket{0}^{2s-n} \in \cW(s)\}.
		\end{equation}
		That is, we multiply the post-measurement state by a non-negative normalization number. By analyzing the free parameters, we have:
		\begin{equation}
			\dim \cV(s) \le 15s + 1.
		\end{equation}
		
		Choose $s_0$ to be the largest integer such that $15s_0+1 < D$, then we have 
		\begin{equation}
			\dim \cC > \dim \cV(s_0).
		\end{equation}
		
		We decompose $SU(4)^V$ into $R \cup R^c$, where $R$ is the set of regular points at which $G$ has a maximal accessible dimension, and the complement $R^c$ is the set of critical points. The preimage $G^{-1}(\cV(s_0))$ of $\cV(s_0)$ can then be expressed as:
		\begin{equation}
			(G^{-1}(\cV(s_0)) \cap R^c) \bigcup (G^{-1}(\cV(s_0)) \cap R).
		\end{equation}
		
		The first term has measure zero because $R^c$ is a measure-zero subset of $SU(4)^V$ by Lemma~\ref{lem:acc_dim_measure_1}. To show that the second term also has measure zero, note that for any point $x \in R$, there exists a neighborhood $N_x \subseteq R$ such that $G(N_x)$ is a manifold of dimension $D$. Since $\mathcal{V}(s_0)$ has strictly smaller dimension than $G(N_x)$, its preimage under $G$ within $N_x$ is a measure-zero subset (see Ref.~\cite{Haferkamp_2022_linear} for a rigorous justification).
		Given any $\varepsilon > 0$, we can find a compact subset $K \subseteq R$ such that the measure of ${R \mathbin{\backslash} K}$ is less than $\varepsilon$. By compactness, there exists a finite subcover $\{N_x\}_x$ of $K$, and thus the set $G^{-1}(\mathcal{V}(s_0)) \cap K$ is a measure-zero subset of $K$. Taking the limit $\varepsilon \rightarrow 0$, we conclude that $G^{-1}(\mathcal{V}(s_0))$ is a measure-zero subset of $SU(4)^V$.
		
		Consequently, randomly draw $V$ 2-qubit unitaries $U_1,U_2,\cdots, U_V$, with unit probability, the corresponding state $\ket{\phi} = G(U_1,U_2,\cdots)$ will not be in the set $\cV(s)$, thus the normalized state $\ket{\psi} = \ket{\phi} / \norm{\ket{\phi}}$ has an embedded complexity \begin{equation}
			C_{anc}(\ket{\psi}) \ge s_0 + 1 \ge (D-1)/15. 
		\end{equation}
	\end{proof}
	
	Combining Lemma \ref{lem:complexity_volume} and Lemma \ref{lem:acc_dim_bound_comp}, we establish a lower bound of embedded complexity by circuit volume. 
	
	\begin{proof}[Proof of Theorem 1: embedded complexity of projected states]
		Lemma \ref{lem:complexity_volume} shows that the accessible dimension of the image of $G$ satisfies
		\begin{equation}\label{eq:D_lower}
			D \ge \min\left(L, 2^{n+1}-1\right),
		\end{equation} 
		where $L = \frac{md}{2n^2} - m(1 + \frac{1}{n} + \frac{1}{n^2}) - 1$. Consider randomly drawing 2-qubit gates $U_1,U_2,\cdots,U_V$, denote the state $\ket{\phi} = U_VU_{V-1}\cdots U_1\ket{0}^{\otimes m}$. From Lemma \ref{lem:acc_dim_bound_comp}, we conclude that with unit probability, the probability of getting measurement result $0^{m-n}$ is non-zero, and the embedded complexity of the projected state $\ket{\psi}$ satisfies: 
		\begin{equation}\label{eq:anc_comp_lower}
			C_{anc}(\ket{\psi}) \ge (D-1)/15 = \min\left(L - 1, 2^{n+1}-2\right) / 15.
		\end{equation}
		
		This conclusion applies to an arbitrary measurement result, as applying random gates renders these measurement outcomes equivalent. Furthermore, since there are only a finite number of possible measurement results, with unit probability, all projected states have an embedded complexity satisfying Eq.~\eqref{eq:anc_comp_lower}.
		
		For $m \ge n \ge 4$, we have $L - 1 \ge \frac{md}{2n^2} - 2m$. If $d \ge 5n^2$ , we have  $L - 1\ge \frac{md}{10n^2} = \Omega(\frac{V}{n^2})$.
	\end{proof}
	
	\subsection{Embedded complexity of Kraus operators} \label{subsec:complexity_Kraus}
	Here, we prove the that the embedded complexity of the Kraus operators can be lower-bounded by circuit volume in local random circuits.  
	First, we define the embedded complexity of a Kraus operator as follows:
	\begin{definition} [Embedded complexity of Kraus operators]
		The embedded complexity $C_{anc}(K)$ of an $n$-qubit Kraus operator $K$ is defined as the minimal number of 2-qubit gates required to implement $K$ within an $n$-qubit subsystem embedded in a $m$-qubit measurement-assisted quantum circuits:
		\begin{equation}
			\begin{aligned}
				C_{anc}(K) \coloneqq & \min \{V: \exists m\ge n, K = (\bra{0}^{\otimes (m-n)} \otimes I_n) \\
				& \Pi_V U_V \Pi_{V-1}U_{V-1} \cdots \Pi_1 U_1 (\ket{0}^{\otimes (m-n)} \otimes I_n)\} 
			\end{aligned}
		\end{equation}
		The 2-qubit gates $U_i$ can be arbitrary unitaries in $SU(4)$ and may act on any pair of qubits. The projective operator $\Pi_i$ acts on the same pair of qubits as $U_i$,
		\begin{equation}
			\begin{split}
				&\Pi_i = P_{i,1} \otimes P_{i,2}, \\
				&P_{i,1}, P_{i,2} \in \{I, \ketbra{0}, \ketbra{1}\}.
			\end{split}
		\end{equation}
	\end{definition}
	
	We can establish the relation between circuit volume and embedded complexity of Kraus operators as follows:
	
	\begin{theorem}\label{thm:complexity_volume_Kraus}
		Given $m \ge n \ge 4$, consider a local random circuit $U \in \cU_{m,d}$ acting on the initial state $\ket{0}^{\otimes m}$. With unit probability, for any $a\in \{0,1\}^n$, the Kraus operator $K = (\bra{a} \otimes I_n) U (\ket{0}^{\otimes (m-n)} \otimes I_n)$ satisfies:
		\begin{equation}
			C_{anc}(K) \ge \min\left(\frac{md}{2n^2} - 2m, 4^n\right) / 15.
		\end{equation}
		For $d = \Omega(n^2)$, the bound can be made $C_{anc}(K) = \Omega(\min(\frac{V}{n^2}, 4^n))$, where $V = \lfloor m / 2 \rfloor d$ is the circuit volume.
	\end{theorem} 
	
	The proof follows the same structure as that of the proof of projected states. Firstly, similar to Eq.~\eqref{eq:post_select}, we define the set of Kraus operators, $\cK$, as the image of the map $H$, where
	\begin{equation}
		\begin{split}
			H : SU(4) &\rightarrow \mathbb{C}^{M \times M} \\
			H(U_1,U_2,\cdots,U_V) = (\bra{0}^{\otimes (m-n)} &\otimes I_{n}) U_VU_{V-1}\cdots U_1   (\ket{0}^{\otimes m-n} \otimes I_n).
		\end{split}
	\end{equation}
	Here, $M = 2^{n}$ represents the dimension of an $n$-qubit system, and $\mathbb{C}^{M \times M}$ denotes the space of all $n$-qubit Kraus operators. This set also forms a semi-algebraic set from the same argument as in Lemma \ref{lem:C_semi_algebraic}.
	
	\begin{lemma}
		The set $\cK$ is a semi-algebraic set.
	\end{lemma}

	Following the similar proof of Eq.~\eqref{eq:large_point_state}, we can show that there exists a point $x \in SU(4)^V$, such that
	\begin{equation}\label{eq:acc_dim_Kraus_operator}
		\dim \mathcal{T}(x) \ge \min(L, 4^n).
	\end{equation}
	where $L \ge \frac{md}{2n^2} - m(1 + \frac{1}{n} + \frac{1}{n^2}) - 1$. 
	The accessible dimension provides a lower bound on the embedded complexity, as stated in the following lemma.
	\begin{lemma}[Accessible dimension lower-bounds embedded complexity of Kraus operators]\label{lem:acc_dim_bound_comp_Kraus}
		Suppose the image of the mapping $H$ has a maximal accessible dimension $D$. Consider randomly selecting $V$ 2-qubit gates $U_1,U_2,\cdots,U_V$ from $SU(4)^V$. With unit probability, the Kraus operator $K = H(U_1,U_2,\cdots,U_V)$ will satisfy  
		\begin{equation}
			C_{anc}(K) \ge D/15. 
		\end{equation}
	\end{lemma}
	\begin{proof} 
		The proof follows that of Lemma \ref{lem:acc_dim_bound_comp} with one exception that the normalization factor $c$ present in the proof of Lemma \ref{lem:acc_dim_bound_comp} does not appear here. Consequently, the bound changes from $(D-1)/15$ to $D/15$. \end{proof}
	
	\begin{proof}[Proof of Theorem \ref{thm:complexity_volume_Kraus}: embedded complexity of Kraus operators]
		By combining Lemma \ref{lem:acc_dim_bound_comp_Kraus} with Eq.~\eqref{eq:acc_dim_Kraus_operator}, we conclude that for $a=0^{n-k}$, the Kraus  operator $K = (\bra{a} \otimes I_n) U (\ket{0}^{\otimes (m-n)} \otimes I_n)$ satisfies $C_{anc}(K) \ge \min(L,4^n) / 15$ with unit probability. Due to the property of Haar random, this conclusion holds for arbitrary $a \in \{0,1\}^{n-k}$. Since there are only finite many possible $a$, we conclude that with unit probability, for any $a$, the Kraus operator $K = (\bra{a} \otimes I_n) U (\ket{0}^{\otimes (m-n)} \otimes I_n)$ satisfies $C_{anc}(K) \ge \min(L,4^n) / 15$.
	\end{proof}
	
	\section{Approximate embedded complexity}\label{app:approx_embedded}
	In this section, we introduce and discuss \emph{approximate embedded complexity}, a robust notion of embedded complexity. We present two main results for approximate embedded complexity. First, we prove the doubly robust result that approximate state designs possess high approximate embedded complexity. Next, we demonstrate that the projected states obtained via random gate teleportation also exhibit high approximate embedded complexity. 
	
	\subsection{Definition}
	
	Because any state‑preparation routine on a quantum device is unavoidably noisy, and measurements are subject to statistical fluctuations dictated by Born’s rule, it is practically essential to take into account error tolerance, in which case the output state may only be an approximate version of the ideal  target state. For both practical reasons and mathematical convenience, it is commonly also assumed that every implementable two‑qubit gate is chosen from a finite universal gate set $\mathsf{S}$—for example, $\mathsf{S}=\{I, \mathrm{CNOT},H,T\}$.

	We formally define the approximate embedded complexity as follows:
	\begin{definition}[Approximate embedded complexity]
		Let $\mathsf{S}$ be a universal gate set. The $\varepsilon$‑approximate embedded complexity of an $n$‑qubit pure state $\ket{\psi}$ with respect to $\mathsf{S}$ is defined as
		\begin{equation}\label{eq:unnormalized_projected_state}
			\begin{split}
				C_{anc}^{(\mathsf{S},\varepsilon)}(\psi) = \min \{V: \exists &m\ge n,  \text{s.t. }\dtr (\ket{\psi} \otimes \ket{0}^{\otimes (m - n)}, \ket{\phi}) \le \varepsilon  \\
				& \text{where} \quad  \ket{\phi} = \frac{\ket{\widetilde{\phi}}}{\sqrt{\braket{\widetilde{\phi}}{\widetilde{\phi}}}}, \ket{\widetilde{\phi}} = \Pi_V U_V \Pi_{V-1}U_{V-1} \cdots \Pi_1 U_1 \ket{0}^{\otimes m} \neq 0\}.
			\end{split}
		\end{equation}
		The 2-qubit gates $U_i$ can be arbitrary unitaries in $\mathsf{S}$ and may act on any pair of qubits. The projective operator $\Pi_i$ acts on the same pair of qubits as $U_i$,
		\begin{equation}
			\begin{split}
				&\Pi_i = P_{i,1} \otimes P_{i,2}, \\
				&P_{i,1}, P_{i,2} \in \{I, \ketbra{0}, \ketbra{1}\}.
			\end{split}
		\end{equation}
	\end{definition}
	
	Here, $\dtr(\rho,\sigma) = \frac{1}{2} \norm{\rho-\sigma}_1$ denotes the trace distance. 
	Setting $\varepsilon=0$ and $\mathsf{S}=SU(4)$ recovers the original embedded complexity defined in the main text.
	By definition, no measurement‑assisted circuit that employs fewer than $C_{\mathrm{anc}}^{(\mathsf{S},\varepsilon)}(\psi)$ two‑qubit gates from $\mathsf{S}$ can prepare $\ket{\psi}$ within trace distance $\varepsilon$.
	Consequently, $C_{\mathrm{anc}}^{(\mathsf{S},\varepsilon)}(\psi)$ provides a lower bound on the gate cost of approximate state preparation.
	
	Even when a quantum device natively supports a continuous gate set such as $SU(4)$, the complexity $C_{\mathrm{anc}}^{(\mathsf{S},\varepsilon)}(\psi)$ for a fixed finite universal set $\mathsf{S}$ still offers a qualitative lower bound of the experimental overhead. 
	Any 2-qubit continuous gate can be approximated to within $\epsilon$ diamond distance by a sequence of gates from the set $\mathsf{S}$, with the sequence length growing only as $\mathrm{poly log}(\epsilon^{-1})$ by the Solovay–Kitaev theorem~\cite{Nielsen2016QCQI}. For a post-measurement state $\ket{\tilde{\phi}}$ with $V$ two-qubit gates, as given in Eq.~\eqref{eq:unnormalized_projected_state}, let $c = \sqrt{\braket{\tilde{\phi}}{\tilde{\phi}}}$. 
	We can find a state $\ket{\tilde{\varphi}}$ such that the trace distance $\dtr(\tilde{\phi}, \tilde{\varphi}) \le c\epsilon$ by approximating each gate with a diamond distance of $V^{-1}c\epsilon$. This leads to a total number of two-qubit gates given by $V' = V \, \mathrm{poly log}\left(V(c\epsilon)^{-1}\right)$.
	This approximation corresponds to approximating the normalized state $\ket{\phi}$ to a trace distance of $\epsilon$, i.e., $\dtr(\phi, \varphi) \le \epsilon$. Given that  practical experiments can only observe different experimental phenomena efficiently when the error $\epsilon = \Omega(1/\poly(V))$ and require the measurement probability $c^2$ to be polynomially small, we must have $c = \Omega(1/\poly(V))$. Therefore, it suffices to use gates from $\mathsf{S}$ with a gate count $V' = V \, \mathrm{poly log}(V)$, introducing at most a poly-logarithmic factor.
	In the remainder of our discussion, we restrict to discrete universal gate sets $\mathsf{S}$.

	\subsection{Result I: Approximate state designs}

	We now prove the fundamental property that states drawn from an approximate quantum state design typically possess high approximate embedded complexity. 
	The quantum state designs are defined to characterize the ``order'' of randomness in a state ensemble $\cS = \{p_i, \ket{\psi_i}\}$. A state ensemble $\cS$ is said to be a quantum state $t$-design if it reproduces the first $t$ moments of the Haar measure \cite{ambainis2007quantum}. Concretely, the $t$-th moments of state ensemble $\cS$ is calculated as 
	\begin{equation}
		M_{\cS}^{(t)} = \sum_i p_i \ketbra{\psi_i}^{\otimes t}.
	\end{equation}
	The Haar $t$-th moments $M_{\mathrm{Haar}}^{(t)}$ are simply defined with respect to the Haar measure, which is the unique uniform distribution over pure states in a Hilbert space. 
	The following definition gives a strong notion of approximate state design with multiplicative error $\epsilon$:
	\begin{definition}[Approximate state design]
		An ensemble $\cS$ is an $\epsilon$-approximate $t$-design if:
		\begin{equation}
			(1-\epsilon) M_{\mathrm{Haar}}^{(t)} \le M_{\cS}^{(t)}  \le (1+\epsilon) M_{\mathrm{Haar}}^{(t)},
		\end{equation}
		where $A \le B$ is an operator inequality meaning that $B-A$ is positive semidefinite.
	\end{definition}
	\noindent Approximate unitary $t$-designs can also be defined analogously \cite{Haferkamp2022randomquantum}.

	Our result is based on a counting argument: we bound the number of distinct states that a measurement‑assisted circuit using at most $G$ two‑qubit gates can prepare. Each application of a gate $U_i\in\mathsf{S}$ followed by a fixed two‑qubit measurement $\Pi_i$ increases the set of reachable states by at most a constant factor. We formalize the counting step in the lemma below.
	\begin{lemma}\label{lem:number_states_discrete}
		Measurement‑assisted quantum circuits composed of gates from a finite set $\mathsf{S}$ and containing at most $G$ two‑qubit gates can prepare at most $N$ distinct pure states, where
		\begin{equation}
			\log N =  \cO\bigl(G (\log G + \log \abs{\mathsf{S}})\bigr).
		\end{equation}
	\end{lemma}

	\begin{proof}
		Because only $G$ two‑qubit gates are applied, the circuit acts non‑trivially on at most $m=2G$ qubits. We may regard it as a depth‑$G$ circuit in which each layer contains a single two‑qubit gate followed by an optional measurement on those same qubits. In any layer, the gate can be placed on any of the $\binom{m}{2}$ pairs of qubits and can be chosen from $\mathsf{S}$. The following measurement on each qubit has three possibilities: no measurement, or a projective measurement with outcome $0$ or $1$. So each layer admits at most $\binom{m}{2}\,3^2\lvert\mathsf{S}\rvert$ distinct choices, and the total number of reachable states obeys
		\begin{equation}
			N \le \left[\binom{m}{2} \cdot 3^2 \abs{\mathsf{S}}\right]^G = \left[G (2G-1) \cdot 9 \mathsf{S}\right]^G
		\end{equation}
		Taking logarithms yields
		\begin{equation}
			\log N = \cO\bigl(G (\log G + \log \abs{\mathsf{S}})\bigr)
		\end{equation}
	\end{proof}
	
	On the other hand, concentration bounds for approximate state designs imply that a state sampled from an $\epsilon$‑approximate $k$‑design is, with high probability, far from every state in this finite set.
	\begin{lemma} \label{lem:design_far_single_state}
		Let $\cS$ be an $\epsilon$-approximate $k$-design over $n$ qubits, and let $\ket{\phi}$ be any pure $n$‑qubit state. We have  
		\begin{equation}
			\Pr_{\psi \sim \cS} [\dtr(\psi,\phi) \le \varepsilon] \le (1+\epsilon) D_k^{-1} (1-\varepsilon^2)^{-k/2},
		\end{equation}
		where $D_k \coloneqq \binom{2^n+k-1}{k}$ is the dimension of the symmetric subspace of $k$ copies of the $n$‑qubit Hilbert space.
	\end{lemma}
	
	\begin{proof}
		The proof leverages the explicit form of the Haar $k$-moment operator  $M_{\mathrm{Haar}}^{(t)}$ \cite{Mele2023IntroductionHaar}:
		\begin{equation}
			M^{(k)}_{\mathrm{Haar}} = \frac{1}{(2^n+k-1)\cdots(2^n+1)(2^n)} \sum_{\pi \in \mathcal{S}_k} \pi,
		\end{equation}
		where $\mathcal{S}_{k}$ is the symmetric group acting on $k$ copies of the $n$‑qubit Hilbert space $\mathcal{H}$.  Because $\tr(\pi \phi^{\otimes k}) = 1$ for any pure state $\phi$, 
		\begin{equation}
			\bE_{\psi \sim \mathrm{Haar}(n)}  \tr\left[ \psi^{\otimes k} \phi^{\otimes k}\right] = \tr\left[ M^{(k)}_{\mathrm{Haar}}  \phi^{\otimes k}\right] = \frac{k!}{(2^n+k-1)\cdots(2^n+1)(2^n)} = D_k^{-1}.
		\end{equation}
		Because $\mathcal{S}$ is an $\epsilon$‑approximate $k$‑design,
		\begin{equation}
			\begin{split}
				\bE_{\psi \sim \cS}\tr\left[ \psi^{\otimes k} \phi^{\otimes k}\right] &= \tr\left[ M^{(k)}_{\mathrm{\cS}}  \phi^{\otimes k}\right] \\
				&\le (1+\epsilon)  \tr\left[ M^{(k)}_{\mathrm{\mathrm{Haar}}}  \phi^{\otimes k}\right] \\
				&\le (1+\epsilon) D_k^{-1}.
			\end{split}
		\end{equation}
		For pure states $\psi$ and $\phi$, the trace distance satisfies $\dtr(\psi ,\phi) = \sqrt{1-\tr(\psi,\phi)^2}$. Therefore,
		\begin{equation}
			\begin{split}
				\Pr_{\psi \sim \cS} [\dtr(\psi,\phi) \le \varepsilon] &=  \Pr_{\psi \sim \cS} [\tr(\psi\phi) \ge \sqrt{1 - \varepsilon^2}] \\
				&= \Pr_{\psi \sim \cS} [\tr(\psi^{\otimes k}\phi^{\otimes k}) \ge (1 - \varepsilon^2)^{k/2}] \\
				&\le (1-\varepsilon^2)^{-k/2} \bE_{\psi \sim \cS}\tr\left[ \psi^{\otimes k} \phi^{\otimes k}\right]\\
				&\le (1+\epsilon) D_k^{-1} (1-\varepsilon^2)^{-k/2}.
			\end{split}
		\end{equation}
		where for the third line we utilized the Markov's inequality. This completes the proof.
	\end{proof}   
	
	Combining Lemmas \ref{lem:number_states_discrete} and \ref{lem:design_far_single_state} we obtain a lower bound on the approximate embedded complexity of states drawn from an approximate design.
	
	\begin{theorem}[Approximate embedded complexity in approximate state designs, formal version of Proposition 1 in the main text]\label{thm:approximate_embedded_complexity_design}
		
		Fix a universal gate set $\mathsf{S}$ and $\varepsilon\in(0,1)$.
		Let $\mathcal{S}$ be an $\epsilon$‑approximate $k$‑design on $n$ qubits. For a pure state $\ket{\psi}$ drawn from $\cS$, with probability at least $1-\delta$, we have
		\begin{equation}
			C_{anc}^{(\mathsf{S},\varepsilon)}(\psi) > G,
		\end{equation}
		provided that 
		\begin{equation}
			\varepsilon \le \sqrt{1-2^{-n/2}}, \quad 2^{n/2}\ge k \ge \Omega\left( n^{-1} G\left(\log G + \log \abs{\mathsf{S}}\right) + n^{-1} \log \delta^{-1}\right), \quad \epsilon = \cO(1).
		\end{equation}
	\end{theorem}
	\begin{proof}
		From Lemma \ref{lem:number_states_discrete}, the states generated by measurement-assisted quantum circuits that use at most $G$ gates from $\mathsf{S}$ can be listed as $\{\phi_i\}_{i=1}^N$, with
		\begin{equation}
			\log N = \cO(G (\log G + \log \abs{\mathsf{S}})).
		\end{equation}
		For each $\phi_i$, Lemma \ref{lem:design_far_single_state} gives 
		\begin{equation}
			\Pr_{\psi \sim \cS} [\dtr(\psi,\phi_i) \le \varepsilon] \le (1+\epsilon) D_k^{-1} (1-\varepsilon^2)^{-k/2}.
		\end{equation}
		A union bound over the $N$ states in $\{\phi_i\}_{i=1}^N$ gives
		\begin{equation}
			\Pr_{\psi \sim \cS} [\exists \,1 \le i \le N, \dtr(\psi,\phi_i) \le \varepsilon] \le N (1+\epsilon) D_k^{-1} (1-\varepsilon^2)^{-k/2}.
		\end{equation}
		Equivalently,
		\begin{equation}
			\Pr_{\psi \sim \cS} [C_{anc}^{(S,\varepsilon)}(\psi) \le G] \le N (1+\epsilon) D_k^{-1} (1-\varepsilon^2)^{-k/2}.
		\end{equation}
		To make the right‑hand side at most $\delta$, it suffices to require
		\begin{equation}
			\begin{split}
				\log \delta \ge \log N + \log (1+\epsilon) - \log (D_k) - \frac{k}{2} \log (1-\varepsilon^2). 
			\end{split}
		\end{equation}
		Setting $\epsilon=\cO(1)$ and using $D_k\ge(2^{n}/k)^{k}$ reduces the inequality to
		\begin{equation}
			kn - k \log \left( \frac{k}{\sqrt{1-\varepsilon^2}} \right) \ge \log N + \log \delta^{-1} + \cO(1) = \cO(G (\log G + \log \abs{\mathsf{S}}) + \log \delta^{-1}).
		\end{equation}
		Because $\varepsilon\le\sqrt{1-2^{-n/2}}, k\le 2^{n/2}$ implies $n-\log\bigl(k/\sqrt{1-\varepsilon^{2}}\bigr)\ge n/4$, we can choose 
		\begin{equation}
			k= \Omega\left( n^{-1} G\left(\log G + \log \abs{\mathsf{S}}\right) + n^{-1} \log \delta^{-1}\right).
		\end{equation}
		to satisfy the inequality, completing the proof.
	\end{proof}

	\subsection{Result II: Connecting approximate embedded complexity and circuit volume via random gate teleportation}
	
	Here, we discuss the connection between approximate state designs and local random circuits, and show that the random‑gate teleportation protocol connects circuit volume with both approximate state designs and approximate embedded complexity.
	
	Quantum state designs can be generated using polynomially many gates. For example, random Clifford states are known to form state 3-designs \cite{Zhu_2017}. Prior research indicates that local random circuits on $n$ qubits can form $\epsilon$-approximate $k$-designs in depth $t = \poly(n,k,\epsilon)$ \cite{Brand_o_2016, Haferkamp2022randomquantum}, and the dependence on $k$ was recently improved to linear scaling \cite{chen2024incompressibility}.
	\begin{fact}[Local random circuits are linear unitary $t$-design {\cite[Corollary 1.7]{chen2024incompressibility}}]\label{fact:g_scaling}
		\label{fact:linear_design}
		For $n \ge 2$ and $k \le \Theta(2^{2n/5})$, the local random circuits on $n$ qubits can form $\epsilon$-approximate unitary $k$-design in depth $t = g(n,k,\epsilon)$, where
		\begin{equation}
			g(n,k,\epsilon) = \cO\left((nk+ \log (\epsilon^{-1}))(\log k)^7 \right).
		\end{equation}
		Under the condition that $k \le \Theta(2^{2n/5})$, $g(n,k,\epsilon)$ can be made $O(n^8 k)$, in which the dependence on $\epsilon$ are hidden.
	\end{fact}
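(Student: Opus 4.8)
Fact~\ref{fact:linear_design} is a verbatim restatement of Corollary~1.7 of \cite{chen2024incompressibility}, so strictly speaking the ``proof'' is a citation; the plan below is the self-contained route one would reconstruct, via the spectral-gap method for random circuits. First I would pass from the notion of an $\epsilon$-approximate unitary $t$-design to a gap statement for the moment operator of a single brickwork period. Write $\Phi^{(t)}_{n,d} = \bE_{U \sim \cU_{n,d}}\, U^{\otimes t} \otimes \bar U^{\otimes t}$; since the brickwork has period two and distinct layers use independent gates, $\Phi^{(t)}_{n,2}$ is a fixed completely positive map and $\Phi^{(t)}_{n,2k} = (\Phi^{(t)}_{n,2})^{k}$. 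Let $\Pi_{\mathrm{Haar}}$ be the orthogonal projector onto its fixed-point subspace. Because $t \le \Theta(2^{2n/5}) \ll 2^{n}$, Schur--Weyl duality identifies that subspace with the span of the $t!$ permutation operators, which are moreover linearly independent in this regime, so $\Pi_{\mathrm{Haar}}$ coincides with the Haar moment projector on an explicitly understood $t!$-dimensional space. The standard reduction (as in \cite{Brand_o_2016, Haferkamp2022randomquantum}) then yields $\norm{M^{(t)}_{\cU_{n,d}} - M^{(t)}_{\mathrm{Haar}}} \le 2^{O(nt)}\,\lambda^{\lfloor d/2\rfloor}$, where $\lambda = \norm{\Phi^{(t)}_{n,2} - \Pi_{\mathrm{Haar}}}_{\infty} < 1$ is the relevant spectral gap quantity, and hence it suffices to take $d = O\!\big((nt + \log(1/\epsilon))/\log(1/\lambda)\big)$. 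Everything thus reduces to the estimate $\log(1/\lambda) = \Omega(1/(\log t)^{7})$, i.e.\ a gap that degrades only polylogarithmically in $t$ rather than polynomially.

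Next I would lower-bound $1-\lambda$ using the local structure. Each Haar two-qubit gate on sites $(i,i+1)$ contributes the orthogonal projector $P_{i,i+1}$ onto the commutant of $SU(4)^{\otimes t}$ on those two qubits, so a single brickwork layer is a product of mutually commuting local projectors on the odd bonds (resp.\ even bonds), and $\Phi^{(t)}_{n,2}$ is the product of these two commuting families. The natural next move is to invoke a detectability-lemma-type comparison to relate $\norm{\Phi^{(t)}_{n,2} - \Pi_{\mathrm{Haar}}}_{\infty}$ to the spectral gap of the frustration-free parent Hamiltonian $H = \sum_i (\bbI - P_{i,i+1})$, whose ground space is exactly the Haar fixed space; this converts the problem into a purely Hamiltonian gap estimate, at the cost of only constant factors.

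Finally, I would bound $\mathrm{gap}(H)$ from below by exploiting representation theory: on the invariant subspace, $H$ is unitarily equivalent to (a weighted version of) the generator of a random walk on set partitions / on the symmetric group $S_t$ driven by nearest-neighbour transpositions, whose gap one controls by a Nachtergaele-style martingale recursion together with careful estimates on Weingarten coefficients, the key point being that the loss is only polylogarithmic in $t$ — which is exactly why the hypothesis $t \lesssim 2^{2n/5}$ (keeping the permutation operators linearly independent, so no ``collisions'') is needed. Substituting back gives $d = O\!\big((nt+\log(1/\epsilon))(\log t)^{7}\big)$, and the simplified form $g(n,t,\epsilon) = O(n^{8}t)$ follows by absorbing $\log(1/\epsilon)$ and using cruder polynomial gap bounds. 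The hard part is precisely this last step: obtaining the gap lower bound with only polylogarithmic (not polynomial) loss in $t$ is the genuinely new technical ingredient of \cite{chen2024incompressibility} and needs the full combinatorial/representation-theoretic machinery, whereas the reduction to a gap problem in the first two paragraphs is by now routine.
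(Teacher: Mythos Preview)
Your opening sentence is exactly right: the paper does not prove this statement at all --- it is recorded as a \emph{Fact} and attributed verbatim to \cite[Corollary~1.7]{chen2024incompressibility}, with no argument given. So there is nothing in the paper to compare your sketch against; your outline of the spectral-gap route (moment operator, reduction to a frustration-free Hamiltonian gap via detectability, and the polylog-in-$t$ gap estimate as the new technical ingredient) is a reasonable high-level summary of how that external result is obtained, but it goes well beyond what the present paper supplies or requires.
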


	Throughout this discussion we fix the gate set $\mathsf{S}$ to be a discrete universal gate set. Unless stated otherwise, the parameters $\delta$, $\epsilon$, and $\varepsilon$ are ignored to highlight the dependence on circuit volume. We also employ the symbols $\tilde{O}$ and $\tilde{\Omega}$ to indicate that logarithmic factors are suppressed.
	
	By combining Theorem \ref{thm:approximate_embedded_complexity_design} and Fact \ref{fact:linear_design} , the growth of approximate embedded complexity for random state ensemble $\cS_{n,t}$ can be established. 
	\begin{lemma} [Approximate embedded complexity of local-random-circuit states]
		\label{lem:approx_comp_linear}
		\label{lem:complexity_random_circuits}
		Given $\varepsilon = \cO(1), t \le \Theta(2^{2n/5})$, randomly drawing a state $\ket{\psi} \in \cS_{n,t}$, with high probability,
		\begin{equation}
			C^{(\mathsf{S},\varepsilon)}_{anc}(\psi) \ge \tilde{\Omega}\left(\frac{t}{n^7}\right).
		\end{equation}
	\end{lemma}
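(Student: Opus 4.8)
The plan is to chain together the two facts quoted immediately above—Fact \ref{fact:linear_design} (local random circuits form approximate unitary $t$-designs in depth $g(n,t,\epsilon) = O(n^8 t)$) and Fact \ref{fact:design_complexity} (an $\epsilon$-approximate unitary $t$-design has approximate state complexity $\Omega(nt)$ with high probability)—and simply tune the parameters so that the depth $t$ of the random circuit $\cS_{n,t}$ is large enough to certify a design of the order we need. Concretely, I would fix $\delta \in (0,1)$ and some convenient constant $\epsilon$ (say $\epsilon = 1/3$), and ask: given circuit depth $t$, what is the largest order $\tau$ of an approximate unitary $\tau$-design that $\cU_{n,t}$ is guaranteed to be? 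By Fact \ref{fact:linear_design}, $\cU_{n,t}$ is an $\epsilon$-approximate unitary $\tau$-design as soon as $t \ge g(n,\tau,\epsilon) = O(n^8 \tau)$, i.e.\ as long as $\tau \le c\, t / n^8$ for an appropriate constant $c$ (and also $\tau \le \Theta(2^{2n/5})$, which is where the stated constraint $t \le \Theta(2^{2n/5})$—hence $\tau \lesssim 2^{2n/5}$—comes from).

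Having pinned down $\tau = \Theta(t/n^8)$, I would then invoke Fact \ref{fact:design_complexity} with this $\tau$: a random $U$ from an $\epsilon$-approximate unitary $\tau$-design satisfies $C_\delta(U\ket{0}^{\otimes n}) \ge \Omega(n\tau)$ with high probability, provided $\tau \le O(2^{n/2})$, which is implied by $\tau \le \Theta(2^{2n/5})$. Since a random $\ket{\psi} \in \cS_{n,t}$ is by definition $U\ket{0}^{\otimes n}$ for $U$ drawn from $\cU_{n,t}$, this gives
\begin{equation}
    C_\delta(\ket{\psi}) \ge \Omega(n\tau) = \Omega\!\left(n \cdot \frac{t}{n^8}\right) = \Omega\!\left(\frac{t}{n^7}\right)
\end{equation}
with high probability, which is exactly the claimed bound. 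The role of the hypothesis $t \le \Theta(2^{2n/5})$ is to keep us inside the regime where both facts apply; I would remark that the $\Omega$ hides the dependence on $\delta$ and on the chosen constant $\epsilon$, consistent with the conventions in the two facts.

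I do not expect a genuine obstacle here—the lemma is essentially a bookkeeping corollary of the two black-box facts. The one point that needs a little care is the range of validity: I must check that the choice $\tau = \Theta(t/n^8)$ is simultaneously compatible with (i) the design-generation bound $t \ge g(n,\tau,\epsilon)$, (ii) the constraint $\tau \le \Theta(2^{2n/5})$ appearing in Fact \ref{fact:linear_design}, and (iii) the constraint $\tau \le O(2^{n/2})$ appearing in Fact \ref{fact:design_complexity}. Since $t \le \Theta(2^{2n/5})$ forces $\tau \le c\,t/n^8 \le \Theta(2^{2n/5})/n^8 \le O(2^{n/2})$, all three hold, so the composition goes through cleanly. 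A secondary minor point is to state precisely what "with high probability" means after composing the two statements (the failure probability is the sum of the design's approximation-induced slack and the failure probability in Fact \ref{fact:design_complexity}), but this is routine and can be absorbed into the $\Omega$ notation as the surrounding facts already do.
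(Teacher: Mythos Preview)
Your proposal is correct and takes essentially the same approach as the paper's proof: invert Fact~\ref{fact:linear_design} to conclude that $\cU_{n,t}$ is an $\epsilon$-approximate unitary $k$-design with $k = \Omega(t/n^8)$, then apply Fact~\ref{fact:design_complexity} to obtain $C_\delta(\ket{\psi}) \ge \Omega(nk) = \Omega(t/n^7)$ with high probability. Your additional care in verifying the range-of-validity constraints is sound and slightly more explicit than the paper's own two-line argument.
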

	\begin{proof}
		By Fact \ref{fact:linear_design}, the state ensemble $\cS_{n,t}$ forms an approximate $k$-design, where $k = \Omega(\frac{t}{n^8})$. Then, by Theorem  \ref{thm:approximate_embedded_complexity_design}, with high probability, the state $\ket{\psi} \in \cS_{n,t}$ has an approximate embedded complexity 
		\begin{equation}
			C_{\delta}(\ket{\psi}) = \tilde{\Omega} (nk) = \tilde{\Omega}\left(\frac{t}{n^7}\right).
		\end{equation}
		with high probability.
	\end{proof}
	
	The spacetime conversion for random circuits has useful implications for quantum state design and embedded complexity. Recently, considerable effort has been directed towards reducing the quantum circuit depth for generating quantum $k$-designs \cite{Nakata_2017, chen2024efficient1, metger2024simple, chen2024efficient2, chen2024incompressibility}. Our approach offers a simple yet powerful method to reduce the circuit depth by utilizing ancillary qubits. By combining Theorem 3 with Fact \ref{fact:linear_design}, we have:
	
	\begin{lemma}[State design via random gate teleportation]
		For $n > \cO(\log k)$, an $\epsilon$-approximate $k$-design on $n$ qubits can be generated with total qubit number $k'n$ and circuit depth $d$, provided that $d \ge g(n,k,\epsilon) / k'$.
	\end{lemma}
	
	By inserting the expression of $g(n,k,\epsilon)$ in Fact \ref{fact:g_scaling}, the order $k$ of state design scales as
	\begin{equation}
		k = \Omega\left(\frac{k'd}{n^8}\right) = \Omega\left(\frac{V}{n^9}\right),
	\end{equation}
	where $V = \Theta(k'nd)$ represents the circuit volume. Our result shows that, with the utilization of ancillary qubits and measurements, the order $k$ can scale linearly with the circuit volume $V$. This finding provides another operational meaning of circuit volume.

	Additionally, the approximate embedded complexity of the states generated by the random gate teleportation protocol can also be bounded by circuit volume.
	
	\begin{theorem}[Bounding approximate embedded complexity by circuit volume]
		Consider the $n$-qubit state ensemble $\mathcal{S}_{n,t}$ in Theorem 3 in the main text, where the total qubit number is $k'n$ and the depth is $d = \left\lfloor \frac{t}{k'} \right\rfloor + 4$. Randomly drawing a state $\ket{\psi} \in \mathcal{S}_{n,t}$, with high probability, the approximate embedded complexity satisfies
		\begin{equation}
			C^{(\mathsf{S},\varepsilon)}_{anc}(\psi) \ge  \tilde{\Omega}\left(\frac{V}{n^8}\right),
		\end{equation}
		provided that $d \le \Theta(2^{2n/5})$. Here, $V = \Theta(k'nd)$ represents the circuit volume.
	\end{theorem}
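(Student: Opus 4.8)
The plan is to obtain this bound as a short composition of two results already in hand: the spacetime conversion of Theorem~\ref{thm:random_small_depth} and the approximate-complexity lower bound for random circuits in Lemma~\ref{lem:approx_comp_linear}. The starting observation is that Theorem~\ref{thm:random_small_depth} (with the closing remark there that $t_1$ may be taken equal to $t$ by a suitable arrangement of the random circuits) guarantees that the protocol on $kn$ qubits with reduced depth $d=\lfloor t/k\rfloor+4$ produces, on the designated $n$-qubit subsystem, a state whose distribution is \emph{exactly} that of $\cS_{n,t}$. In particular all of the randomness in the protocol --- including the Bell-measurement outcomes and the interleaved Pauli errors, which by Lemma~\ref{lem:teleport_random_gate} are absorbed into the Haar-random gates --- is already subsumed in ``randomly drawing $\ket{\psi}\in\cS_{n,t}$''. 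Hence any high-probability statement about $\ket{\psi}\in\cS_{n,t}$ transfers verbatim to the protocol's output, and no separate probabilistic analysis of the teleportation steps is needed.

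Next I would isolate the elementary arithmetic tying $t$, $k$, $d$, and the circuit volume together. From $d=\lfloor t/k\rfloor+4$ one reads off $t=\Theta(kd)$ (the additive constant $4$ is lower-order once $t/k$ is at least a constant), and since $V=\Theta(knd)$ this gives $t=\Theta(V/n)$, equivalently $V=\Theta(nt)$. I would then invoke Lemma~\ref{lem:approx_comp_linear}: for $\delta\in(0,1)$ and the design regime $t\le\Theta(2^{2n/5})$, a random $\ket{\psi}\in\cS_{n,t}$ satisfies $C_{\delta}(\ket{\psi})\ge\Omega(t/n^7)$ with high probability. Substituting $t=\Theta(V/n)$ yields
\begin{equation}
    C_{\delta}(\ket{\psi})\ \ge\ \Omega\!\left(\frac{t}{n^{7}}\right)\ =\ \Omega\!\left(\frac{V}{n^{8}}\right),
\end{equation}
which is exactly the claimed bound.

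The one place that needs care is bookkeeping the region of validity. Lemma~\ref{lem:approx_comp_linear}, and through it Fact~\ref{fact:linear_design} and Fact~\ref{fact:design_complexity}, constrains the \emph{design order}, i.e.\ $t$, to lie below $\Theta(2^{2n/5})$; since the hypothesis of the theorem is stated in terms of $d$, one should verify that in the intended regime the relation $t=\Theta(kd)$ keeps $t$ inside this window (which is automatic whenever $k$ is not too large relative to $2^{2n/5}/d$), and, if necessary, restate the condition directly as $t\le\Theta(2^{2n/5})$. A second, milder subtlety is threading the ``with high probability'' qualifier through Lemma~\ref{lem:approx_comp_linear}: its failure probability depends on $\delta$ and on the design error $\epsilon$ hidden in Fact~\ref{fact:linear_design}, so one fixes $\epsilon$ to a small constant and takes a union bound so that both the $\epsilon$-approximate design property and the resulting complexity bound hold simultaneously with probability approaching $1$. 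Neither point is substantive --- the content of the theorem is carried entirely by Theorem~\ref{thm:random_small_depth} and Lemma~\ref{lem:approx_comp_linear} --- so the main (and only real) obstacle is simply making the parameter constraints consistent across the cited facts.
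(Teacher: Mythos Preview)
Your proposal is correct and follows essentially the same route as the paper: invoke Theorem~\ref{thm:random_small_depth} to identify the protocol's output with $\cS_{n,t}$ (with $t\ge k(d-4)=\Theta(kd)$), apply Lemma~\ref{lem:approx_comp_linear} to get $C_\delta(\ket{\psi})\ge\Omega(t/n^7)$, and substitute $t=\Theta(V/n)$. Your additional remarks on the parameter-regime bookkeeping (that the hypothesis $d\le\Theta(2^{2n/5})$ should really control $t$, not $d$) are well taken and in fact flag a looseness in the paper's own statement.
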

	\begin{proof}
		Theorem 3 shows that random circuits with depth $d$ on $k'n$ qubits is enough to generate $\cS_{n,t}$ on $n$ qubits, where $t \ge k'(d-4)$. By Lemma \ref{lem:approx_comp_linear}, these states has an approximate embedded complexity
		\begin{equation}
			C^{(\mathsf{S},\varepsilon)}_{anc}(\psi) = \tilde{\Omega} \left(\frac{t}{n^7}\right) = \tilde{\Omega}\left(\frac{k'(d-4)}{n^7}\right) = \tilde{\Omega} \left( \frac{V}{n^8}\right)
		\end{equation}
		with high probability.
	\end{proof}
	
	These findings, derived from a specific construction, underscore how the design order and the approximate embedded complexity of projected states in a subsystem scale with circuit volume. These findings further clarify the trade-off between space complexity and embedded complexity and provide evidence for the general behavior of approximate embedded complexity growth of projected states.
	
	\section{Approximate embedded complexity in time-independent Hamiltonian evolutions}\label{app:approx_embed_Hamiltonian}

	In this section we provide details for the embedded complexity result for Hamiltonian evolution outlined in the main text. Specifically, we show that projected states produced by the time‑independent evolution of a local Hamiltonian have approximate embedded complexity that is lower‑bounded by the circuit volume, defined as the product of the evolution time and the total system size.
	
	As a concrete example, we consider a two‑dimensional lattice with $m_r$ rows and $m_c$ columns, so the total number of qubits is $m = m_r m_c$. The Hamiltonian is 
	\begin{equation}
		H = \sum_i h_i X_i + \sum_{i,j} h_{i,j} X_iX_j,
	\end{equation}
	where the on‑site fields $h_i$ and interaction strengths $h_{i,j}$ will be specified later.
	Our argument proceeds in three steps: We first show that this Hamiltonian can prepare graph states. Then,  using measurement‑based preparation protocols for graph states \cite{Raussendorf2001_oneway,Turner_2016,Mezher2018GraphStates}, we demonstrate that graph states efficiently generate random projected states. Concretely, after a certain evolution time $\tau$, measuring the time‑evolved state $\exp(-iH\tau)\ket{0}^{\otimes m}$ in the computational basis produces random states that form approximate state designs. This indicates that the Hamiltonian evolution exhibits deep thermalization phenomenon \cite{Cotler_2023}.
	Finally, by the relationship between approximate state designs and approximate embedded complexity established in Section~\ref{app:approx_embedded}, we conclude that these projected states have approximate embedded complexity lower-bounded by the circuit volume.
	
	\subsection{Graph states from Hamiltonian evolutions}
	A graph state is defined with respect to a graph $G=(M,E)$, where $M$ is the set of $m$ vertices arranged on a two‑dimensional lattice and $E$ is the set of edges:
	\begin{equation}
		\ket{G} = \prod_{(i,j)\in E} \mathrm{C}Z_{i,j} \ket{+}^{\otimes m},
	\end{equation}
	with the $\mathrm{C}Z$ gates defined as $\mathrm{C}Z_{i,j} = \exp(-i\pi Z_iZ_j/2)$. 
	For each qubit $i$ we choose a measurement basis $\{\ket{+\alpha_i},\ket{-\alpha_i}\}$ specified by an angle $\alpha_i$, where
	\begin{equation}
		\ket{\pm \alpha_i} = \ket{0} \pm e^{i\alpha_i}\ket{1}.
	\end{equation}
	Since $\ket{+\alpha}=Z(\alpha)H\ket{0}$ and $\ket{-\alpha}=Z(\alpha)H\ket{1}$ with $Z(\alpha)=\exp(-i\alpha Z/2)$, the measurement can be implemented by first applying $HZ(-\alpha_i)$ to each qubit of the graph state and then measuring in the computational basis. The rotated graph state is
	\begin{equation}
		\ket{\widetilde{G}} = H^{\otimes m}\prod_{i} Z_i(-\alpha_i)\ket{G}. 
	\end{equation}
	Using the identity $HZH=X$, we obtain
	\begin{equation}
		\begin{split}
			\ket{\widetilde{G}} &= H^{\otimes m}\prod_{i} Z_i(-\alpha_i)\prod_{(i,j)\in E} \mathrm{C}Z_{i,j} \ket{+}^{\otimes m} \\
			&= H^{\otimes m}\prod_{i} \exp(i\alpha_i Z_i/2)\prod_{(i,j)\in E} \exp(-i\pi Z_iZ_j / 2) H^{\otimes m} \ket{0}^{\otimes m}\\
			&= \prod_i \exp(i\alpha_i X_i/2)\prod_{(i,j)\in E} \exp(-i\pi X_iX_j / 2) \ket{0}^{\otimes m}\\
			&= \exp\left[-i \left(-\sum_i \frac{\alpha_i}{2} X_i + \sum_{(i,j) \in E} \frac{\pi}{2} X_iX_j \right)\right] \ket{0}^{\otimes m}.
		\end{split}
	\end{equation}
	Assume the coupling strength is normalized by a constant $J_0$. By setting
	\begin{equation}\label{eq:interaction_strength}
		\begin{split}
			h_{i,j} &= J_0 \quad \text{for \,} (i,j) \in E, \\
			h_i &= -\frac{\alpha_i J_0}{\pi}, \quad \tau = \frac{\pi}{2J_0},
		\end{split}
	\end{equation}
	we can prepare $\ket{\widetilde{G}}=\exp(-iH\tau)\ket{0}^{\otimes m}$ via the Hamiltonian evolution generated by $H=\sum_i h_i X_i+\sum_{(i,j)\in E} h_{i,j} X_i X_j.$
	
	\subsection{Random projected states from graph states}
	
	We now show how to choose the edge set $E$ and measurement angles $\{\alpha_i\}$ so that the projected states form an ensemble of random states, with proof adapted from Ref.~\cite{Mezher2018GraphStates}.
	First, we can specify input and output vertices for a graph state. 
	Concretely, we fix a subset $I\subset V$ and replace the qubits on $I$ by an input state $\ket{\psi}$:
	\begin{equation}
		\ket{G(\psi)} = \prod_{(i,j)\in E} \mathrm{C}Z_{i,j} \left(\ket{+}^{\otimes (M\backslash I)} \otimes \ket{\psi}^I\right)
	\end{equation}
	Then, we measure a subset of vertices and obtain a projected state on the complementary set $O\subset V$.
	
	Our construction is based on the graph‑state gadget $\mathfrak{B}$ shown in Fig.~\ref{fig:GraphStates}(a). 
	The gadget has 2 rows and 13 columns, thus $26$ vertices in total.  
	The input subset $I$ is chosen to be the first two qubits of the gadget, while the output subset $O$ corresponds to the last two qubits.
	It takes a two‑qubit input state $\ket{\psi}$, measures every qubit except the last two in the basis $\{\ket{\pm\alpha_i}\}$ (the angles $\alpha_i$ are indicated in the figure), and outputs
	\begin{equation}
		\ket{\psi} \xrightarrow{\mathfrak{B}} U_{1,2} \ket{\psi}
	\end{equation}
	where $U_{1,2}$ is drawn uniformly from a universal two‑qubit gate set $\mathsf{B}$ \cite{Mezher2018GraphStates}. 
	
	\begin{figure}[!htbp]
		\includegraphics[width=0.7\textwidth]{./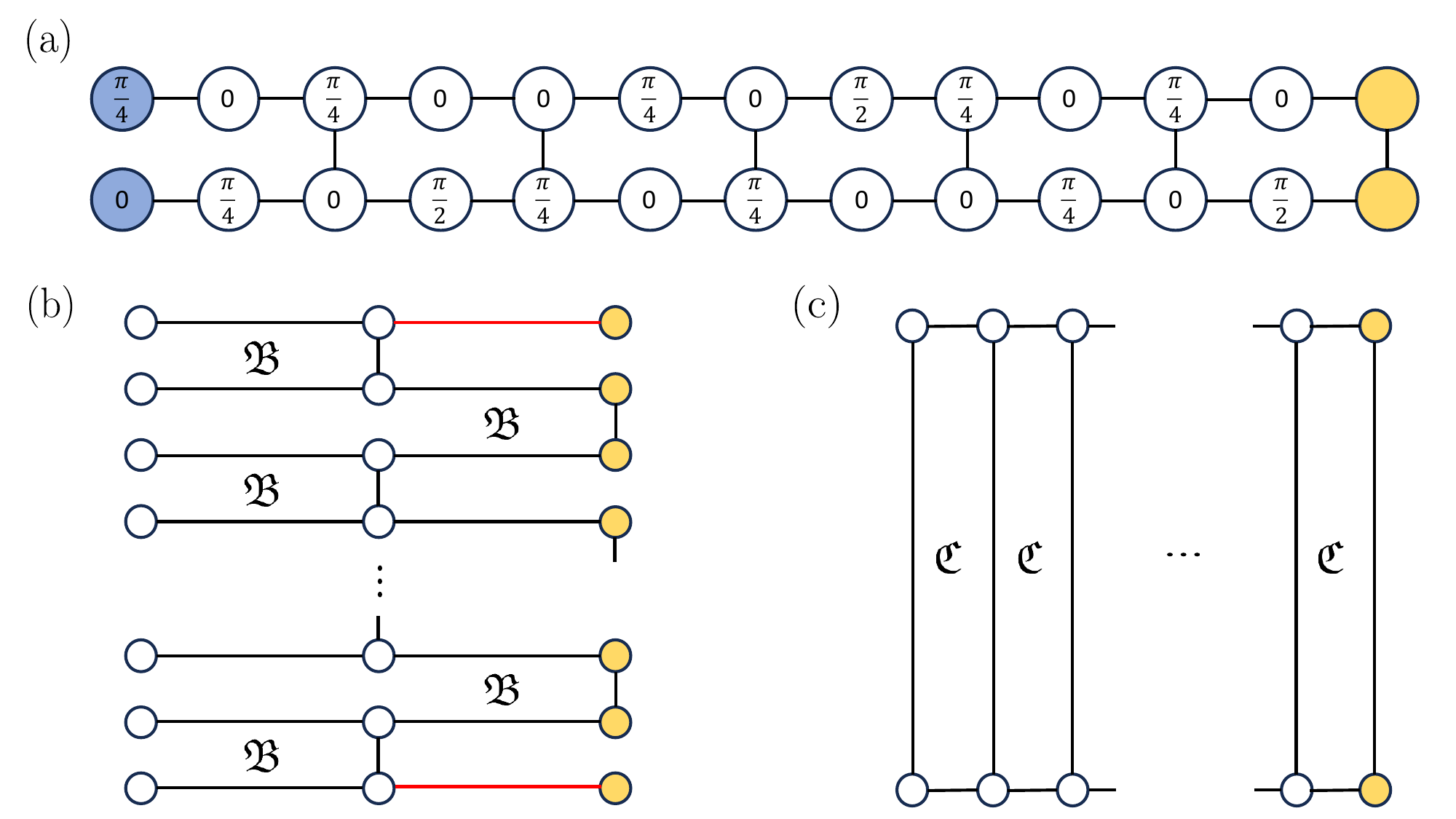}
		\caption{Construction of graph states. 
			(a) The gadget $\mathfrak{B}$ consists of $26$ vertices arranged in $2$ rows and $13$ columns. The first two qubits serve as inputs and the last two as outputs. The angles shown on each vertex indicate the measurement basis for that qubit.
			(b) The gadget $\mathfrak{C}$ is composed of two layers of gadget $\mathfrak{B}$ in a staggered arrangement, realizing two layers of local random circuits.
			(c) The whole graph state is built by concatenating multiple layers of $\mathfrak{C}$, with the projected state on the final column corresponding to the output of local random circuits applied to the initial state $\ket{+}^{\otimes m_r}$.} 
		\label{fig:GraphStates}
	\end{figure}
	
	By stacking copies of $\mathfrak{B}$ we build a larger gadget $\mathfrak{C}$ that applies two layers of local random circuits to an $m_r$‑qubit input state $\ket{\psi}$, as shown Fig.~\ref{fig:GraphStates}(b).
	We assume that $m_r$ is even, while the extension to odd values of $m_r$ is straightforward.
	A gadget $\mathfrak{C}$ is composed of two columes of gadgets $\mathfrak{B}$:
	\begin{enumerate}
		\item First column: $\mathfrak{B}$ act on pairs $(1,2),(3,4),\dots,(m_r-1,m_r)$, implementing $U^{(1)}= \bigotimes_{i}U_{2i-1,\,2i}^{(1)}$.
		\item Second column: $\mathfrak{B}$ act on pairs $(2,3),(4,5),\dots,(m_r-2,m_r-1)$, implementing $U^{(2)}=U_1^{(2)}\left(\bigotimes_{i}U_{2i,\,2i+1}^{(2)}\right)U_{m_r}^{(2)}$.
	\end{enumerate}
	Moreover, the twelve additional qubits in the first and last rows are measured with $\alpha=0$, effectively applying $\prod_{i=1}^{12} HZ^{o_i}$ to qubits $1$ and $m_r$, where $o_i\in\{0,1\}$ is the measurement outcome. This is equivalent to a single‑qubit unitary $U_1^{(2)}, U_{m_r}^{(2)}$ drawn uniformly from the set $\mathsf{A}=\{I,X,Y,Z\}$. In summary, the gadget $\mathfrak{C}$ performs
	\begin{equation}
		\ket{\psi}\xrightarrow{\mathfrak{C}}U^{(2)}U^{(1)}\ket{\psi},
	\end{equation}
	where every two‑qubit gate is drawn from $\mathsf{B}$ and all single‑qubit gates are drawn from $\mathsf{A}$.

	Finally, we concatenate $t$ copies of the gadget $\mathfrak{C}$, as illustrated in Fig.~\ref{fig:GraphStates}(c). The final output state on the last column becomes
	\begin{equation}
		\ket{\psi} \xrightarrow{\mathfrak{C} \times t} \prod_{i=1}^{2t} U^{(i)} \ket{\psi} 
	\end{equation}
	meaning that a depth‑$2t$ local random circuit has been applied to the $m_r$‑qubit input state $\ket{\psi}$. This construction corresponds to a measurement on a graph state defined over a two‑dimensional lattice of $m_r \times m_c$ qubits, with $m_c = 24t + 1$, and initial input state $\ket{+}^{\otimes m_r}$. The resulting ensemble of projected states is given by
	\begin{equation}
		\begin{split}
			\tilde{\mathcal{S}}_{m_r, 2t}
			\coloneqq \biggl\{
			\left(\prod_{i=1}^{2t} U^{(i)}\right)\ket{+}^{\otimes m_r} :
			U^{(2i-1)} = \prod_{j} U^{(2i-1)}_{2j-1,2j},\,
			U^{(2i)} = U_1^{(2i)}\left(\prod_j U_{2j,2j+1}^{(2i)}\right)U_{m_r}^{(2i)},\\
			U_{j,k}^{(i)} \sim \mathsf{B},\,
			U_1^{(2i)}, U_{m_r}^{(2i)} \sim \mathsf{A}
			\biggr\},
		\end{split}
	\end{equation}
	
	The ensemble $\tilde{\mathcal{S}}_{m_r,2t}$ is similar to the ensemble $\mathcal{S}_{m_r,2t}$ defined in the main text, except that (1) the initial state is $\ket{+}^{\otimes m_r}$ rather than $\ket{0}^{\otimes m_r}$, (2) the two‑qubit gates are drawn from $\mathsf{B}$ rather than from $SU(4)$, and (3) extra single‑qubit gates from $\mathsf{A}$ are applied at the first and last qubits of even layers.
	We summarize this construction in the following lemma.
	\begin{lemma}[Projected states of Hamiltonian evolutions]\label{lem:proj_evolution}
		Consider a Hamiltonian $H$ defined on an $m_r \times m_c$ square lattice with
		\[
		H = \sum_i h_i X_i + \sum_{(i,j)\in E} h_{i,j} X_i X_j,
		\]
		where $m_c = 24t + 1$, the edge set $E$ is specified by Fig.~\ref{fig:GraphStates}, and the coefficients $h_i$, $h_{i,j}$, and evolution time $\tau$ are given in Eq.~\eqref{eq:interaction_strength}. Let
		\[
		\ket{\psi} = \exp(-iH\tau)\ket{0}^{\otimes m_r m_c}.
		\]
		Then, measuring $m_r \times (m_c - 1)$ qubits of $\ket{\psi}$ in the computational basis produces a projected state on the final $m_r$ qubits drawn from the ensemble $\tilde{\mathcal{S}}_{m_r,2t}$.
	\end{lemma}

	\subsection{Approximate embedded complexity in Hamiltonian evolutions}
	We now establish a lower bound on the embedded complexity of projected states obtained from the above Hamiltonian evolution. Our argument proceeds by showing that these projected states form approximate state designs.
	As stated in Fact~\ref{fact:g_scaling}, the ensemble $\mathcal{S}_{n,t}$ forms approximate state designs of high orders. Ref.~\cite{chen2024incompressibility} further extends this result to universal gate sets, incurring additional $\mathrm{polylog}(k)$ factors in the depth. Here we adapt this result to our setting.
	
	\begin{fact}[\cite{chen2024incompressibility}]\label{fact:universal_design}
		Let $n \ge 2$ and $k \le 2^{\cO(n)}$.  The state ensemble $\tilde{\cS}_{n,t}$ forms an $\epsilon$‑approximate state $k$‑design in depth $t = g(n,k,\epsilon)$,
		where
		\begin{equation}
			g(n,k,\epsilon) = \cO\left([nk+ \log (\epsilon^{-1})]\mathrm{polylog}(k)\right).
		\end{equation}
		Under the condition that $k \le 2^{\cO(n)}$, $g(n,k,\epsilon)$ can be made $\poly(n) k$, in which the dependence on $\epsilon$ are hidden.
	\end{fact}
	
	Hence, the evolved state $\exp(-iH\tau) \ket{0}^{\otimes m}$ exhibit deep thermalizaton phenomenon \cite{Cotler_2023}. Based on this, we can establish the approximate embedded complexity of the projected states generated by the time-independent Hamiltonian evolution.
	
	\begin{theorem}[Approximate embedded complexity in time-independent Hamiltonian evolution]
		Let $\mathsf{S}$ be a finite two‑qubit universal gate set and consider an $m_r \times m_c$ lattice ($m=m_r m_c$ qubits) with Hamiltonian $H$ and evolution time $\tau$ as described in Lemma~\ref{lem:proj_evolution}. 
		After measuring $m_r(m_c - 1)$ qubits of the the evolved state $
		e^{-iH\tau}\ket{0}^{\otimes m}$ in the computational basis, the projected state $\ket{\psi}$ on the $m_r$ qubits in the last column satisfies, with probability at least $1-\delta$, 
		\begin{equation}
			C_{anc}^{(\mathsf{S},\varepsilon)}(\psi) = \tilde{\Omega}\left( \min\Bigl\{\frac{V}{ p_1(m_r)}, m_r2^{m_r/2}\Bigr\}\right).
		\end{equation}
		where $V = m\tau$ is the circuit volume and $p_1$ is a polynomial function, provided that
		\begin{equation}
			\varepsilon < \sqrt{1-2^{n/2}}, \quad  m_c = \Omega\left( m_r^{-1}p_1(m_r) \log(\delta^{-1})\right).
		\end{equation}
	\end{theorem}
	
	\begin{proof}
		By Lemma~\ref{lem:proj_evolution} and Fact \ref{fact:universal_design}, the projected ensemble is $\tilde{\cS}_{m_r,2t}$, which forms an $\epsilon$‑approximate $k$‑design with $\epsilon = \cO(1)$ and $k = t/p_1(m_r)$.  Theorem~\ref{thm:approximate_embedded_complexity_design} then implies that, when
		\begin{equation}
			2^{m_r/2} \ge k \ge \Omega\Bigl(m_r^{-1}G\bigl(\log G + \log |\mathsf{S}|\bigr) + m_r^{-1}\log(\delta^{-1})\Bigr),
		\end{equation}
		we have $C_{anc}^{(\mathsf{S},\varepsilon)} > G$ with probability at least $1-\delta$.  Choosing $m_c = \Omega\left( m_r^{-1}p_1(m_r) \log(\delta^{-1})\right)$ ensures $m_r^{-1}\log(\delta^{-1}) = \cO(k)$, yielding 
		\begin{equation}
			C_{anc}^{(\mathsf{S},\varepsilon)}(\psi)
			= 
			\tilde{\Omega}\left(m_r \min\{k,2^{m_r/2}\}\right)
			= 
			\tilde{\Omega}\!\left( \min\Bigl\{\frac{m_rt}{p_1(m_r)}, m_r2^{m_r/2}\Bigr\}\right)
			= 
			\tilde{\Omega}\!\left(\min\Bigl\{\frac{m_r m_c \tau}{p_1(m_r)},m_r2^{m_r/2}\Bigr\}\right),
		\end{equation}
		where we use $m_c = 24t+1$ and $\tau = \cO(1)$.
	\end{proof}
	
	\section{Classical hardness of sampling from random-gate-teleportation circuits}\label{app:hardness_sampling}
	
	In this section, we provide complexity-theoretic evidence that sampling from random-gate-teleportation (RGT) circuits is as hard as sampling from standard random circuits of comparable circuit volume acting on a subsystem, {solidifying our key message that the circuit volume establishes a fundamental spacetime characterization of the complexity of quantum systems}. Specifically, we consider RGT circuits acting on $m = (2k+1)n$ qubits, where $k \in \mathbb{N}^+$ and the first $n$ qubits constitute the subsystem of interest for random circuit sampling. The complexity analysis presented here extends straightforwardly to the case $m = 2kn$ as well.
	
	Recall that a RGT circuit first prepares $k$ Choi states
	\begin{equation}
		\ket{U_1^{T},U_2}_{A_1A_2},\,
		\ket{U_3^{T},U_4}_{A_3A_4},\,
		\ldots,\,
		\ket{U_{2k-1}^{T},U_{2k}}_{A_{2k-1}A_{2k}}
	\end{equation}
	together with the state $U_0\ket{0^{\otimes n}}_{A_0}$, where each $A_i$ is an
	$n$-qubit subsystem and every $U_i$ is a depth-$d$ local
	random circuit for $0\le i\le 2k$.  
	Bell-state measurements are then performed on the pairs
	$A_0A_1,A_2A_3,\cdots,A_{2k-2}A_{2k-1}$, yielding outcomes
	$\mathbf{a}_0\mathbf{a}_1\cdots\mathbf{a}_{2k-2}\mathbf{a}_{2k-1}$,
	followed by a computational-basis measurement on $A_{2k}$ with outcome
	$\mathbf{a}_{2k}$, where each $\mathbf{a}_i\in\{0,1\}^n$.
	The joint outcome probability is  
	\begin{equation}\label{eq:relation_probability}
		p(\ba_0\ba_1\cdots\ba_{2k}) = 2^{-2k} \abs{\bra{\ba_{2k}} U\ket{0} }^2, \quad U \coloneqq U_{2k}U_{2k-1}X^{\ba_{2k-1}}Z^{\ba_{2k-2}}U_{2k-2}U_{2k-3}\cdots U_1X^{\ba_1}Z^{\ba_0}U_0.
	\end{equation}

	We now show that sampling from the distribution $p$ is classically hard, based on the same complexity-theoretic assumptions underpinning the hardness of random circuit sampling. The key intuition is that, conditioned on the outcomes
	$\mathbf{a}_0,\mathbf{a}_1,\ldots,\mathbf{a}_{2k-1}$,
	the unitary $U$ is a random depth-$t$ circuit on $n$ qubits, where $t\coloneq(2k+1)d$.
	Therefore, sampling from $p$ has comparable hardness of as sampling from random quantum circuits of depth $t$ on $n$ qubits. 
	We formalize this intuition in the following, using a proof strategy that parallels the standard approach for establishing the classical hardness of random circuit sampling~\cite{Hangleiter2023AdvantageSampling}.
	
	Here, we denote $\tilde{\mathcal{C}}_d$ as the family of RGT circuits on $m$ qubits, where each block $U_i$ has depth $d$, and let $\mathcal{C}_t$ denote the family of depth-$t$ circuits on $n$ qubits. 
	
	\subsection{Worst-case hardness with constant multiplicative error}
	
	We show the hardness result on classically sampling from the output distribution of circuits in $\tilde{\mathcal{C}}_d$ to within a constant multiplicative error.  
	Our argument relies on the following complexity-theoretic result.
	
	\begin{fact}[Multiplicative-error sampling hardness {\cite[Sec.\,IV C]{Hangleiter2023AdvantageSampling}}]\label{fact:multiplicative_error}
		Let $\mathcal{C}$ be a family of quantum circuits for which approximating the output probability $|\langle \mathbf{0}|C|\mathbf{0}\rangle|^{2}$ to multiplicative error $c=\mathcal{O}(1)$ for some $C\in\mathcal{C}$ is $\mathsf{GapP}$-hard.  
		Suppose there existed a classical polynomial-time algorithm that, for any $C\in\mathcal{C}$, outputs samples from a distribution $q(x)$ satisfying
		\begin{equation}
			c_1\,p_C(x)\leq(x)\le\frac{p_C(x)}{c_1},
			\qquad
			p_C(x)\coloneqq|\langle x|C|\mathbf{0}\rangle|^{2},
		\end{equation}
		for some constant $c_1 < c$.  
		Then the polynomial hierarchy would collapse to its third level $\Sigma_{3}$.
	\end{fact}
	
	The requirement that approximating $p_C(0)$ to $\mathcal{O}(1)$ multiplicative error be $\mathsf{GapP}$-hard is known to hold for many circuit families, including instantaneous-quantum-polynomial circuits which are non-universal \cite{Bremner2010IQP_Circuits}, and for the depth-$t$ circuit family $\mathcal{C}_t$ whenever $t$ exceeds a certain threshold \cite{Terhal2004ConstantDepth}.
	We now show that the same hardness holds for $\tilde{\mathcal{C}}_d$ whenever it holds for $\mathcal{C}_{t}$.
	
	\begin{theorem}
		If approximating the output probability $p_C(0)$ of some circuits $C\in\mathcal{C}_{t}$ to $\mathcal{O}(1)$ multiplicative error is $\mathsf{GapP}$-hard, then the same is true for some circuits in $\tilde{\mathcal{C}}_{d}$.  
		Consequently, unless the polynomial hierarchy collapses to its third level $\Sigma_{3}$, no classical polynomial-time algorithm can sample from the output distribution $p_{C}(x)$ of $C\in\tilde{\mathcal{C}}_{d}$ within $\mathcal{O}(1)$ multiplicative error.
	\end{theorem}
	
	\begin{proof}
		From Eq.~\eqref{eq:relation_probability} we have
		\begin{equation}
			p_{C}(\mathbf{0})
			= 2^{-2k}p_U(\mathbf{0}),
		\end{equation}
		where $C\in\tilde{\mathcal{C}}_{d}$ and the corresponding $U\in\mathcal{C}_{t}$.  
		Suppose we could approximate $p_{C}(\mathbf{0})$ to multiplicative error $c=\mathcal{O}(1)$, i.e.,
		\begin{equation}
			c2^{-2k}p_U(\mathbf{0}) \le q \le \frac{2^{-2k}p_U(\mathbf{0})}{c}.
		\end{equation}
		Multiplying $q$ by $2^{2k}$ yields an approximation of $p_{U}(0)$ with the same multiplicative error $c$.  
		By assumption, producing such an approximation for some $U\in\mathcal{C}_{t}$ is $\mathsf{GapP}$-hard. 
		As a result, the approximation task is also $\mathsf{GapP}$-hard for $\tilde{\mathcal{C}}_{d}$, and the sampling hardness follows from Fact~\ref{fact:multiplicative_error}.
	\end{proof}

	\subsection{Average-case hardness with additive error}
	Even fault-tolerant quantum devices inevitably sample from a noisy distribution $p$ that differs from the ideal distribution $p_U$ by an additive error $\epsilon$ in total-variation distance,
	\begin{equation}
		\dTV{p-p_U} \le \epsilon.  
	\end{equation}
	As a result, practically, it is  more meaningful to express hardness results in this additive-error metric.  In practice one focuses on average-case hardness, since multiplicative-error guarantees do not straightforwardly imply additive-error bounds for a single instance of a circuit.
	
	\begin{fact}[Average-case hardness with additive error {\cite[Theorem 17]{Hangleiter2023AdvantageSampling}}]\label{fact:hardness_average_additive}
		Let $\mathcal C$ be a circuit family that satisfies  
		\begin{enumerate}
			\item[(i)] the hiding property, and  
			\item[(ii)] $\mathsf{GapP}$-hardness of approximating $p_C(\mathbf 0)$ on any $1-\delta$ fraction of circuits $C\in\mathcal C$ to accuracy
			\begin{equation}\label{eq:precision_hardness}
				\frac{1}{\poly(n)}p_C(\mathbf{0}) + \frac{2\epsilon}{2^n \delta}\left(1 + \frac{1}{\poly(n)}\right).
			\end{equation}
			where $\mathrm{poly}(n)$ is any polynomial.  
		\end{enumerate}
		Then, unless the polynomial hierarchy collapses, no classical polynomial-time algorithm can, with probability at least $1-\delta$ over a random $C\sim\mathcal C$, produce samples from $p$ satisfying $\dTV{p-p_C} \le \epsilon$.
	\end{fact}
	
	A circuit family $\mathcal{C}$ is said to possess the \emph{hiding property} if, for every circuit $C\in\mathcal{C}$ and every bit string $\mathbf{a}\in\{0,1\}^{n}$, one can efficiently construct a circuit $C'\in\mathcal{C}$ such that  
	\begin{equation}
		p_C(\ba) = p_{C'}(\mathbf{0})
	\end{equation}
	and, when $\mathbf{a}$ is drawn uniformly at random, the induced distribution of $C'$ matches that of an independently sampled circuit from $\mathcal{C}$:  
	\begin{equation}
		\Pr_{C'\sim \mathcal{C}}[C'] = \Pr_{C\sim \mathcal{C}, \ba \sim\{0,1\}^n} [C'].
	\end{equation}
	
	For the RGT circuit ensemble, Eq.~\eqref{eq:relation_probability} shows how to build such a circuit efficiently: replace each block $U_iU_{i-1}$ by $X^{\mathbf a_{i+1}}Z^{\mathbf a_i}U_iU_{i-1}$ and the final block $U_{2k}U_{2k-1}$ by $X^{\mathbf a_{2k}}U_{2k}U_{2k-1}$.  Because the Haar measure on $SU(4)$ is unitarily invariant, the resulting circuit $C'$ is distributed exactly according to $\tilde{C}_d$, so the RGT ensemble satisfies the hiding property.
	
	To satisfy condition~(ii), it is common in the literature to simplify the required precision to
	either an additive error of $\mathcal{O}(2^{-n})$ obtained via Markov’s inequality, or a constant relative error obtained via anticoncentration.  
	Following Ref.~\cite{Hangleiter2023AdvantageSampling}, we analyze these two scenarios separately and show that the average-case hardness of sampling from RGT circuits rests on the same complexity-theoretic assumptions as the standard hardness results for standard random-circuit sampling.
	
	\subsubsection{Hardness from approximating probability with $O(2^{-n})$ additive error}
	
	Condition~\eqref{eq:precision_hardness} in Fact~\ref{fact:hardness_average_additive} can be further simplified using Markov’s inequality.  
	Because the hiding property guarantees that the average of $p_C(\mathbf 0)$ over circuits $C\in\mathcal{C}$ is $2^{-n}$, we have
	\begin{equation}
		\Pr[p_C(\mathbf{0}) \ge \frac{1}{2^n\alpha}] \le \alpha
	\end{equation}
	for any $\alpha\in(0,1)$.  
	Therefore, with probability at least $(1-\alpha)$ over the choice of $C$, the quantity $p_C(\mathbf{0})$ is at most $2^{-n}/\alpha$.  
	On this fraction of instances, the additive term $\cO(2^{-n})$ in~\eqref{eq:precision_hardness} dominates.
	Consequently, if approximating $p_C(\mathbf 0)$ to additive error $\mathcal{O}(2^{-n})$ is $\mathsf{GapP}$-hard on \emph{any} $(1-\delta)(1-\alpha)$ fraction of the circuit ensemble $\mathcal{C}$, then condition~(ii) of Fact~\ref{fact:hardness_average_additive} is satisfied.  In other words, the average-case sampling hardness now reduces to the following requirement:
	\begin{itemize}
		\item It is $\mathsf{GapP}$-hard to approximate $p_C(\mathbf{0})$ within $\cO(2^{-n})$ additive error  on any $(1-\delta)(1-\alpha)$ fraction of the family $\mathcal{C}$.~\footnote{One must check that the failure probabilities $\delta$ and $\alpha$ can be treated independently; see Ref.~\cite{Hangleiter2023AdvantageSampling} for details.}
	\end{itemize}
	
	We now demonstrate that this requirement for $\mathcal{C}_t$ is already sufficient to establish the average-case sampling hardness of the RGT ensemble $\tilde{\mathcal{C}}_d$.
	\begin{theorem}
		Assume the $\mathsf{GapP}$-hardness of approximating the output probability $p_U(\mathbf{0})$ to additive accuracy $\mathcal{O}(2^{-n})$ on any $(1-\delta)(1-\alpha)$ fraction of depth-$t$ circuits $U\in\mathcal{C}_t$. 
		Then, unless the polynomial hierarchy collapses, no classical polynomial-time algorithm can, with probability at least $1-\delta$ over a random $C\sim \tilde{\mathcal C}_d$, produce samples from $p$ satisfying $\dTV{p-p_C} \le \epsilon$.
	\end{theorem}
	
	\begin{proof}
		We show that the assumed $\mathsf{GapP}$-hardness of probability
		approximation for the depth-$t$ family $\mathcal{C}_{t}$
		carries over to the RGT family $\tilde{\mathcal{C}}_{d}$.
		By Fact~\ref{fact:hardness_average_additive}, this implies the average-case sampling hardness of
		$\tilde{\mathcal{C}}_{d}$.
		
		Assume there exists a classical algorithm that, for a $(1-\delta)(1-\alpha)$ fraction of circuits $C \in \tilde{\mathcal{C}}_d$, outputs a value $q_0$ satisfying
		\begin{equation}
			\abs{q_0 - p_C(\mathbf{0})} \le \mathcal{O}(2^{-m}).
		\end{equation}
		By Eq.~\eqref{eq:relation_probability} we have $p_C(\mathbf{0}) = 2^{-2k} p_U(\mathbf{0})$ for some depth-$t$ circuit $U \in \mathcal{C}_t$. Multiplying the inequality by $2^{2k}$ gives
		\begin{equation}
			\abs{2^{2k} q_0 - p_U(\mathbf{0})} \le \mathcal{O}(2^{-n}).
		\end{equation}
		So the same algorithm approximates $p_U(\mathbf 0)$ to additive precision $\mathcal{O}(2^{-n})$ on a $(1-\delta)(1-\alpha)$ fraction of $\mathcal{C}_{t}$.  By assumption, this task is $\mathsf{GapP}$-hard.  Therefore the same task for circuits in $\tilde{\mathcal{C}}_{d}$ is also $\mathsf{GapP}$-hard.
	\end{proof}
	
	\subsubsection{Hardness from approximating probability with constant relative error}
	The hardness of sampling with additive error can also be reduced to the task of approximating a single output
	probability to constant relative error. That is, producing
	$q_0$ such that
	\begin{equation}
		\abs{q_0-p_C(\mathbf{0})} \le c p_C(\mathbf{0})
	\end{equation}
	for some constant $c>0$.
	To make this reduction we require an \emph{anticoncentration} property.
	
	\begin{definition}[Anticoncentration]
		A circuit family $\mathcal{C}$ anticoncentrates if for a constant $\alpha>0$, there exists a constant $\gamma(\alpha)>0$, independent of the system size $n$,  such that 
		\begin{equation}
			\Pr_{C \sim \mathcal{C}} [p_C(\mathbf{0}) \ge \frac{\alpha}{2^n}] \ge \gamma(\alpha).
		\end{equation}
	\end{definition}
	
	Given anticoncentration, condition~\eqref{eq:precision_hardness} in
	Fact~\ref{fact:hardness_average_additive} can be reduced to the following
	assumption:
	
	\begin{itemize}
		\item It is $\mathsf{GapP}$-hard to output $q_0$ satisfying
		\begin{equation}\label{eq:hardness_relative_error}
			\abs{q_0 - p_C(\mathbf{0})} \le \left(\frac{2 \epsilon}{\delta\alpha} + \frac{1}{\poly(n)}\right)p_C(\mathsf{\mathbf{0}}) \coloneqq cp_C(\mathbf{0})
		\end{equation}
		on any $\gamma(\alpha)(1-\delta)$ fraction of circuits
		$C\in\mathcal{C}$.
	\end{itemize}
	
	We have the following theorem:
	\begin{theorem}
		Assume the following two conditions hold:
		\begin{enumerate}
			\item The depth-$t$ circuit family $\mathcal{C}_t$ anticoncentrates.
			\item Approximating
			\(p_U(\mathbf 0)\) to constant relative error~$c$ on any
			$\gamma(\alpha)(1-\delta)$ fraction of circuits
			\(U\in\mathcal{C}_t\) is $\mathsf{GapP}$-hard.
		\end{enumerate}
		Then, unless the polynomial hierarchy collapses, no classical polynomial-time algorithm can, with probability at least $1-\delta$ over a random $C\sim \tilde{\mathcal C}_d$, produce samples from $p$ satisfying $\dTV{p-p_C} \le \epsilon$.
	\end{theorem}

	\begin{proof}
		By Eq.~\eqref{eq:relation_probability},  the anticoncentration of \(\tilde{\mathcal{C}}_d\) matches that of $\mathcal{C}_t$:
		\begin{equation}
			\Pr_{C\sim\tilde{\mathcal{C}}_d} [p_C(\mathbf{0}) \ge \frac{\alpha}{2^m}] = \Pr_{U\sim\mathcal{C}_{t}} [p_U(\mathbf{0}) \ge \frac{\alpha}{2^n}] = \gamma(\alpha).
		\end{equation}

		Moreover, any estimate $q_0$ satisfying the relative-error bound
		$\abs{q_0-p_C(\mathbf 0)}\le c\,p_C(\mathbf 0)$ yields
		$\abs{2^{2k}q_0-p_U(\mathbf 0)}\le c\,p_U(\mathbf 0)$,
		i.e., an estimate of \(p_U(\mathbf 0)\) with the same relative error.
		Therefore, if Condition~\eqref{eq:hardness_relative_error} holds for
		\(\mathcal{C}_t\), it also holds for \(\tilde{\mathcal{C}}_d\).
		Combining this with Fact~\ref{fact:hardness_average_additive} then implies the hardness of average-case sampling stated in the proposition.
	\end{proof}

	In summary, assuming the polynomial hierarchy does not collapse, we have shown:
	\begin{itemize}
		\item \textbf{Worst–case hardness.}  
		If approximating the probability $p_U(\mathbf 0)$ for depth–$t$ circuits $U\in\mathcal{C}_t$ to constant multiplicative error is worst-case $\mathsf{GapP}$-hard, then no efficient classical algorithm can sample from the RGT ensemble $\tilde{\mathcal{C}}_d$ within constant multiplicative error in the worst case.
		
		\item \textbf{Average–case hardness.}  
		If approximating $p_U(\mathbf 0)$ to additive precision $\cO(2^{-n})$ or to constant relative error is $\mathsf{GapP}$-hard on any constant fraction of $\mathcal{C}_t$, then no efficient classical algorithm can sample from $\tilde{\mathcal{C}}_d$ within additive error $\epsilon$ in the average case.
	\end{itemize}
	
	Existing proofs of random-circuit sampling hardness mainly reduce the sampling task to the same probability-approximation problems.  Consequently, the complexity-theoretic barriers for sampling RGT circuits $\tilde{\mathcal{C}}_d$ are on par with those for sampling from random circuits $\mathcal{C}_t$  with comparable circuit volume. See Ref.~\cite{Hangleiter2023AdvantageSampling} for a comprehensive discussion of random circuit sampling.
	
	\section{Ancilla-assisted shadow tomography}\label{app:shadow}
	We have shown that performing Bell state measurements can teleport random gates between subsystems in the main text. Building on this, we introduce an ancilla-assisted variant of the shadow tomography protocol, which aims to measure the properties of a state of interest $\rho$, accessible at the beginning of each experiment.

	Our protocol is inspired by the classical shadow protocol \cite{Huang_2020} that has drawn substantial recent interest. In the original classical shadow protocol, the states are measured in randomized bases, and this randomness is introduced by applying random unitaries $U$ to the input state $\rho \mapsto U \rho U^{\dagger}$. One needs to apply global random unitaries to estimate many important properties of the input state $\rho$, such as overlap fidelities with respect to many target states. This poses a major challenge for current quantum devices due to the highly sophisticated experimental controls required. Recent research has focused on easing the experimental requirement for classical shadow, such as developing shallow-depth classical shadow protocols \cite{bertoni2023shallow, hu2022classical}, or replacing the random unitaries with Hamiltonian evolutions \cite{Tran_2023, McGinley_2023, liu2024predicting}. 
	
	Here, we propose a protocol that avoids evolving the state $\rho$ under random unitaries or Hamiltonian dynamics. The key idea is to introduce the randomness from the ancillary system and Bell state measurements. We present the ancilla-assisted shadow tomography protocol in Box \ref{box:mbcs_protocol} and depict it in Fig.~\ref{fig:mbcs}.
	
	\begin{mybox}[label={box:mbcs_protocol}]{Ancilla-assisted shadow tomography}
		\textbf{Input:} 
		\begin{itemize}
			\item[] $N$ copies of an $n$-qubit state $\rho$.
		\end{itemize} 
		\textbf{Protocol:}
		\begin{enumerate}
			\item Select an $n$-qubit state ensemble $\mathcal{S}$.
			\item For each copy of input state $\rho$, randomly draw a state $\ket{\phi} \in \cS$.
			\item {Perform Bell state measurement between $\ket{\phi}$ and $\rho$, and record the measurement result.}
			\item  {Obtain $N$ data points by repeating Steps 2 to Step 3 on $N$ copies of $\rho$.}
			\item {Process the measurement results on classical computers to predict properties of the state $\rho$.}
		\end{enumerate}
	\end{mybox}
	
	\begin{figure}[!ht]
		\includegraphics[width=0.7\textwidth]{./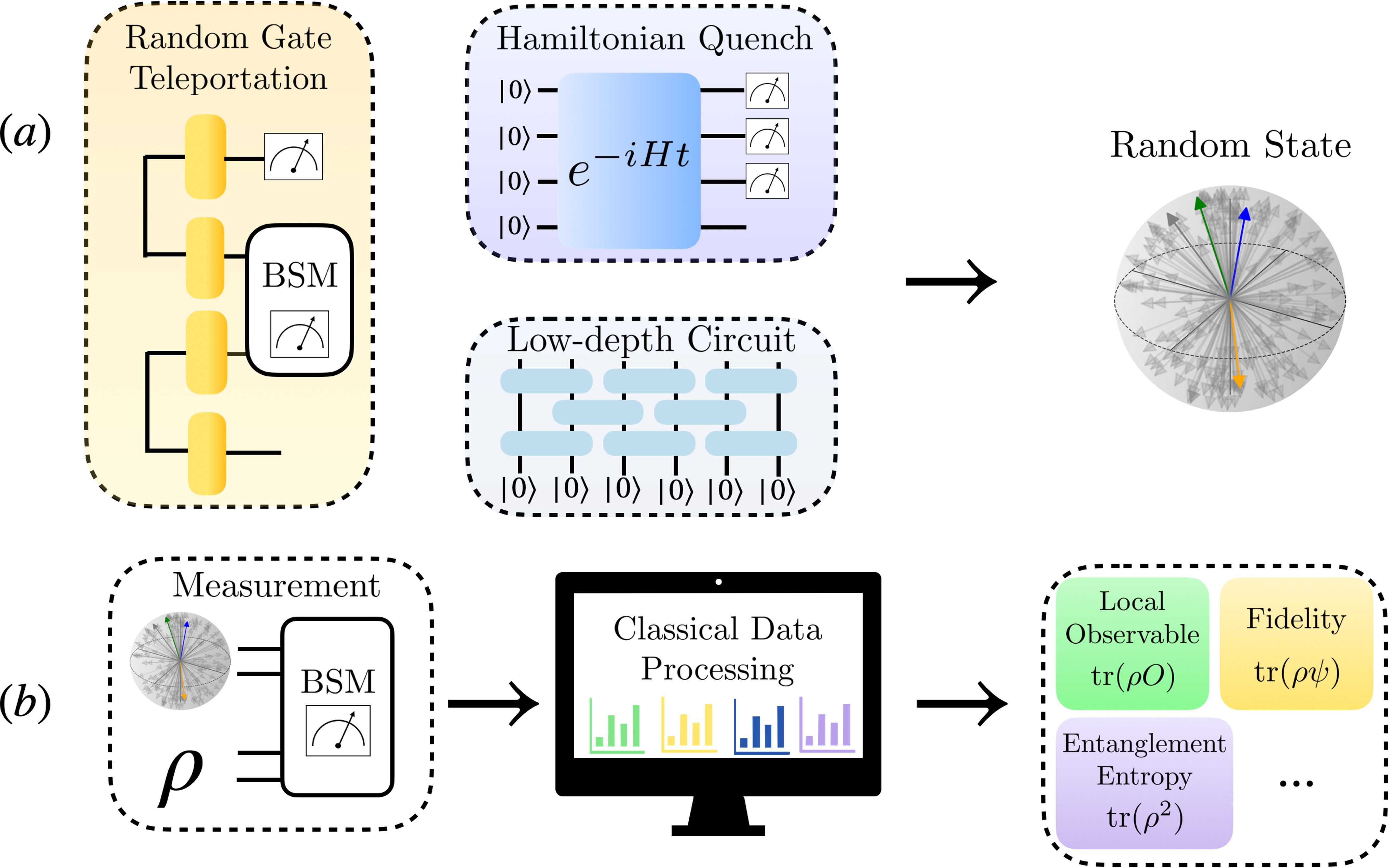}
		\caption{The ancilla-assisted shadow tomography protocol. (a) This protocol requires ancillary random states. The choice of the random state offers a large degree of freedom, providing the protocol with significant generality. For example, the states can be prepared through random gate teleportation proposed in this work, requiring smaller circuit depth, or via the projected ensemble of Hamiltonian evolutions \cite{Cotler_2023}, potentially applicable in analog quantum simulators. The shallow classical shadow \cite{bertoni2023shallow, hu2022classical} can also be adopted into this protocol by preparing the random states via low-depth random circuits. (b) The protocol acquires classical data by performing Bell state measurements (BSM) on the input state $\rho$ and ancillary random states. The measurement data can later be utilized to predict many properties of $\rho$ through classical data processing, such as the expectation values of observables, state fidelity, and nonlinear functions like the second-order R{\'e}nyi entropy.}
		\label{fig:mbcs}
	\end{figure}
	
	Although reconstructing the state requires exponentially many experiments, our primary motivation is to measure properties instead of fully recovering the state $\rho$. In Appendix \ref{app:shadow_analysis}, we provide further analysis of our protocol. We use measurement results to construct unbiased estimators $\hat{\sigma}$ of $\rho$, allowing us to obtain an unbiased estimator $\tr(O \hat{\sigma})$ of the expectation value $\tr(O \rho)$. We show that when the state ensemble is selected as a state $3$-design or as local random states, our protocol achieves performance comparable to that of the classical shadow protocol.
	
	\begin{theorem}\label{thm:shadow_performance}
		To predict the expectation value of $M$ observables $\tr(O_1 \rho), \tr(O_2 \rho), \cdots, \tr(O_M \rho)$ to additive error $\epsilon$, the ancilla-assisted shadow tomography protocol requires $N$ copies of input state $\rho$, where:
		\begin{enumerate}
			\item {$N = \cO(\frac{\log M }{\epsilon^2}\max_i \tr(O_i^2))$ when $\cS$ is chosen as a state $3$-design.}
			\item {$N = \cO(\frac{\log M }{\epsilon^2}\max_i 4^{k_i})$ when $\cS$ is chosen as ensemble of local random states $\ket{\phi} = \ket{\phi_1}\otimes \ket{\phi_2}\otimes \cdots \otimes \ket{\phi_n}$, where each $\ket{\phi_i}$ is uniformly drawn from 
				\begin{equation*}
					\{\ket{0},\ket{+} = \frac{1}{\sqrt{2}}(\ket{0} + \ket{1}),\ket{+i} = \frac{1}{\sqrt{2}}(\ket{0} + i\ket{1})\}.
				\end{equation*} Here, $k_i$ is the locality of observable $O_i$, and $\max_i \norm{O_i}_{\infty} \le 1$.}
		\end{enumerate}
	\end{theorem}
	
	Moreover, we can predict nonlinear functions $f(\rho)$, such as the second-order R{\'e}nyi entropy $\tr(\rho^2)$, by utilizing the U statistics to construct unbiased estimators of $\rho^{\otimes k}$ via $N$ independent unbiased estimator $\{\hat{\sigma}_i\}$ of $\rho$ \cite{Huang_2020, Tran_2023, Hoeffding48Ustatistics}:
	\begin{equation}
		\binom{N}{k}^{-1} \sum_{1 \le i_1,\cdots,i_k \le N} \hat{\sigma}_{i_1} \otimes \hat{\sigma}_{i_2} \otimes \cdots \otimes \hat{\sigma}_{i_k}.
	\end{equation}
	
	A key benefit of our protocol is that it eliminates the need to evolve the input states online, making it particularly suitable for practical scenarios where preparing random states is more accessible than evolving the input state by random unitaries. 
	For example, it has been shown that state designs can be constructed using unitaries that do not form corresponding unitary designs \cite{west2024randomensembles}, suggesting that the circuit complexity of implementing the ancilla-assisted classical shadow may be lower than that of directly evolving the state with random unitaries. Generally speaking, it is practically beneficial to delegate the hardness of implementing dynamics online to offline state preparation, an insight that underlies many important quantum computing schemes including MBQC and magic state distillation \cite{Bravyi_2005}. 
	Moreover, when selecting $\mathcal{S}$ as a global random state ensemble, the state $\ket{\phi} \in \mathcal{S}$ can be prepared using random or Clifford circuits through the random gate teleportation protocol, which reduces the circuit depth required to predict global properties, such as fidelity to target states, to a constant. In contrast, previous work has only achieved logarithmic depth for similar tasks \cite{bertoni2023shallow, hu2022classical, schuster2024randomunitariesextremelylow}.

	Additionally, our protocol exhibits substantial flexibility in selecting ancillary state ensembles, which can be easily adapted to different variations of shadow tomography, such as shallow classical shadows \cite{bertoni2023shallow, hu2022classical}, Hamiltonian-driven classical shadow \cite{Tran_2023, liu2024predicting} and thrifty shadow estimation protocol \cite{Helsen_2023, Zhou_2023}.  Recent studies have demonstrated that random states can be prepared by measuring subsystems of a state evolved under Hamiltonian evolutions \cite{Cotler_2023, Choi_2023}. This state preparation method can be utilized in our protocol to simplify experimental control further and can be implemented in current analog quantum simulators. 

	\section{Addtional analysis for the ancilla-assisted shadow tomography}\label{app:shadow_analysis}
	In this section, we delve deeper into the analysis of the ancilla-assisted shadow tomography. Initially, we examine the POVM operators associated with a chosen state ensemble and analyze their tomographical completeness. Subsequently, we detail the data processing schemes employed in our protocol. We demonstrate that our approach achieves performance comparable to the classical shadow protocol \cite{Huang_2020}, particularly when the state ensemble is selected as state 3-designs or product random states.

	\subsection{POVM of a state ensemble}
	First, we analyze the POVM operators in the ancilla-assisted shadow tomography protocol. For a state ensemble $\cS$ and a given input state $\rho$ in system $B = B_1B_2\cdots B_n$, we select a state $\ket{\phi} \in \cS$ in system $A = A_1A_2\cdots A_n$ and perform a Bell state measurement on each pair of qubits $A_iB_i$.  Let $a_i$ and $b_i$ denote the measurement results on the $i$-th qubit. Denote the unnormalized maximally entangled state as $\ket{\Phi}$, according to Eq.~\eqref{eq:Bell_state} and Eq.~\eqref{eq:post_measurement}, the probability of obtaining bitstrings $\ba = a_1a_2\cdots a_n$ and $\bb = b_1b_2\cdots b_n$ is:
	
	\begin{equation}\label{eq:ab_prob}
		\begin{split}
			p^{\phi}_{\ba\bb} &= \frac{1}{2^n} \tr\{  [X^{\ba}_AZ^{\bb}_A(\ketbra{\Phi})^{\otimes n}Z^{\bb}_A X^{\ba}_A]  \ketbra{\phi} \otimes \rho\}\\
			&= \frac{1}{2^n} \bra{\phi^*} X^{\ba}Z^{\bb}  \rho Z^{\bb} X^{\ba} \ket{\phi^*},
		\end{split}
	\end{equation}
	where $\ket{\phi^*}$ denotes the complex conjugate of $\ket{\phi}$. Multiplying by the probability $p_{\phi}$ of choosen state $\phi$ in $\cS$, this measurement result corresponds to a POVM operator 
	\begin{equation}
		M_{\phi,P} = \frac{p_{\phi}}{2^n} P \ketbra{\phi^*} P^{\dagger},
	\end{equation}
	where $P = X^{\ba}Z^{\bb}, \ba, \bb \in \{0,1\}^n$. Therefore, for a given ensemble $\cS$, the corresponding POVM operators are $\{M_{\phi,P}\}$.
	
	\subsection{Tomographical completeness}
	
	As long as these POVM operators are \emph{tomographically complete}, it is possible to reconstruct the state and thus predict arbitrary properties of the state from the measurement results. Tomographic completeness is guaranteed if and only if for any two arbitrary states $\rho$ and $\sigma$, there exists a Pauli string $P = X^{\mathbf{a}}Z^{\mathbf{b}}$ and a state $\ket{\phi} \in \mathcal{S}$ such that:
	\begin{equation}\label{eq:tomo_complete}
		\begin{split}
			\bra{\phi^*}P^{\dagger} \rho P \ket{\phi^*} \neq \bra{\phi^*}P^{\dagger} \sigma P \ket{\phi^*} \\\Rightarrow  \bra{\phi} P^{\dagger} (\delta \rho)^* P \ket{\phi}  \neq 0,
		\end{split}
	\end{equation}
	where $\delta \rho = \rho - \sigma$. Notice that $\delta \rho$ can be an arbitrary traceless Hermitian matrix in $\mathbb{H}_{2^n}$ up to a multiplicative factor. This characteristic gives the condition for the tomographical completeness of a state ensemble $\cS$:
	\begin{theorem}\label{thm:completeness}
		For a state ensemble $\cS$, the corresponding POVM operators are tomographically complete if and only if the state ensemble $\cS'$ spans the space $\mathbb{H}_{2^n}$ of traceless Hermitian matrices, where
		\begin{equation}
			\cS' = \{\ketbra{\psi} : \ket{\psi} = P \ket{\phi}, \ket{\phi} \in \cS, P = X^{\ba} Z^{\bb}, \ba, \bb \in \{0,1\}^n \}.
		\end{equation}
	\end{theorem}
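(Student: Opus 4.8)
The plan is to prove Theorem~\ref{thm:completeness} as a direct unfolding of the tomographic-completeness condition through the chain of equivalences already set up in Eq.~\eqref{eq:tomo_complete}. The key observation is that the POVM $\{M_{\phi,P}\}$ is tomographically complete precisely when no nonzero traceless Hermitian $\delta\rho$ can be ``invisible'' to all the measurement operators, i.e., when $\tr(M_{\phi,P}\,\delta\rho)=0$ for all $\phi\in\cS$ and all Pauli strings $P=X^{\ba}Z^{\bb}$ forces $\delta\rho=0$. I would first make explicit that, since every genuine density-matrix difference $\rho-\sigma$ is a traceless Hermitian matrix and conversely every traceless Hermitian matrix is (up to positive scaling and adding a multiple of identity) realizable as such a difference, it suffices to quantify over all of $\mathbb{H}_{2^n}$.

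The main computational step is to rewrite $\tr(M_{\phi,P}\,\delta\rho)$. Using $M_{\phi,P} = \frac{p_\phi}{2^n} P\ketbra{\phi^*}P^\dagger$ from the previous subsection, we get $\tr(M_{\phi,P}\,\delta\rho) = \frac{p_\phi}{2^n}\bra{\phi^*}P^\dagger\,\delta\rho\,P\ket{\phi^*}$, and since $P^\dagger$ and complex conjugation interact simply (one checks $\bra{\phi^*}P^\dagger\,\delta\rho\,P\ket{\phi^*} = \overline{\bra{\phi}P^\dagger (\delta\rho)^* P\ket{\phi}}$, as already indicated in Eq.~\eqref{eq:tomo_complete}), this vanishes for all $\phi,P$ iff $\bra{\phi}P^\dagger(\delta\rho)^*P\ket{\phi}=0$ for all $\phi\in\cS$, $P=X^{\ba}Z^{\bb}$. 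Noting $p_\phi>0$ and that as $\phi$ ranges over $\cS$ and $P$ over Pauli strings, $P\ket{\phi}$ ranges exactly over the states whose projectors constitute $\cS'$, the condition becomes: $\tr\big(\ketbra{\psi}\,(\delta\rho)^*\big)=0$ for every $\ketbra{\psi}\in\cS'$. Finally, since $\delta\rho\mapsto(\delta\rho)^*$ is a bijection of $\mathbb{H}_{2^n}$ onto itself (complex conjugation preserves Hermiticity and tracelessness), ``this forces $\delta\rho=0$'' is equivalent to ``the orthogonal complement of $\mathrm{span}(\cS')$ in $\mathbb{H}_{2^n}$ is trivial,'' i.e., $\cS'$ spans $\mathbb{H}_{2^n}$. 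Running the chain of iff's in both directions gives the theorem.

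I do not expect a serious obstacle here; this is essentially a bookkeeping argument with linear algebra and the conjugation symmetry. The one place requiring a little care is the translation between ``differences of density matrices'' and ``arbitrary traceless Hermitian matrices'': one should note that the full Hermitian space decomposes as $\mathbb{C}\mathbb{I}\oplus\mathbb{H}_{2^n}$, that POVM operators are automatically trace-orthogonal to the identity component in the relevant sense (or rather that distinguishing $\rho$ from $\sigma$ only probes the traceless part since $\tr\rho=\tr\sigma=1$), and that scaling by a positive constant to bring a traceless Hermitian matrix into the cone of valid differences does not affect whether an inner product vanishes. The other minor point is the conjugation identity relating $\bra{\phi^*}\cdots\ket{\phi^*}$ to $\overline{\bra{\phi}\cdots\ket{\phi}}$, which follows from $\braket{\phi^*}{\chi} = \overline{\braket{\phi}{\chi^*}}$ and $(ABC)^* = A^*B^*C^*$ for the conjugation (not the adjoint); I would state this as a one-line lemma or inline remark rather than belabor it. With those two remarks in place, the equivalence is immediate.
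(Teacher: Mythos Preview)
Your proposal is correct and follows essentially the same approach as the paper: the paper's argument is the brief paragraph preceding the theorem (culminating in Eq.~\eqref{eq:tomo_complete} and the remark that $\delta\rho$ ranges over all traceless Hermitian matrices up to scaling), and you have simply unpacked that sketch into a careful chain of equivalences, including the conjugation identity and the bijection $\delta\rho\mapsto(\delta\rho)^*$ on $\mathbb{H}_{2^n}$. There is nothing to correct or add.
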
 
	
	After choosing a tomographically complete state ensemble, we can recover the state $\rho$ from the measurement result \cite{Huang_2020, Tran_2023}. Concretely, the POVM maps the state $\rho$ to the distribution of measurement outcomes $P$ via a linear transformation $M$:
	\begin{equation}
		|\rho ) = \begin{pmatrix}
			\rho_{1,1} \\
			\rho_{1,2} \\
			\vdots \\
			\rho_{d,d}
		\end{pmatrix}  \xrightarrow{M} |P) = \begin{pmatrix}
			P_1 \\
			P_2 \\
			\vdots 
		\end{pmatrix}.
	\end{equation}  
	
	Due to tomographic completeness, the linear transformation $M$ has a left inverse $R$. For example, one can choose the Moore--Penrose pseudo-inverse:
	\begin{equation}
		R_{\text{MP}} = (M^{\dagger} M)^{-1} M^{\dagger}. 
	\end{equation}
	To recover all the information in $\rho$, one can perform enough measurements to estimate $|P)$ and apply the recovering map $R$:
	\begin{equation}
		|\rho ) =  R |P).
	\end{equation}  
	
	Although reconstructing the state might be expensive in sample complexity, our primary motivation is to measure properties instead of obtaining all the information about the state. In this case, one can predict some properties of $\rho$ without fully recovering the state. Suppose we already perform $M$ experiment and got measurement result $|P_1), |P_2),\cdots |P_M)$, where $|P_i)$ has a single `1' entry corresponding to the measurement result. The empirical unbiased estimator of $|P)$ is
	\begin{equation}
		|\hat{P}) = \frac{1}{M} \sum_{i=1}^M |P_i).
	\end{equation}
	To predict a given observable $O$, we write it in the vector from $|o)$ and calculate the unbiased estimator of $O$ as
	\begin{equation}
		( o \,| R |\hat{P}).
	\end{equation}
	
	The left inverse $R$ is not unique, and the Moore-Penrose pseudo-inverse might not be optimal in sample complexity \cite{Tran_2023}. Moreover, the computational complexity for calculating the inverse is exponential in qubit numbers. In practice, we may devise clever methods to reduce the sample and computational complexity for specific state ensemble $\cS$, as demonstrated next. 
	
	\subsection{Case study: state 3-design and product random states}
	
	Here, we focus on two state ensembles: global random states and product random states. First, we introduce the classical data processing scheme when $\cS$ is a state 3-design and show that our protocol has equivalent performance to the classical shadow \cite{Huang_2020} using global random unitaries. 
	
	Suppose the Bell state measurement is performed on $\ket{\phi} \otimes \rho$, where $\ket{\phi}$ is drawn from a state 3-design $\cS$, yielding measurement results $a_i, b_i \in \{0,1\}$ for each qubit $i$. Let $\ba = a_1a_2\cdots a_n$ , $\bb = b_1b_2\cdots b_n$ and $\ket{b} = Z^{\bb} X^{\ba} \ket{\phi^*}$. The protocol records the classical data as
	\begin{equation}
		\hat{\sigma} = (2^n+1) \ketbra{b} - I_{2^n}.
	\end{equation} 
	
	This process is repeated $N$ times, resulting in classical data $\{\hat{\sigma}_1, \hat{\sigma}_2, \cdots, \hat{\sigma}_N\}$. As shown in Section \ref{app:shadow_performance}, $\hat{\sigma}_i$s are unbiased estimators of $\rho$. Hence, unbiased estimators of $\mathrm{tr}(O\rho)$ can be obtained. To estimate the expectation value within the desired precision, the median-of-means technique proposed in Ref.~\cite{Huang_2020} is applied. First, new classical data $\hat{\sigma}_{(k)}$ is calculated as follows:
	\begin{equation}
		\hat{\sigma}_{(k)} = \frac{1}{\lfloor N / K \rfloor} \sum_{l=(k-1)\lfloor N / K \rfloor + 1}^{k \lfloor N / K \rfloor} \hat{\sigma}_l.
	\end{equation}
	
	For a given observable $O$, the prediction of $\tr(\rho O)$ is then made by
	\begin{equation}
		\hat{o} = \text{median}\{\mathrm{tr}(O\hat{\sigma}_{(1)}),\mathrm{tr}(O\hat{\sigma}_{(2)}), \cdots, \mathrm{tr}(O\hat{\sigma}_{(K)}) \}.
	\end{equation}
	
	The ancilla-assisted shadow tomography protocol based on state 3-design $\cS$ is summarized in Box~\ref{box:mbcs_protocol_global}.
	
	\begin{mybox}[label={box:mbcs_protocol_global}]{Ancilla-assisted shadow tomography based on state $3$-design}
		\textbf{Input:} 
		\begin{enumerate}
			\item {$N$ copies of an $n$-qubit state $\rho$. }
			\item {Classical description of $M$ observables $O_1, O_2, \cdots, O_M$.}
		\end{enumerate}
		\textbf{Protocol:}
		\begin{enumerate}
			\item For each copy of $\rho$, randomly draw a state $\ket{\phi} \in \cS$, where $\cS$ should be quantum state $3$-design. 
			\item {Perform Bell state measurement on each pair of qubits of $\ket{\phi} \otimes \rho$, yielding measurement results $a_i,b_i \in \{0,1\}$ for $1 \leq i \leq n$. Let $\mathbf{a} = a_1a_2\cdots a_n$, $\mathbf{b} = b_1b_2\cdots b_n$, and $\ket{b} = Z^{\mathbf{b}} X^{\mathbf{a}} \ket{\phi^*}$. Record the classical data 
				\begin{equation}
					\hat{\sigma} = (2^n + 1) \ketbra{b} - I.		
				\end{equation}
			}
			\item  {Obtain $N$ data points $\{\hat{\sigma}_1, \hat{\sigma}_2,\cdots,\hat{\sigma}_N\}$ by repeating Steps 1 to Step 2 on $N$ copies of $\rho$.}
			\item {Split the data into $K$ equally-sized parts, and set
				\begin{equation}
					\hat{\sigma}_{(k)} = \frac{1}{\lfloor N / K \rfloor} \sum_{l=(k-1)\lfloor N / K \rfloor + 1}^{k \lfloor N / K \rfloor} \hat{\sigma}_l.
				\end{equation}
			}
			\item {Output the estimation of $\tr(O_i \rho)$ as:
				\begin{equation}
					\hat{o}_i = \text{median}\{\tr(O_i \hat{\sigma}_{(1)}), \tr(O_i \hat{\sigma}_{(2)}), \cdots, \tr(O_i \hat{\sigma}_{(n)})\}.
			\end{equation}}
		\end{enumerate}
	\end{mybox}

	As proved in Section \ref{app:shadow_performance}, this scheme exhibits equivalent performance to the original random Clifford measurement.
	
	\begin{proposition}\label{prop:shadow_global}
		Ancilla-assisted shadow tomography protocol based on state 3-design depicted in Box~\ref{box:mbcs_protocol_global} can predict the expectation value of $M$ observables $\tr(O_1 \rho), \tr(O_2 \rho), \cdots, \tr(O_M \rho)$ to additive error $\epsilon$, provided that $N \ge \cO(\frac{\log M }{\epsilon^2}\max_i \tr(O_i^2))$. 
	\end{proposition}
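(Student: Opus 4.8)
The plan is to reduce the analysis of the estimator $\hat{\sigma}$ to the global random-Clifford classical shadow of \cite{Huang_2020}: I will show that $\hat{\sigma}$ is unbiased with exactly the Clifford-shadow inversion map, that its single-shot variance is bounded by the Clifford-shadow variance, and then invoke the standard median-of-means argument. For unbiasedness, conditioned on the drawn ancilla state $\ket{\phi}$, Eq.~\eqref{eq:ab_prob} gives the outcome probability $p^{\phi}_{\mathbf{a}\mathbf{b}} = \frac{1}{2^n}\bra{b}\rho\ket{b}$ with $\ket{b} = Z^{\mathbf{b}}X^{\mathbf{a}}\ket{\phi^*}$. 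The key observation is that, for each fixed Pauli label $(\mathbf{a},\mathbf{b})$, the ensemble of vectors $\ket{b}$ induced by a state $3$-design $\cS$ is again a state $3$-design: complex conjugation fixes $M^{(t)}_{\mathrm{Haar}}$ because it is a real permutation operator, and left multiplication by the fixed unitary $Z^{\mathbf{b}}X^{\mathbf{a}}$ commutes with $M^{(t)}_{\mathrm{Haar}}$. Using the $2$-design property, $\mathbb{E}_{\ket{b}}[\bra{b}\rho\ket{b}\,\ketbra{b}] = \frac{\mathbb{I}+\rho}{d(d+1)}$ with $d = 2^n$; summing over the $d^2$ outcomes $(\mathbf{a},\mathbf{b})$ with weight $2^{-n}$ yields $\mathbb{E}[\ketbra{b}] = \frac{\mathbb{I}+\rho}{d+1}$, so $\mathbb{E}[\hat{\sigma}] = (d+1)\frac{\mathbb{I}+\rho}{d+1} - \mathbb{I} = \rho$. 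This identifies the implicit inversion channel as $X \mapsto (d+1)X - \tr(X)\mathbb{I}$, identical to that of the global Clifford shadow.

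For the variance, write $O = O_0 + \frac{\tr(O)}{d}\mathbb{I}$ with $O_0$ traceless; since $\tr(\hat{\sigma}) = (d+1) - d = 1$ is deterministic, only $O_0$ contributes. Then $\mathbb{E}[\tr(O_0\hat{\sigma})^2] = \sum_{\mathbf{a},\mathbf{b}} 2^{-n}\,\mathbb{E}_{\ket{b}}\!\big[\bra{b}\rho\ket{b}\,\big((d+1)\bra{b}O_0\ket{b}\big)^2\big]$, a degree-$3$ polynomial in $\ketbra{b}$ — which is precisely why (and where) the $3$-design hypothesis is needed and sufficient. Replacing the $3$-design average by the Haar average and substituting $\mathbb{E}_{\mathrm{Haar}}[\ketbra{\psi}^{\otimes 3}] = \frac{1}{d(d+1)(d+2)}\sum_{\pi \in S_3} W_\pi$ reproduces verbatim the computation of \cite{Huang_2020} for the random Clifford ensemble, yielding $\mathrm{Var}[\tr(O\hat{\sigma})] \le \norm{O_0}_{\mathrm{shadow}}^2 \le 3\tr(O_0^2) \le 3\tr(O^2)$.

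Finally I would apply the median-of-means lemma of \cite{Huang_2020}: splitting the $N$ i.i.d.\ estimators into $K = O(\log M)$ batches of size $N/K$, each batch mean $\tr(O_i\hat{\sigma}_{(k)})$ is unbiased with variance at most $3\tr(O_i^2)K/N$; taking the median over the $K$ batches amplifies the confidence geometrically, so a union bound over the $M$ observables succeeds once $N/K = O(\max_i \tr(O_i^2)/\epsilon^2)$. Multiplying the two factors gives $N = O(\frac{\log M}{\epsilon^2}\max_i \tr(O_i^2))$, as claimed.

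The main obstacle is the variance computation: one must carefully verify that the Pauli-conjugated conjugate ensemble of $\ket{b}$'s remains a $3$-design so the average can be pushed to the Haar measure, and then reproduce the Clifford-shadow third-moment bookkeeping — tracking the $(d+1)$ prefactors, the trace part of $O$, and the cross terms — without arithmetic slips. The unbiasedness reduction and the median-of-means step are routine given Eq.~\eqref{eq:ab_prob} and the corresponding lemmas in \cite{Huang_2020}.
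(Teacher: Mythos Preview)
Your proposal is correct and follows essentially the same route as the paper: show $\mathbb{E}[\hat{\sigma}]=\rho$ via the second-moment identity, bound the single-shot variance by $\|O_0\|_{\mathrm{shadow}}^2\le 3\tr(O^2)$ via the third-moment identity, and finish with median-of-means. The only cosmetic difference is that you first argue the ensemble $\{Z^{\bb}X^{\ba}\ket{\phi^*}\}$ is itself a $3$-design (via invariance of $M_{\mathrm{Haar}}^{(t)}$ under conjugation and fixed unitary multiplication) and then invoke the Haar formulas, whereas the paper keeps the Pauli $P$ inside and computes $\mathbb{E}_\phi\,P\ketbra{\phi^*}P^\dagger\bra{\phi^*}P^\dagger A P\ket{\phi^*}$ directly; the two are equivalent and yield the identical bounds.
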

	
	We can also choose $\cS$ as the tensor product of local random states. This state ensemble is well-suited for predicting local observables. 
	\begin{proposition}\label{prop:shadow_local}
		Ancilla-assisted shadow tomography using random states $\ket{\phi} = \ket{\phi_1}\otimes \ket{\phi_2}\otimes \cdots \otimes \ket{\phi_n}$, where each $\ket{\phi_i}$ is uniformly drawn from $\{\ket{0},\ket{+} = \frac{1}{\sqrt{2}}(\ket{0} + \ket{1}),\ket{+i} = \frac{1}{\sqrt{2}}(\ket{0} + i\ket{1}\}$, can predict the expectation value of $M$ arbitrary $k$-local observables $\tr(O_1 \rho), \tr(O_2 \rho), \cdots, \tr(O_M \rho)$ that satisfies $\max_i \norm{O_i}_{\infty} \le 1$ to additive error $\epsilon$, provided that $N \ge \cO(\frac{\log M }{\epsilon^2}4^k)$.
	\end{proposition}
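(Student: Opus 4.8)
The plan is to show that, for the product ancilla ensemble, the protocol collapses qubit-by-qubit onto the single-qubit random-Pauli-measurement instrument underlying the original classical shadow protocol \cite{Huang_2020}, and then to transplant its variance bound and median-of-means argument. The structural point is that both $\ket{\phi} = \ket{\phi_1}\otimes\cdots\otimes\ket{\phi_n}$ and the Bell state measurement factorize across the pairs $A_iB_i$, so the POVM operators $M_{\phi,P}$ built from Eq.~\eqref{eq:ab_prob} are tensor products of single-qubit POVM operators and the induced measurement channel is $\mathcal{M}_1^{\otimes n}$. The three states $\ket{\phi_i}\in\{\ket{0},\ket{+},\ket{+i}\}$ are exactly the $+1$-eigenstates of $Z$, $X$, $Y$; hence $\ket{\phi_i^*}$ runs over $\{\ket{0},\ket{+},\ket{-i}\}$, and the single-qubit Pauli correction $X^{a_i}Z^{b_i}$ carries this set, up to a global phase, onto all six Pauli eigenstates with the Born weights dictated by Eq.~\eqref{eq:ab_prob}. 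So on qubit $i$ the net process is: pick a uniformly random measurement basis $W_i\in\{X,Y,Z\}$ and record an eigenstate $\ket{s_i}$ obtained with Born-rule probability, which is precisely the single-qubit instrument of \cite{Huang_2020}. The complex conjugation is harmless: the Bell measurement transposes the state on one side, so the recorded data tracks $\rho$, not $\rho^{T}$; this is also why Theorem~\ref{thm:completeness} applies and the ensemble is tomographically complete, since products of single-qubit Pauli eigenstate projectors span $\mathbb{H}_{2^n}$.

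Given this reduction I would carry out three steps. First, define the estimator $\hat{\sigma} = \bigotimes_{i=1}^{n}\bigl(3\ketbra{s_i} - \mathbb{I}\bigr)$, which is $(\mathcal{M}_1^{-1})^{\otimes n}$ applied to the product outcome $\bigotimes_i\ketbra{s_i}$ via the single-qubit inversion $\mathcal{M}_1^{-1}(\ketbra{s}) = 3\ketbra{s}-\mathbb{I}$; since $(\mathcal{M}_1^{-1})^{\otimes n}$ inverts the measurement channel $\mathcal{M}_1^{\otimes n}$, one gets $\bE[\hat{\sigma}] = \rho$, so $\tr(O\hat{\sigma})$ is an unbiased estimator of $\tr(O\rho)$. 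Second, bound the single-shot variance: for a $k$-local $O$ with $\norm{O}_\infty\le 1$, set $O_0 = O - 2^{-n}\tr(O)\,\mathbb{I}$ and estimate $\mathrm{Var}[\tr(O\hat{\sigma})]\le\bE[\tr(O_0\hat{\sigma})^2]$; because $\hat{\sigma}$ is a product and only the $k$ qubits in $\mathrm{supp}(O_0)$ act nontrivially, the second moment factorizes over that support into single-qubit second-moment tensors, yielding $\bE[\tr(O_0\hat{\sigma})^2]\le 4^{k}\norm{O}_\infty^{2}$, the familiar shadow-norm bound for local observables under random Pauli measurements \cite{Huang_2020}. Third, run the median-of-means estimator with $K = O(\log M)$ batches of size $\lfloor N/K\rfloor$: a Chebyshev bound per batch together with the median trick and a union bound over the $M$ observables give that $N = O\!\bigl(\tfrac{\log M}{\epsilon^{2}}\max_i 4^{k_i}\bigr)$ copies suffice for every $\tr(O_i\rho)$ to be estimated to additive error $\epsilon$ with high probability, which is the claim.

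The main obstacle is the variance bound in the second step — showing that the shadow norm of a $k$-local observable under this ancilla-assisted instrument is $O(4^{k})$. Concretely one must evaluate $\bE[\tr(O_0\hat{\sigma})^2]$ for the product estimator, reduce it to a product of single-qubit second-moment tensors over $\mathrm{supp}(O_0)$, bound each factor by a constant, and control the traceless projection so the identity component does not inflate the estimate; this is exactly the place where a three-state ensemble, giving a \emph{symmetric}, unbiased Pauli-measurement channel with an explicit inverse, is essential rather than, say, a two-state ensemble. The factorization bookkeeping and the Bell-measurement transpose subtlety in the first two steps also require care but are routine once the single-qubit reduction is in place, and the final median-of-means step is verbatim from \cite{Huang_2020}.
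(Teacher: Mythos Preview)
Your proposal is correct and follows essentially the same approach as the paper: reduce the product-ancilla protocol, qubit by qubit, to the single-qubit random Pauli measurement underlying the local classical shadow of \cite{Huang_2020}, record $\hat{\sigma}=\bigotimes_i(3\ketbra{s_i}-\mathbb{I})$, and invoke the $4^k$ shadow-norm bound together with median-of-means. The paper is terser---it simply asserts the equivalence with measuring each qubit in $\text{stab}_1=\{\ketbra{0},\ketbra{1},\ketbra{\pm},\ketbra{\pm i}\}$ and cites Proposition~S3 of \cite{Huang_2020}---whereas you spell out why the three ancilla states together with the four Pauli corrections cover all six Pauli eigenstates and why the transpose is harmless, but the argument is the same.
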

	To prove this proposition, note that the protocol is equivalent to measuring each qubit of the state $\rho$ with six states:
	\begin{equation}
		\text{stab}_1 = \{\ketbra{0}, \ketbra{1}, \ketbra{\pm}, \ketbra{\pm i}\}.
	\end{equation}
	This is exactly the classical shadow protocol based on local random unitaries so this result can be derived by Proposition S3 in Ref.~\cite{Huang_2020} and follows the same data postprocessing scheme. Suppose the measurement result on the $i$-th qubit is $\ket{\psi_i} \in \text{stab}_1$, then, the classical data is recorded as:
	\begin{equation}
		\hat{\sigma} = \bigotimes_{i=1}^n\hat{\psi}_i,\quad \hat{\psi}_i = 3\ketbra{\psi_i} - I.
	\end{equation}
	The rest of classical postprocessing is the same as in Box \ref{box:mbcs_protocol_global}. This method is computationally efficient when the observables are local. Theorem \ref{thm:shadow_performance} can be proved by combining Proposition \ref{prop:shadow_global} and Proposition \ref{prop:shadow_local}. 
	
	Moreover, the method in Ref.~\cite{Huang_2020} can be adopted to predict nonlinear functions $f(\rho)$. Given independent and unbiased estimators $\{\hat{\sigma}_1, \hat{\sigma}_2,\cdots,\hat{\sigma}_N\}$, an unbiased estimators $\rho \otimes \rho$ of can be constructed as follows:
	\begin{equation}
		\hat{\mu}_2 = \frac{1}{N(N-1)} \sum_{i \neq j} \hat{\sigma}_i \otimes \hat{\sigma}_j.
	\end{equation}
	A nonlinear function $\tr(O \rho \otimes \rho)$ can be predicted by calculating $\tr(O \hat{\mu_2})$. This process can be repeated multiple times, and taking the median of these repetitions can reduce the prediction error. This allows for estimating nonlinear properties like the second R{\'e}nyi entropy. Although the sample complexity remains exponential, there is a considerable reduction compared to brute-force methods such as full-state tomography. Additionally, this process can be generalized to higher moments of $\rho$ by constructing unbiased estimators of $\rho^{\otimes k}$ via the U statistics \cite{Huang_2020, Tran_2023, Hoeffding48Ustatistics}:
	\begin{equation}
		\hat{\mu}_k = \binom{N}{k}^{-1} \sum_{1 \le i_1,\cdots,i_k \le N} \hat{\sigma}_{i_1} \otimes \hat{\sigma}_{i_2} \otimes \cdots \otimes \hat{\sigma}_{i_k}.
	\end{equation}

	\subsection{Performance analysis of state 3-design}\label{app:shadow_performance} 
	Here, we analyze the ancilla-assisted shadow tomography protocol based on state $3$-design. Our analysis mainly follows the approach outlined in Ref.~\cite{Huang_2020}. Suppose we perform Bell state measurement and get measurement result $\ba, \bb$. Let $\ket{\phi,\ba\bb} =Z^{\bb} X^{\ba} \ket{\phi^*}$. According to Eq.~\eqref{eq:ab_prob}, the expectation value of $\ketbra{\phi,\ba\bb}$ is given by 
	\begin{equation}\label{eq:data_mean}
		\begin{split}
			\mathbb{E}_{\phi,\ba\bb} \ketbra{\phi,\ba\bb} &= \mathbb{E}_{\phi} \sum_{\ba\bb}p^{\phi}_{\ba\bb} \ketbra{\phi,\ba\bb}\\
			&= \frac{1}{2^n}\mathbb{E}_{\phi} \sum_{\ba\bb}  \bra{\phi^*} X^{\ba}Z^{\bb}  \rho Z^{\bb} X^{\ba} \ket{\phi^*} Z^{\bb}X^{\ba}\ketbra{\phi^*} X^{\ba} Z^{\bb}  \\
			&= \frac{1}{2^n} \sum_{\ba\bb} \mathbb{E}_{\phi} \bra{\phi^*} X^{\ba}Z^{\bb}  \rho Z^{\bb} X^{\ba} \ket{\phi^*} Z^{\bb}X^{\ba}\ketbra{\phi^*} X^{\ba} Z^{\bb}  \\
			&= \frac{1}{2^n} \sum_{\ba\bb} \frac{1}{2^n} \mathcal{D}_{1/(2^n + 1)}(\rho) \\
			&= \mathcal{D}_{1/(2^n + 1)}(\rho).
		\end{split}
	\end{equation}
	where $\mathcal{D}_p(\rho) = p \rho + (1-p) \frac{I}{2^n}$. In the second equation, we use the equality 
	\begin{equation}
		p^{\phi}_{\ba\bb} = \tr(\ketbra{X^{\ba}Z^{\bb},I} [\ketbra{\phi} \otimes \rho]) = \frac{1}{2^n}\bra{\phi^*} X^{\ba}Z^{\bb}  \rho Z^{\bb} X^{\ba} \ket{\phi^*}.
	\end{equation}
	In the third equation, we swap the order of summation. The fourth equation leverages the $3$-design property of $\cS$. For the Hermitian matrix in $\mathbb{H}_{2^n}$ and a Pauli string $P =  Z^{\bb} X^{\ba}$, we have:
	\begin{align}
		\mathbb{E}_{\phi}
		P\ketbra{\phi^*}P^{\dagger} \bra{\phi^*} P^{\dagger} A P\ket{\phi^*}
		=& \frac{A+\mathrm{tr}(A)I}{(2^n+1)2^n} = \frac{1}{2^n}\mathcal{D}_{1/(2^n+1)}(A)
		\; \; \textrm{for $A \in \mathbb{H}_{2^n}$,}   \\
		\mathbb{E}_{\phi}
		P\ketbra{\phi^*}P^{\dagger}
		\bra{\phi^*} P^{\dagger} B_0 P \ket{\phi^*} \bra{\phi^*} P^{\dagger} C_0 P\ket{\phi^*}  =&
		\frac{\mathrm{tr}(B_0C_0)I +B_0C_0 +C_0B_0}{(2^n+2)(2^n+1)2^n}
		\;\;\text{for $B_0,C_0\in \mathbb{H}_{2^n}$ traceless.} 
	\end{align}
	
	Now, we analyze the estimator $\hat{o}=\tr(O \hat{\sigma})$. The expectation of the data $\hat{\sigma}$ satisfies 
	\begin{equation}
		\begin{split}
			\mathbb{E} \hat{\sigma} &= \mathbb{E}_{\phi,\ba\bb}[ (2^n + 1)\ketbra{\phi,\ba\bb} - I ]\\
			&=  (2^n + 1)\mathbb{E}_{\phi,\ba\bb}[ \ketbra{\phi,\ba\bb} ] - I \\
			&= (2^n + 1) \mathcal{D}_{1/(2^n + 1)}(\rho) - I \\
			&= \rho.
		\end{split}
	\end{equation}
	Hence, 
	\begin{equation}
		\mathbb{E} \hat{o} = \mathbb{E} \tr(O \hat{\sigma}) = \tr(O \mathbb{E} \hat{\sigma}) = \tr(O \rho).
	\end{equation}
	That is, $\hat{o}$ is an unbiased estimator of $\tr(O\rho)$. Next, we analyze the variance of $\hat{o}$. Define the linear map $\mathcal{M} = \mathcal{D}_{1 / (2^n + 1)}$, according to Lemma S1 in Ref.~\cite{Huang_2020}, we have
	\begin{equation}\label{eq:var_o}
		\text{Var}[\hat{o}] = \mathbb{E}[(\hat{o} - \mathbb{E}[\hat{o}] )^2] \le ||O_0||_{shadow},
	\end{equation}
	where $O_0 = O - \frac{\tr(O)}{2^n} I$, and the shadow norm is defined as
	\begin{equation}\label{eq:def_shadown_norm}
		||O||_{shadow} = \max_{\sigma: \text{ state}} (\mathbb{E}_{\phi} \sum_{\ba\bb} p_{\ba\bb}^{\phi} \bra{\phi,\ba\bb} \mathcal{M}^{-1}(O)\ket{\phi,\ba\bb}^2)^{1/2}.
	\end{equation}
	
	Here, we establish the bound of shadow norm for the ancilla-assisted shadow tomography.
	\begin{proposition}[Shadow norm for ancilla-assisted shadow tomography]
		For any observable $O$, its traceless part $O_0 = O - \frac{\tr(O)}{2^n} I$ satisfies
		\begin{equation}\label{eq:bound_shadow_norm}
			||O_0||^2_{shadow} \le 3 \tr(O^2).
		\end{equation}
	\end{proposition}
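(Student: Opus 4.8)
The plan is to evaluate the shadow norm in \eqref{eq:def_shadown_norm} almost exactly using the third-moment formulas for the state $3$-design recorded above, and to invoke a crude operator bound only at the very end. Write $d = 2^n$. Since $\mathcal{M} = \mathcal{D}_{1/(d+1)}$ acts linearly as $\mathcal{M}(X) = \tfrac{1}{d+1}\bigl(X + \tr(X)\,\bbI\bigr)$ and $\tr(O_0) = 0$, we have $\mathcal{M}^{-1}(O_0) = (d+1)\,O_0$. Using $\ket{\phi,\ba\bb} = Z^{\bb}X^{\ba}\ket{\phi^*}$ and the identity $p^{\phi}_{\ba\bb} = \tfrac{1}{d}\bra{\phi,\ba\bb}\sigma\ket{\phi,\ba\bb}$ from \eqref{eq:ab_prob} (with the test state $\sigma$ in place of $\rho$), the squared shadow norm becomes
\begin{equation*}
\norm{O_0}^2_{shadow} = \frac{(d+1)^2}{d}\,\max_{\sigma}\;\sum_{\ba,\bb}\mathbb{E}_{\phi}\,\bra{\phi,\ba\bb}\sigma\ket{\phi,\ba\bb}\,\bra{\phi,\ba\bb}O_0\ket{\phi,\ba\bb}^2 .
\end{equation*}

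The key point is that for each fixed Pauli $P = Z^{\bb}X^{\ba}$ the vector $\ket{w} := P\ket{\phi^*}$ is, as $\ket{\phi}$ ranges over $\cS$, again distributed as a state $3$-design: complex conjugation leaves all Haar moments invariant, and left multiplication by a fixed unitary preserves the design property. Hence $\mathbb{E}_{\phi}\,\bra{\phi,\ba\bb}\sigma\ket{\phi,\ba\bb}\,\bra{\phi,\ba\bb}O_0\ket{\phi,\ba\bb}^2 = \tr\!\bigl[(\sigma\otimes O_0\otimes O_0)\,\mathbb{E}_{w}\ketbra{w}^{\otimes 3}\bigr]$; substituting $\mathbb{E}_w\ketbra{w}^{\otimes 3} = \binom{d+2}{3}^{-1}\Pi_{\mathrm{sym}}^{(3)}$ and expanding $\Pi_{\mathrm{sym}}^{(3)}$ over $S_3$, only the transposition exchanging the two $O_0$ slots and the two $3$-cycles survive (all other contractions carry a factor $\tr(O_0) = 0$), which gives the $P$-independent value $\bigl(\tr(O_0^2) + 2\tr(\sigma O_0^2)\bigr)/\bigl(d(d+1)(d+2)\bigr)$. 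This is precisely the second of the two third-moment identities recorded above, specialized to $B_0 = C_0 = O_0$ and contracted against $\sigma$.

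Summing the $d^2$ identical terms over $(\ba,\bb)$ and multiplying by $(d+1)^2/d$ collapses the prefactors to
\begin{equation*}
\norm{O_0}^2_{shadow} = \frac{d+1}{d+2}\,\max_{\sigma}\Bigl(\tr(O_0^2) + 2\tr(\sigma O_0^2)\Bigr).
\end{equation*}
Finally, since $O_0^2$ is positive semidefinite, $\max_{\sigma}\tr(\sigma O_0^2) = \norm{O_0}_{\infty}^2 \le \tr(O_0^2)$ (the top eigenvalue of $O_0^2$ is bounded by the sum of its eigenvalues), so $\norm{O_0}^2_{shadow} \le \tfrac{3(d+1)}{d+2}\tr(O_0^2) \le 3\tr(O_0^2)$, and $\tr(O_0^2) = \tr(O^2) - (\tr O)^2/d \le \tr(O^2)$ yields the claimed inequality.

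I do not expect a genuine obstacle here; the argument is essentially a bookkeeping exercise on $3$-design moments. The two points needing care are (i) justifying that the Pauli corrections $P$ do not disturb the design average, so that the sum over $(\ba,\bb)$ contributes only the clean factor $d^2$ that cancels $1/d$ and one power of $d+1$; and (ii) tracking which of the six permutation contractions vanish once $\tr(O_0) = 0$ is imposed, since that is what keeps the final constant at $3$ rather than something growing with $d$.
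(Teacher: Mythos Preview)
Your proposal is correct and follows essentially the same route as the paper: both compute $\mathcal{M}^{-1}(O_0)=(d+1)O_0$, reduce the shadow norm to a third-moment average, evaluate it via the $3$-design property to obtain $\tfrac{d+1}{d+2}\max_\sigma\bigl(\tr(O_0^2)+2\tr(\sigma O_0^2)\bigr)$, and finish with $\tr(\sigma O_0^2)\le\tr(O_0^2)\le\tr(O^2)$. The only cosmetic difference is that the paper invokes the stated third-moment identity directly, whereas you derive it from the symmetric-projector expansion and make explicit why conjugation and the Pauli twist leave the design moments unchanged---a point the paper's formula already encodes.
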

	\begin{proof}
		Following Eq.~(S42) in Ref.~\cite{Huang_2020}, we have
		\begin{equation}
			\mathcal{M}^{-1}(O_0) = (2^n + 1) O_0
		\end{equation}
		for any traceless $O_0 \in \mathbb{H}_{2^n}$. Then, from Eq.~\eqref{eq:data_mean} and Eq.~\eqref{eq:def_shadown_norm}, we have
		
		\begin{align}
			\| O_0 \|_{\mathrm{shadow}}^2 =& \max_{\sigma: \text{ state}}  \big(\frac{1}{2^n}\mathbb{E}_{\phi}\sum_{\ba \bb}  \bra{\phi^*} X^{\ba}Z^{\bb}  \sigma Z^{\bb} X^{\ba} \ket{\phi^*}  [\bra{\phi^*} X^{\ba}Z^{\bb} (2^n+1)O_0 Z^{\bb} X^{\ba} \ket{\phi^*} ]^2 \big) \nonumber \\
			=& \max_{\sigma: \text{ state}} \mathrm{tr} \big( \sigma\; \frac{1}{2^n}\sum_{\ba\bb} \mathbb{E}_{\phi} Z^{\bb} X^{\ba} \ketbra{\phi^*}X^{\ba}Z^{\bb} [\bra{\phi^*}X^{\ba}Z^{\bb} (2^n+1)O_0 Z^{\bb}X^{\ba}\ket{\phi^*}] ^2 \big) \nonumber \\
			=& \max_{\sigma: \text{ state}} \mathrm{tr} \left( \sigma \;2^n \frac{(2^n+1)^2 \left(\mathrm{tr}(O_0^2)I + 2 O_0^2\right)}{(2^n+2)(2^n+1)2^n}\right)
			= \frac{2^n+1}{2^n+2} \max_{\sigma \text{ state}} \left( \mathrm{tr}(\sigma) \mathrm{tr}(O_0^2) + 2 \mathrm{tr} \left( \sigma O_0^2 \right) \right).
		\end{align}
		Note that $\tr(\sigma O_0^2) \le ||O_0^2||_{\infty} \le \tr(O_0^2)$, $\tr(\sigma) = 1$ and $\tr(O_0^2) = \tr(O^2) - \frac{\tr(O^2)}{2^n} \le \tr(O^2)$. Hence, we obtain Eq.~\eqref{eq:bound_shadow_norm}.
	\end{proof}
	
	After obtaining the shadow norm of the operator, we bound the variance of $\hat{o}$ according to Eq.~\eqref{eq:var_o}:
	\begin{equation}
		\mathrm{Var}[\hat{o}] \le 3 \mathrm{tr}(O^2).
	\end{equation}
	
	Directly applying concentration inequalities to $\hat{o}$ is not feasible because $\hat{o}$ and its higher moments are not yet bounded. To address this, we employ the median-of-means method, following Ref.~\cite{Huang_2020}. Firstly, we average $B$ estimators to obtain $\hat{o}_{(l)} = \hat{o}_{(l-1)B+1},\hat{o}_{(l-1)B + 2},\cdots,\hat{o}_{lB}$, where $\hat{o}_i$ are independent and identically distributed estimators. The estimator $\hat{o}_{(l)}$ remains unbiased, and its variance is suppressed by $B$ through standard calculations, yielding:
	\begin{equation}
		\mathrm{Var}[\hat{o}_{(l)}] \le \frac{3 \mathrm{tr}(O^2)}{B}.
	\end{equation}
	Setting $B = \frac{30 \mathrm{tr}(O^2)}{\epsilon^2}$, we ensure $\mathrm{Var}[\hat{o}_{(l)}] \le \frac{\epsilon^2}{10}$. By Markov's inequality, we have:
	\begin{equation}
		\mathrm{Pr}[|\hat{o}_{(l)} - \mathrm{tr}(O\rho)| > \epsilon] \le \frac{\mathrm{Var} [\hat{o}_{(l)}]}{\epsilon^2} \le \frac{1}{10}.
	\end{equation}
	
	Now, we apply Hoeffding's inequality to the indicator function $\mathbbm{1}_{|\hat{o}_{(i)} - \tr(O \rho)| > \epsilon}$. After calculating the median $m$ of ${\hat{o}_{(1)}, \hat{o}_{(2)}, \ldots, \hat{o}_{(K)}}$, the probability that $|m-\tr(O\rho)| > \epsilon$ is equal to the probability that 
	\begin{equation}
		\frac{1}{K}\sum_{i=1}^K \mathbbm{1}_{|\hat{o}_{(i)} - \tr(O \rho)| > \epsilon} \ge \frac{1}{2}.
	\end{equation}
	According to Hoeffding's inequality, this probability will be $\exp(-\cO(K))$.
	
	Given $M$ observables, by setting $B = \frac{30 \max_i \mathrm{tr}(O_i^2)}{\epsilon^2}$ and $K = \cO(\log \frac{M}{\delta})$, the probability that there exists an estimation $m_i$ of $\tr(O_i \rho)$ such that $|m_i - \mathrm{tr}(O_i \rho)| > \epsilon$ is at most
	\begin{equation}
		M \exp(-\cO(K)) \le \delta
	\end{equation}
	by a union bound. Hence, to estimate any $\mathrm{tr}(O_i \rho)$ up to an additive error of $\epsilon$ with probability $\delta$, choosing $N = BK = O\left(\frac{\log \frac{M}{\delta} }{\epsilon^2}\max_i \mathrm{tr}(O_i^2)\right)$ suffices.

\end{document}